\newtheorem{theorem}{Theorem}[section]
\newtheorem{lemma}[theorem]{Lemma}
\newtheorem{proposition}[theorem]{Proposition}
\newtheorem{definition}[theorem]{Definition}
\begin{document}

\title[Strange attractors for Overbeck -Boussinesq model]{Strange attractors for Overbeck -Boussinesq model }

\author{Sergei Vakulenko}

\address{Institute for Mechanical Engineering Problems,
 Bolshoy pr. V. O. 61, 199178, Saint Petersburg,   and
Saint Petersburg National Research University of Information Technologies, Mechanics and Optics, 197101, St. Petersburg, Russia}


\begin{abstract}
In this paper, we consider
  dynamics defined by the Navier-Stokes equations in the Oberbeck-Boussinesq approximation in a two dimensional domain.  This model of fluid dynamics  involve fundamental physical effects: convection, and diffusion.
The main result  is as follows:
 local semiflows, induced by this problem, can  generate   all possible
 structurally stable  dynamics defined by $C^1$ smooth vector fields on  compact smooth
 manifolds  (up to an orbital topological equivalency).  To generate a prescribed dynamics, it is sufficient to 
  adjust some parameters in the  equations, namely, the viscosity coefficient,  an external heat source,  some parameters in boundary conditions
and the small perturbation of the gravitational force.

\end{abstract}

\maketitle

\section{Introduction}
\label{intro}


 The hypothesis that  turbulence can be generated by strange (chaotic) attractors was pionereed in  \cite{RT, NRT}.   In this paper,  an analytical proof of this hypothesis
for the Oberbeck-Boussinesq  (OB) approximation of Navier-Stokes equations is stated.
The OB  equations
 describes  a   model
of fluid dynamics, which involves fundamental  effects: convection and  heat transfer. Boussinesq flows are common in nature (such as atmospheric fronts, oceanic circulation), and industry.   It is known the OB approximation is very accurate for many  flows
important in applications \cite{Feir}.

  The main result of this paper can be outlined as follows. We consider the initial boundary values problem (IBVP) defined by
the Navier-Stokes equations  in the OB approximation and standard boundary conditions on a  rectangle $\Omega \subset {\mathbb R}^2$. 
 Global semiflows, induced by that IBVP, can  generate   all possible
 hyperbolic  dynamics defined by $C^1$-smooth vector fields on  finite dimensional compact smooth
 manifolds  (up to an orbital topological equivalencies). 
The well known examples of  hyperbolic dynamics
with a ''chaotic" behaviour are  the Anosov flows, the Smale A-axiom systems and  the Smale horseshoes  \cite{Ru, Katok, Holmes}.
To generate a prescribed hyperbolic dynamics, it is sufficient to
 adjust some parameters involved in the IBVP formulation, in particular, 
the viscosity coefficient  $\nu$, a spatially inhomogeneous heat source
and a small spatially inhomogeneous perturbation of the gravitational force.

The main technical tool used in the proof  is the method of realization of
vector fields (RVF) proposed by P. Pol\'a\v cik \cite{Pol1, Pol2, DaP}.
  Let us outline   the RVF  method and some
 previously obtained results.

Let us consider an  IBVP associated with a system of PDE's and 
 involving a parameter $\mathcal P$.  Assume that for each value of $\mathcal P$ that IBVP generates
a global semiflow $S^t$.    We obtain then a family $\mathcal F$ of global semiflows $S^t_{\mathcal P}$,
where each semiflow depends on the parameter   ${\mathcal P}$.
Suppose that
for an integer $n> 0$ there is an appropriate value ${\mathcal P}_n$ of the parameter ${\mathcal P}$
such that the corresponding global semiflow $S^t_{{\mathcal P}_n}$  has an $n$-
dimensional finite $C^{1}$ -smooth locally invariant manifold  ${\mathcal M}_n$.
The semiflow $S^t_{{\mathcal P}_n}$, restricted to ${\mathcal M}_n$   is
 defined  by a vector field $Q$ on ${\mathcal M}^n$. Then we say that the family $S^t_{\mathcal P}$
realizes the vector field $Q$.

By these realizations
it is shown that semiflows
associated with some special quasilinear parabolic equations in two dimensional
domains can generate complicated hyperbolic  dynamics \cite{Pol2,DaP}.
For  a large class of reaction-diffusion systems the RVF method allows us to prove 
existence of chaotic attractors \cite{Vak4}.
One can show that, for each integer $n$, semiflows induced by these systems  can realize
a dense set in the space of all $C^1$-smooth vector fields on the unit ball ${\mathcal B}^n \subset {\mathbb R}^n$ \cite{Vak4}. Therefore, such systems
generate all structurally stable
 ( persistent under
sufficiently small $C^1$- perturbations) dynamics, up to orbital topological equivalency  \cite{Ru, Katok}.
The  families of the semiflows  enjoying such property of the dense realization can be called {\em maximally  complex}.
If a family of semiflows is maximally complex, that family generates all
 hyperbolic dynamics  on finite dimensional compact smooth manifolds.
By this terminology, the main result of this paper is as follows.
The family of semiflows, associated with  the IBVP's generated by the OB equations, is maximally  complex (see Theorem \ref{maint}). 

Using the RVF method for the OB equations we  encounter  the following main  difficulty: how to reduce the OB dynamics to  a system
 of differential equations with quadratic nonlinearities
\begin{equation} \label{0.1}
 \frac{dX}{dt}={ K}(X) + { M}X + { f},\quad X \in {\mathbb R}^N, 
\end{equation}
where  $  X(t) $ is a unknown function, $X=(X_1, ..., X_N) \in {\mathbb R}^N$,  $K(X)$ is a quadratic term, $f \in {\mathbb R}^N$, the linear term $MX$ is defined by
 a $N \times N$ matrix $M$.
This reduction  is based on a construction of   locally attracting invariant manifolds  ${\mathcal M}_N$ for the OB equations.
To this end, we  choose  the IBVP parameters  in  a special way, since 
  it is impossible to prove  existence of locally attracting invariant or inertial manifolds
for the general OB equations.  We adjust  the parameter  $\mathcal P$ such that 
a linear operator $L$, that determines the linearization of the OB equations, has special spectral properties. Namely, this operator has $N$ zero eigenvalues and
all the rest spectrum of $L$ lies in the negative half plane and it is separated by a positive barrier from the imaginary axis.
 To find  the operator $L$ having such properties  is not  easy, and the construction of $L$  is a main
 technical part  of the paper.  The  further application of the RVF method to \eqref{0.1} follows works \cite{Sud, Stud, Vak4} with small modifications.

Let us outline this reduction on ${\mathcal M}_N$ in more detail. We are seeking for a solution $({\bf v}, u)$,  where $\bf v$ is the fluid velocity, $u$ is the temperature (or the impurity density), $x,y$ are horizontal and
 vertical coordinates, respectively, and  the gravitational force is directed  along $y$. Let us assume that the solutions are small perturbations of the  flow ${\bf v}={\bf 0}, u=U(y)$,
which have the form ${\bf v}=\gamma \tilde {\bf v}, u(x,y,t)=U(y) +\gamma u_1(x,y) +\gamma w(x,y,t)$, where $\tilde {\bf v}, w$ are new unknown functions, the terms $U$ and $u_1$ are adjusted in a special way and $\gamma>0$ is a small parameter.  
 The operator $L$ is defined via  $U$ and $L$ does not depend on $\gamma$.
The function $u_1$ defines the matrix ${ M}$ in equations (\ref{0.1}).  

Since $U$ depends only on $y$, we can  separate variables in the spectral problem for the operator  $L$ (this method is well known, 
  see  \cite{Drazin, GJ}).  Eigenfunctions of $L$ have the form $e_k=(\Psi_k(y)\sin(kx) ,  \Theta(y)\cos(kx))^{tr} $  with eigenvalues $\lambda_k$, where $k=1,2, ... \ $.  
  In the classical approach $U(y)$ is a linear function of $y$ \cite{Drazin, GJ}. Then  one obtains that
there are possible bifurcations, where $\lambda_k$ changes its sign at  a value ${\mathcal P}_0$ for a  $k=k_0$.  For
small $Re\ \lambda(k_0) >0$ and $\gamma>0$ the solution is $X e_{k_0}$ 
 (up to small corrections) and the magnitude $X$ can be obtained from a simple nonlinear equation for $X$.

In this paper, the main trick is as follows. 
 We take $U$ as a  fast decreasing exponent perturbed by a small polynomial, $U=C_U b^{1-s_1} \exp(-by) + \mu y P_N(y)$,
where  $\mu=b^{-s_2}$, $s_1, s_2  \in (0,1)$, $b$ is a large parameter, independent of $\gamma$  and $C_U$ is a  parameter
of the order $1$ (it does  not depend on $b$ for large $b$). The function $P_N$ is a polynomial of the degree $N$.   
For each $k$  the spectral problem for $L$ can be reduced to a nonlinear equation for  $\lambda_k$.
   In the limit $b \to \infty$, $\nu \to +\infty$, and if $P_N(y) \equiv 0$,
the equation for $\lambda_k$ has a simple limit form, which {\em  does not involve $k$}.   Furthermore, we can use the small term $\mu P_N$ to control
location of the roots $\lambda_k$ of this equation. Namely,  let us choose some $k_1, k_2, ..., k_N << b$. Under a special choice of $ P_N$,$C_U$ and some other parameters  we have $\lambda_k=0$,  $k=k_1, k_2, ..., k_N$,  whereas all others $\lambda_k$    satisfy $ Re \  \lambda_k  < - \delta(b)$, where $\delta>0$. 
When we vary $C_U$,
all $\lambda_k$  with $k=k_j$ pass through $0$  simultaneously.  In this case  there is   a bifurcation, which involves a number of the unstable modes 
$e_k$  (this effect was found  in \cite{Sud} for the Marangoni
problem).

The special spectral properties of the operator $L$ allow us to proceed
the reduction of the OB equations to (\ref{0.1})
by a quite routine procedure, which uses the well known results of invariant manifold theory \cite{He, Bates, CLu}.
This procedure shows that $K$ and $M$ depends  on the problem parameters differently, namely,  the coefficients involved in $K$
depend on  the eigenfunctions of $L$  whereas $M$ is a linear functional $M=M[u_1]$ of   $u_1(x,y)$. We show that, as
$u_1$ runs over the set of all smooth functions defined on $\Omega$,  the range of this
functional is  dense in the linear space of all $N \times N$ -matrices.
This fact  allows us to apply   the results  on quadratic  systems (\ref{0.1}) from Sect. \ref{sec:5}
and completes  the proof.  

Note that systems 
(\ref{0.1}) have important applications, in particular, in chemistry, where
 they describe bimolecular chemical reactions \cite{Zab}, and for population dynamics. 
Results  on existence of complex dynamics for (\ref{0.1}) were first obtained in \cite{Korz2} (see
 also \cite{Zab}). 
 In \cite{Stud} the RVF method  is applied to investigate dynamics of  systems (\ref{0.1}).  It is shown
that systems (\ref{0.1}) generate a maximally complex family of semiflows, where parameters
 are $ N,  M$, coefficients of the bilinear form $K$ and $f$.  
 In this paper, we are dealing with a more complicated situation, when the coefficients of the quadratic forms $K(X, X)$ are fixed. 
This difficulty is not too hard and it can be overcome  by the methods  of 
 the invariant manifold theory \cite{He,CLu, Bates} 
that allows us to reduce systems (\ref{0.1}) of large dimension to  analogous systems of smaller dimension, where the coefficients, which define  $K(X)$
can be considered as free parameters  (for more detail, see Sect.   \ref{sec:5} and \cite{Sud}).

In physical words, one can say that the interaction of the corresponding slow modes associated with the eigenfunctions $e_{k_j}$ can generate a complicated dynamics and 
spatio-temporal patterns. The relations, obtained in the paper, 
 give an analytical description of spatio-temporal patterns induced by strange attractors. 
The patterns are similar to found in \cite{Sud} for Marangoni flows, they are quasiperiodic in $x$, and have the boundary layer form, i.e., located
at the top boundary $y=0$.

The paper is organized as follows.
In the next sections   we formulate  the problem and describe the RVF method.   In Sect. \ref{sec:3a} we state the main result.
 In Sect. \ref{sec:7}   it is shown that the IBVP   is well posed
and defines a global semiflow. In the next section we introduce the operator $L$. In Sect. \ref{Spectrum}, which is a key technical part of the paper  we investigate
that operator,   and show that $L$ has  
needed spectral properties. 
In Sect. \ref{sec: 9}   we   prove existence of the finite dimensional 
 invariant manifold. In Sect. \ref{sec: 10} we check conditions, which is critically important for
the RVF method. Here we show that, for each fixed $N$,  by a  choice $u_1(x,y)$, we can obtain any prescribed  matrices ${M}$. In Sect. \ref{sec:5}   we consider quadratic systems (\ref{0.1}).    The remaining part of the 
proof is stated  in Sect. \ref{sec: 11}.

Below we use the following standard convention: all positive constants,  independent
of the  parameters $b, \gamma$ and $\nu$, are denoted by $c_i, C_j$. To diminish a formidable number of indices $i,j$,
we shall use sometimes the same indices  assuming that the constants can vary from a line to a line.

\section{ Statement of the problem}
\label{sec:2}
We consider the Oberbeck- Boussinesq of 
the Navier Stokes equations: 
\begin{equation}
  {\bf v}_t +  ({\bf v} \cdot \nabla) {\bf v} =\nu \Delta {\bf v} - \nabla p  + \kappa {\bf e}(1+\gamma g_1)  (u-u_0),
\label{OB1}
\end{equation}
\begin{equation}
     \nabla \cdot {\bf v} =0,
\label{div}
\end{equation}
\begin{equation}
   u_t +  ({\bf v} \cdot \nabla) u = \Delta u + \eta,
\label{OB2}
\end{equation}
  where  ${\bf v}=(v_1(x, y, t), v_2(x, y, t))^{tr}$,
$u=u(x,y,t), p=p(x,y, t)$ are   unknown functions
defined on $\Omega \times \{ t \ge 0 \}$,
 the domain $\Omega$ is a rectangle, 
$\Omega=[0, \pi] \times [0,h] \subset {\bf R^2}$.
Here $\bf v$ is the fluid velocity, where $v_1$ and $v_2$ are the normal and tangent velocity components,
                           $\nu$ is
the viscosity coefficient, $p$ is the pressure,
$u$ is the temperature,
$\eta(x,y)$ is a  function describing a distributed
heat source,
${\bf v} \cdot \nabla$ denotes the advection operator $v_1 \frac{\partial}{\partial x} +
v_2 \frac{\partial}{\partial y}$.  The unit vector ${\bf e}$ is directed along the vertical $y$-axis: ${\bf e}=(0,1)^{tr}$, $\kappa$ is the coefficient of thermal expansion and a
constant  $u_0$ is the reference temperature.  The term $\gamma g_1(x,y)$ is  a space inhomogeneous perturbation of the gravitational force, where  $\gamma >0$ is a small parameter, 
and $g_1(x, y)$ is a smooth function.  We assume that the non-perturbed density $\rho_0=1$.

The initial conditions are
\begin{equation}
  {\bf v}(x, y, 0)={\bf v}^0(x, y),
\quad { p}(x, y, 0)={ p}^0(x, y),
\quad u(x,y,0)=u^0(x,y).
\label{inidata}
\end{equation}

The function $u$ satisfies the boundary conditions 
\begin{equation}
  u_x(x, y, t)\vert_{x=0, \pi} =0,
\label{boundNeum}
\end{equation}
 and 
\begin{equation}
    u_y(x, y, t)\vert_{y=0} =\beta u(x, 0,t),   \quad u_y(x, y, t)\vert_{y=h} =\beta_1 u(x, h,t).
\label{boundDir}
\end{equation}

For the fluid velocity  we set conditions of the free surface at  the vertical boundaries $x=0,\pi$  :
\begin{equation}
  v_1(x, y, t)\vert_{x=0, \pi} =0,  \quad  \frac{\partial v_2(x, y, t)}{\partial x}\vert_{x=0, \pi}=0
\label{bound5x}
\end{equation}
and the no-flip condition at $y=0, y=h$:
\begin{equation}
{\bf v}(x,y,t)\vert_{y=0, h}  = {\bf 0}.
\label{Maran1}
\end{equation}

\section{RVF method}
\label{sec:3}

Before to formulate  the main theorem, let us describe
 the method of the realization of vector fields (RVF)
invented by
 P. Pol\'a\v cik (see \cite{Pol1,Pol2}).
We change slightly the original
version to adapt it for our goals.

Let us consider a family of local semiflows $S^t_{\mathcal P}$ in a fixed Banach
space $B$. Assume
 these semiflows  depend on a parameter ${\mathcal P} \in B_1$, where
$B_1$  is another Banach space.
Denote by ${\mathcal B}^n(R)$ the  ball $\{q: |q|\le R\}$ in ${\mathbb R}^n$, where $q=(q_1, q_2, ..., q_n)$ and
$ |q|^2=q_1^2 +
... + q^2_n$.  For $R=1$ we will omit the radius $R$,  ${\mathcal B}^n={\mathcal B}^n(1)$. Remind that 
a  set $M$ is said to be locally invariant in an open set $W \subset B$  under a semiflow $S^t$ in $B$ if  $M$ is a subset of $W$ and each trajectories of $S^t$ leaving $M$ simultaneously leaves
$W$.  In this paper, all $W$ are tubular neighborhoods of the balls ${\mathcal B}^n(R)$, which have small widths.
Consider a system of differential equations defined on the ball ${\mathcal B}^n$:
\begin{equation}
 \frac{dq}{dt}=Q(q),
\label{ordeq}
\end{equation}
 where
\begin{equation}
   Q \in C^1({\mathcal B}^n), \quad \sup_{q \in {\mathcal B}^n}|\nabla Q(q)| < 1.
\label{cond1}
\end{equation}
 Assume the vector field $Q$ is directed strictly
inward at the boundary $\partial {\mathcal B}^n=\{q: |q|=1 \}$:
\begin{equation}
   Q(q)\cdot q < 0 , \quad  q \in \partial {\mathcal B}^n.
\label{inward}
\end{equation}
Then system (\ref{ordeq}) defines a
global semiflow
 on ${\mathcal B}^n$. Let $\epsilon$ be a positive number.

\begin{definition} ({\bf realization of vector fields})  \label{RealVF} 
{We say that the family of local  semiflows $S^t_{\mathcal P}$  realizes the vector field $Q$ (dynamics (\ref{ordeq})) with accuracy $\epsilon$
(briefly, $\epsilon$  - realizes),
if there exists a  parameter ${\mathcal P}={\mathcal P}(Q, \epsilon, n)  \in B_1$
such that

({\bf i}) semiflow $S^t_{\mathcal P}$ has a locally invariant   in a open domain ${\mathcal W} \subset B$ and locally attracting
 manifold ${\mathcal M}_n \subset B$ diffeomorphic to the unit ball $ {\mathcal B}^n$;

({\bf ii})  this manifold is embedded into $B$  by a  map
\begin{equation}
  z = Z(q), \quad q \in {\mathcal B}^n, \quad z \in B, \quad Z \in C^{1+r}
({\mathcal B}^n),
\label{manifRVF}
\end{equation}
where $r > 0$;

({\bf iii}) the restriction of the semiflow  $S^t_{\mathcal P}$ to ${\mathcal M}_n$  is defined by the system of differential equations
\begin{equation}
  \frac{dq}{dt}=Q(q) + \tilde Q(q, {\mathcal P}), \quad Q \in C^{1}({\mathcal B}^n),
\label{reddynam}
\end{equation}
  where
\begin{equation}
	    |\tilde Q(\cdot, {\mathcal P})|_{C^1({\mathcal B}^n)} <
\epsilon.
\label{estRVF}
\end{equation}}
\end{definition}


\begin{definition}  \label{RVFmax} Let $\Phi$ be a family of vector fields $Q$, where each $Q$ is defined on a ball ${\mathcal B}^n$, positive integers $n$ may be different.  
We say that the family ${\mathcal F}$ of local semiflows $S^t_{\mathcal P}$ realizes the family $\Phi$ if for each $\epsilon>0$ and each $Q  \in \Phi$ the filed $Q$ can be 
$\epsilon$
-realized by the family ${\mathcal F}$.

We say that the family ${\mathcal F}$ is maximally dynamically complex if that family realizes all $C^1$- smooth finite dimensional fields defined on all
unit balls ${\mathcal B}^n$.

\end{definition}

\section{ Main results }
\label{sec:3a}

The  IBVP defined by  (\ref{OB1}) -(\ref{Maran1}) involves the coefficients
$ \nu,  h, \gamma, \beta, \beta_1$ and the functions $\eta(x,y), g_1(x,y)$.
We set
${\mathcal P}=\{h, \nu, \gamma, \beta, \beta_1, u_0, \eta(\cdot,\cdot), g_1(\cdot,\cdot) \}$. The main result  is as follows:

\begin{theorem} \label{maint}
{ The family  of the semiflows defined by IBVP  (\ref{OB1}) -(\ref{Maran1})   is maximally dynamically complex, that is,
for each integer $n$, each $\epsilon > 0$  and each vector field
$Q$ satisfying (\ref{cond1}) and (\ref{inward}), there exists a value of the parameter
${\mathcal P}(Q,\epsilon)$  such that
IBVP  (\ref{OB1}) -(\ref{Maran1}) defines a  semiflow $S^t_{\mathcal P}$, which $\epsilon$-realizes
the vector field $Q$}.
\end{theorem}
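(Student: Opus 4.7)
The plan is to follow the Realization of Vector Fields strategy described in Section \ref{sec:3}. Given a target vector field $Q \in C^1(\mathcal{B}^n)$ satisfying (\ref{cond1})--(\ref{inward}) and a tolerance $\epsilon > 0$, I will choose parameters $\mathcal{P}$ so that the semiflow induced by (\ref{OB1})--(\ref{Maran1}) admits an $n$-dimensional, locally invariant, locally attracting manifold on which the reduced dynamics lies within $C^1$-distance $\epsilon$ of $Q$. Following the introduction I set $N = n$, write ${\bf v} = \gamma \tilde{\bf v}$ and $u = U(y) + \gamma u_1(x,y) + \gamma w(x,y,t)$ with $U(y) = C_U b^{1-s_1} e^{-by} + \mu y P_N(y)$, and treat $u_1(x,y)$, $g_1(x,y)$, $\nu$, $C_U$, $b$, $\mu$ and the coefficients of $P_N$ as the adjustable parameters contained in $\mathcal{P}$.

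Substitution into (\ref{OB1})--(\ref{OB2}) produces a semilinear evolution equation for $(\tilde{\bf v}, w)$ whose linearization at the origin is a non-selfadjoint operator $L = L[U,\nu,\kappa,\beta,\beta_1]$. The first main step (Section \ref{Spectrum}) is the spectral analysis of $L$. Separation of variables in $x$ reduces the eigenvalue problem to a family of ODE boundary value problems on $[0,h]$ indexed by $k \in \{1,2,\dots\}$, with eigenfunctions $e_k = (\Psi_k(y)\sin(kx),\Theta_k(y)\cos(kx))^{tr}$ and eigenvalues $\lambda_k$. Taking $b,\nu \to \infty$ in a coordinated way yields a limit dispersion relation for $\lambda_k$ that, when $P_N \equiv 0$, is independent of $k$; the polynomial perturbation $\mu y P_N(y)$ then provides $N$ adjustable moments with which to force $\lambda_{k_1} = \dots = \lambda_{k_N} = 0$ at a prescribed set of wavenumbers while keeping $\mathrm{Re}\,\lambda_k \le -\delta(b) < 0$ for all other $k$. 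This is the key technical step and, I expect, the main obstacle: one must verify that the full dispersion relation inherits the $N$-fold degeneracy of its limit under small but nonvanishing perturbations, and that the remaining spectrum stays uniformly in a left half-plane separated from the imaginary axis.

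Once this spectral gap is established, standard center/inertial-manifold theory for semilinear parabolic problems (Henry, Bates--Chow--Lu, carried out in Section \ref{sec: 9}) yields a $C^{1+r}$, locally attracting $N$-dimensional manifold $\mathcal{M}_N$ for the full nonlinear semiflow, embedded via a map $Z:\mathcal{B}^N \to B$ as required by Definition \ref{RealVF}. Projecting onto $\mathcal{M}_N$, the reduced dynamics takes the quadratic form
\[
\dot X = K(X) + M X + f,
\]
where $K$ arises from the advection terms $({\bf v}\cdot\nabla){\bf v}$ and $({\bf v}\cdot\nabla)u$ and depends only on the critical eigenfunctions $e_{k_j}$, while $M = M[u_1]$ depends linearly on $u_1$ and $f = f[g_1]$ depends linearly on $g_1$; in particular the coefficients of $K$ are fixed once $k_1,\dots,k_N$ have been chosen, whereas $M$ and $f$ remain freely tunable via the functional parameters.

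The remaining step is to realize the prescribed $Q$ through this quadratic system. In Section \ref{sec: 10} I show that the linear map $u_1 \mapsto M[u_1]$ has dense range in the space of $N\times N$ matrices; this completeness property, proved by exhibiting suitable test functions $u_1$ built from products of the separated factors $\Psi_{k_j},\Theta_{k_j}$, makes $M$ effectively a free parameter (and the analogous statement for $f$ is easier). Having reduced the problem to the quadratic family (\ref{0.1}) with free $M$, $f$ but fixed bilinear part $K$, I invoke the results of Section \ref{sec:5} (following \cite{Sud, Stud, Vak4}): an additional internal invariant-manifold reduction lowers the dimension further until the bilinear coefficients on the inner manifold are themselves effectively free, and the resulting family is shown to be maximally complex. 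The $\epsilon$-realization of $Q$ thereby obtained on $\mathcal{M}_N$ lifts to an $\epsilon$-realization of $Q$ by the original PDE semiflow, completing the proof of Theorem \ref{maint}.
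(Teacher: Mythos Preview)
Your overall strategy matches the paper's two-stage reduction: first reduce the OB semiflow to a quadratic system (\ref{0.1}) on a locally invariant manifold $\mathcal{M}_N$ (Sections~\ref{Spectrum}--\ref{sec: 10}), then use the results of Section~\ref{sec:5} on quadratic systems to $\epsilon$-realize the given $Q$. There is, however, an internal inconsistency in your plan: you write ``set $N = n$'' at the outset, yet later (correctly) say that ``an additional internal invariant-manifold reduction lowers the dimension further.'' Taken literally, $N=n$ would make the argument fail. The whole point of Section~\ref{sec:5} is that on $\mathcal{M}_N$ the bilinear part $K$ is \emph{fixed}, so one cannot realize an arbitrary $C^1$ field on $\mathcal{B}^n$ by a quadratic system of the \emph{same} dimension with prescribed $K$; instead the paper takes $N=p(p+1)/2$ for an integer $p$ (with $p$ tied to $n$ through Proposition~\ref{QuadrP} and \cite{Stud}), and the critical wavenumbers $k_1,\dots,k_N$ are chosen via Lemma~\ref{8.3} precisely so that the resulting $K$ satisfies the $p$-Decomposition Condition (Subsection~\ref{sec: 10.2}). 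Thus $N$ must be strictly larger than $n$, and the set $\{k_1,\dots,k_N\}$ is not arbitrary but is part of the parameter choice.

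One minor correction: in the paper the free vector $f$ is controlled by the heat-source perturbation $\eta_1$ (Lemma~\ref{fcontrol}), not by $g_1$; the function $g_1$ determines $u_1$ through (\ref{Uu1}) and thereby controls the matrix $M$ (Lemma~\ref{8.1}). With the $N=n$ slip repaired and this attribution fixed, your outline coincides with the paper's proof.
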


Persistence of hyperbolic sets \cite{Katok} and some  additional arguments then implies the following corollary.

\begin{theorem} \label{maint2}{ The family of semiflows  $S^t_{\mathcal P}$ induced  by  IBVP  (\ref{OB1}) -(\ref{Maran1})
  generates
all  (up to  orbital topological equivalencies)   hyperbolic dynamics on compact invariant hyperbolic sets  defined by
$C^1$-smooth vector fields  on  finite dimensional smooth compact manifolds}.
\end{theorem}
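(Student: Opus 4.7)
The plan is to deduce Theorem~\ref{maint2} from Theorem~\ref{maint} via the classical persistence theory of normally hyperbolic invariant manifolds and of hyperbolic sets. Let $\tilde M$ be a finite-dimensional compact smooth manifold, $\tilde Q$ a $C^1$ vector field on $\tilde M$, and $\Lambda \subset \tilde M$ a compact invariant hyperbolic set. I first embed $\tilde M$ smoothly into some $\mathbb{R}^n$ by Whitney's theorem and extend $\tilde Q$ to a $C^1$ vector field $Q$ on a tubular neighbourhood by coupling it with a strong linear contraction in the normal directions, so that the image of $\tilde M$ becomes a normally attracting locally invariant manifold for $Q$. After a radial cut-off and a global rescaling $Q \mapsto \alpha Q$ with $\alpha > 0$ sufficiently small, I arrange that $Q \in C^1(\mathcal{B}^n)$ satisfies the admissibility conditions (\ref{cond1}) and (\ref{inward}); the rescaling is a mere time reparametrisation and therefore preserves both the hyperbolicity of $\Lambda$ and orbital topological equivalence.

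Next, I apply Theorem~\ref{maint} to this $Q$ with a small accuracy $\epsilon > 0$, obtaining parameters $\mathcal{P}(Q,\epsilon)$ for which the OB semiflow $S^t_{\mathcal{P}}$ has a locally invariant and locally attracting $C^{1+r}$ manifold $\mathcal{M}_n$, embedded in the phase space via $Z$, on which the reduced dynamics is $dq/dt = Q(q) + \tilde Q_\epsilon(q)$ with $|\tilde Q_\epsilon|_{C^1(\mathcal{B}^n)} < \epsilon$.

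Persistence of normally hyperbolic invariant manifolds (Fenichel / Hirsch--Pugh--Shub) then guarantees, for $\epsilon$ small, that the embedded copy of $\tilde M$ deforms into a nearby $C^1$ submanifold $\tilde M_\epsilon \subset \mathcal{B}^n$ invariant under $Q + \tilde Q_\epsilon$, and the induced dynamics on $\tilde M_\epsilon$ is conjugate, via a $C^1$-small diffeomorphism close to the identity, to a $C^1$-small perturbation of $\tilde Q$ on $\tilde M$. The standard persistence of compact hyperbolic sets under $C^1$-small perturbations \cite{Katok} then yields a hyperbolic invariant set $\Lambda_\epsilon \subset \tilde M_\epsilon$ whose restricted flow is orbitally topologically conjugate to the original flow on $\Lambda$, and the push-forward $Z(\Lambda_\epsilon) \subset B$ is the required hyperbolic invariant set for the OB semiflow $S^t_{\mathcal{P}}$.

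The step I expect to be the most delicate is the initial extension: one must produce a single field $Q$ on $\mathcal{B}^n$ that simultaneously (a) preserves the prescribed hyperbolic set $\Lambda$, (b) makes the embedded copy of $\tilde M$ normally attracting with a contraction rate large enough to absorb the $\epsilon$-perturbation supplied by Theorem~\ref{maint}, and (c) satisfies $\sup|\nabla Q| < 1$ with strict inward orientation on $\partial \mathcal{B}^n$. Balancing the normal contraction rate against the global scaling factor $\alpha$ and the cut-off width is the delicate bookkeeping; once this is arranged, everything else is an application of textbook persistence results.
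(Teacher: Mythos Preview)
Your argument is correct and follows the same overall strategy as the paper: embed the given manifold and its vector field into a Euclidean ball, apply Theorem~\ref{maint}, and invoke structural stability of hyperbolic sets. Two points of comparison are worth noting.

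First, your intermediate appeal to Fenichel/Hirsch--Pugh--Shub persistence of $\tilde M$ as a normally hyperbolic invariant manifold is not needed, and the paper does without it. Once you have added the strong normal contraction, the set $\Lambda$, viewed inside $\mathcal{B}^n$, is already a compact hyperbolic invariant set for the flow of $Q$ on $\mathbb{R}^n$ (the normal directions simply enlarge the stable bundle). Hence the theorem on persistence of hyperbolic sets \cite{Katok} applies directly to the perturbation $Q+\tilde Q_\epsilon$ and yields $\Lambda_\epsilon$ in one stroke; there is no need to first recover a perturbed copy $\tilde M_\epsilon$ of the ambient manifold. This shortcut is what the paper takes, and it makes the ``delicate bookkeeping'' you flag largely disappear: you only need the normal contraction to be strong enough that $\Lambda$ is hyperbolic in the ambient ball, not strong enough to guarantee NHIM persistence against an $\epsilon$ whose size is not yet fixed.

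Second, the paper adds a point you pass over: because the manifold $\mathcal{M}_n$ supplied by Theorem~\ref{maint} is only \emph{locally} invariant in a tubular domain $\mathcal{W}$, one must check that the orbits through $Z(\Lambda_\epsilon)$ never leave $\mathcal{W}$. The paper closes this by observing that the perturbed field remains inward on $\partial\mathcal{B}^n$, so $q$-trajectories stay in the ball, and then by choosing the radii $R_0,\bar R$ and width parameters $C^{(1)},C^{(2)},\dots$ of the nested tubular domains (for $\mathcal{M}_N^{(1)}$, $\mathcal{M}_p^{(2)}$, etc.) large enough to contain the resulting trajectories. This is routine but should be mentioned, since without it $Z(\Lambda_\epsilon)$ is not a priori invariant under the full OB semiflow.
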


In particular, we obtain that the OB dynamics can generate  Smale axiom A flows,
Ruelle-Takens attractors \cite{RT, NRT} and the Anosov flows.

\section{Existence and uniqueness}
\label{sec:7}

\subsection{ Function spaces and embeddings}
\label{sec: 7.1}

We  use the standard Hilbert  spaces \cite{He}.
 We denote by $H$ the closure in $L_2(\Omega, {\mathbb R}^2)$ of the set  of $C^1$-smooth  vector valued functions ${\bf v}=(v_1, v_2)$ such that 
$\nabla \cdot {\bf v}=0$ and satisfying boundary conditions \eqref{bound5x}, \eqref{Maran1}.

The space $H$ is equipped  by norms $|| \ ||$, where $||{\bf v}||^2=\langle v_1, v_1\rangle + \langle v_2, v_2\rangle$ 
 and $\langle, \rangle$ is the inner product in $L_2(\Omega)$  defined by 
\begin{equation}
 \langle f, g \rangle =\int_0^h \int_0^{\pi} f(x,y) g(x,y) dx dy.
\label{inprod}
\end{equation}

Let us  denote by  $H_{\alpha}$ the fractional spaces
\begin{equation}
H_{\alpha}= \{ {\bf v} \in H:  ||{\bf v}||_{\alpha} =||(I-\Delta_D)^{\alpha} \bf v|| < \infty \},
\label{Hal}
\end{equation}
here $\Delta_D$ is the Laplace operator  with the  domain corresponding to   boundary conditions
\eqref{bound5x}, \eqref{Maran1}.
 Let  $\tilde H_{\alpha}$  be another fractional space associated with $L_2(\Omega)$:
\begin{equation}
\tilde H_{\alpha}= \{ u  \in L_2(\Omega):  ||u||_{\alpha} =||(I-\Delta_N)^{\alpha} u|| < \infty \},
\label{Bp}
\end{equation}
where $\Delta_N$ is the Laplace operator with the  domain corresponding to the boundary conditions \eqref{boundNeum}, \eqref{boundDir}.
Below we  omit  the indices $N, D$.

The  Sobolev embeddings
\begin{equation}
\tilde H_{\alpha} \subset C^{s} (\Omega), \quad 0 \le s < 2(\alpha- 1/2),
\label{emb1}
\end{equation}
and
\begin{equation}
\tilde H_{\alpha} \subset L_q (\Omega), \quad 1/q > 1/2- \alpha, \ q \ge 2
\label{emb2}
\end{equation}
will be used below, and
analogous embeddings for $H_{\alpha}$. 
We consider IBVP  (\ref{OB1})-(\ref{Maran1})  in the phase space
${\mathcal H}=H \times \tilde H$.

In coming subsection our aim is to prove that the IBVP  (\ref{OB1})-(\ref{Maran1}) defines a global semiflow.


\subsection{Evolution equations}

To show local existence of solutions we  use the standard semigroup methods.  Let ${\mathbb P}$ be the Leray projection (see \cite{Temam}) and ${\bf v} \in 
\mathbb{P}  H$.  
Then we can rewrite our IBVP as an evolution equation for the pair $z=({\bf v}, u)^{tr}$: 
\begin{equation}  \label{equ}
z_t =  A z  +   F(z),
\end{equation}
where 
$$
A=( \nu {\mathbb P} \Delta_D,    \Delta_N)^{tr},  \quad F=(F_1, F_2)^{tr},
$$
$$
F_1=  {\mathbb P} \Big(-({\bf v}\cdot \nabla) {\bf v}  +    \kappa {\bf e}(1+  \gamma g_1)(u -u_0 ) \Big),
$$
$$
F_2= - ({ \bf v}\cdot \nabla) u  +  \eta.  
$$
By \eqref{emb2} we observe that 
\begin{equation} \label{embd}
|| ({\bf v} \cdot \nabla) {\bf v}|| \le |{\bf v}|_{\infty} ||{\bf v}||_{\alpha}
\end{equation}
where $\alpha \in (1/2, 1)$ and
$|f|_{\infty}$  denotes the supremum norm:
$$
|f|_{\infty}=\sup_{x, y \in \Omega}| f(x,y)|, \quad |{\bf v}|_{\infty}=|v_1|_{\infty} + |v_2|_{\infty}.
$$
Estimate \eqref{embd} and an analogous estimate for  $|| ({\bf v} \cdot \nabla) u||$ show  that for $\alpha >1/2$ the map 
 $F$ is a  bounded $C^1$- map from a bounded domain in ${\mathcal H}_{\alpha}=H_{\alpha} \times \tilde H_{\alpha}$ to
$\mathcal H$ \cite{He}.  This fact implies a local existence and uniqueness of solutions of \eqref{equ}. So, eq. (\ref{equ}) 
 defines a local semiflow in $\mathcal H$ .

\begin{proposition} \label{Prop1}
Let  $\beta >0$ and $\beta > \beta_1$. Then the  IBVP   defined by  (\ref{OB1}) -(\ref{Maran1}) generates a global semiflow in $\mathcal H$.
\end{proposition}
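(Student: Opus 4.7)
The plan is to upgrade the local $\mathcal{H}_\alpha$-semiflow already constructed by the semigroup argument to a global one by deriving a priori $L^2$ bounds on $z(t)=({\bf v}(t),u(t))$ on every finite time interval; once available, these bounds bootstrap via the mild formulation to bounds in $\mathcal{H}_\alpha$, which is exactly what is needed to rule out finite-time blow up and extend existence to $[0,\infty)$.

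The first step is a temperature energy estimate. I would test equation (\ref{OB2}) against $u$ in $L^2(\Omega)$. The convective term vanishes after integration by parts because $\nabla\cdot{\bf v}=0$ and the normal component of ${\bf v}$ vanishes on $\partial\Omega$. The diffusion term yields
\[
\langle \Delta u,u\rangle = -\|\nabla u\|^2 -\beta\int_0^\pi u(x,0,t)^2\,dx + \beta_1 \int_0^\pi u(x,h,t)^2\,dx,
\]
since (\ref{boundNeum}) kills the contribution on the vertical sides. The crucial tool is the elementary identity
\[
\int_0^\pi u(x,h,t)^2\,dx = \int_0^\pi u(x,0,t)^2\,dx + 2\int_\Omega u\,u_y\,dx\,dy,
\]
which after Cauchy--Schwarz gives $\beta_1\int_0^\pi u(\cdot,h)^2 \le \beta_1\int_0^\pi u(\cdot,0)^2 + \epsilon\|\nabla u\|^2 + C_\epsilon\|u\|^2$ for any small $\epsilon>0$. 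Under the hypothesis $\beta>0$ and $\beta>\beta_1$, the boundary term at $y=0$ then appears with coefficient $(\beta_1-\beta)<0$ and may be dropped, leaving $\tfrac{1}{2}\tfrac{d}{dt}\|u\|^2+c_1\|\nabla u\|^2\le C_2\|u\|^2+C_3\|\eta\|^2$. Gronwall then yields $\|u(t)\|\le C(T,u^0,\eta)$ on every $[0,T]$.

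The second step is the velocity energy estimate. Testing (\ref{OB1}) against ${\bf v}$ in $H$, the Leray projection and the boundary conditions (\ref{bound5x}),(\ref{Maran1}) eliminate the pressure, and the nonlinear transport $\langle({\bf v}\cdot\nabla){\bf v},{\bf v}\rangle$ vanishes by incompressibility exactly as in the classical 2D Navier--Stokes analysis. The buoyancy is bounded by $C(1+\gamma|g_1|_\infty)(\|u\|+|u_0|)\|{\bf v}\|$, so step one controls it, and a further Gronwall argument gives $\|{\bf v}(t)\|\le C(T,z^0,{\mathcal P})$. To promote these $L^2$ bounds into an $\mathcal{H}_\alpha$ bound I would invoke the mild formulation $z(t)=e^{tA}z^0+\int_0^t e^{(t-s)A}F(z(s))\,ds$ with the analytic-semigroup smoothing $\|e^{tA}\|_{\mathcal{H}\to\mathcal{H}_\alpha}\le Ct^{-\alpha}$; the estimate (\ref{embd}) bounds $\|F(z)\|_{\mathcal{H}}$ by $C\|z\|\cdot\|z\|_\alpha$ plus lower-order terms, and a singular Gronwall inequality then yields a uniform bound on $\|z(t)\|_\alpha$ on every finite interval, precluding blow up.

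The hard step is the boundary analysis in step one: reconciling the two Robin-type conditions at $y=0$ and $y=h$, whose contributions to the energy have potentially opposite signs. The hypothesis $\beta>\beta_1$ and $\beta>0$ is precisely what the trace identity requires so that the potentially destabilizing flux at $y=h$ is dominated by the dissipative flux at $y=0$; without it, the temperature could in principle gain energy at the upper boundary faster than interior diffusion can drain it, and no a priori $L^2$ bound would follow. Everything beyond that reduces to standard two-dimensional Navier--Stokes and analytic-semigroup theory as in \cite{Temam,He}.
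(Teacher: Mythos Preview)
Your proposal is correct and follows essentially the same route as the paper: both test the velocity and temperature equations in $L^2$, use that the transport terms vanish by incompressibility and the boundary conditions, and handle the Robin boundary contributions via the trace identity $u^2(\cdot,h)-u^2(\cdot,0)=2\int_0^h uu_y\,dy$ followed by Cauchy--Schwarz and Young, leading to an exponential-in-time $L^2$ bound. The paper stops at the $L^2$ a~priori estimate and simply declares global existence, whereas you are more explicit about the bootstrap to $\mathcal{H}_\alpha$ via the mild formulation and a singular Gronwall argument; this additional step is standard but does complete the argument that the local $\mathcal{H}_\alpha$-semiflow extends globally.
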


\begin{proof}
Global existence and boundedness of solutions can be derived by the differential inequalities
\begin{equation} \label{Diff1}
\frac{1}{2} \frac{d||{\bf v}||^2}{dt}  \le - \nu ||\nabla {\bf v}||^2 + \kappa ( 1 +  \gamma |g_1|_{\infty}) ||v_2|| ||u-u_0||,  
\end{equation}
and
\begin{equation} \label{Diff2}
\frac{1}{2} \frac{d||u||^2}{dt}  \le -  ||\nabla u||^2 +   ||\eta|| ||u|| + c_1||v_2|| ||u||  + I_{\beta}(u),  
\end{equation}
where
\begin{equation} \label{Ibeta}
I_{\beta}(u)=\beta_1 \int_0^{\pi}  u^2(x,h) dx -\beta \int_0^{\pi}  u^2(x,0) dx.   
\end{equation}
For $\beta >0$  one has 
\begin{equation} \label{Ibeta1}
I_{\beta} \le  \bar \beta_1 \int_0^{\pi}  (u^2(x,h) - u^2(x,0) )dx =\frac{\bar \beta_1}{2} \int_0^h  \int_0^{\pi} u_y  u  dxdy,
\end{equation}
where $\bar \beta_1=\max\{\beta_1, 0\}$.
Thus for each $a >0$ 
\begin{equation} \label{Ibeta2}
I_{\beta} \le \frac{\bar \beta_1}{2} ||u_y|| ||u|| \le \frac{\bar \beta_1}{4} (a||\nabla u||^2 +  a^{-1}||u||^2).
\end{equation}
Choosing an appropriate $a$ we see that inequalities  \eqref{Diff1}, \eqref{Diff2} and estimate \eqref{Ibeta2}
 lead to by a priori estimate
$|| u(\cdot, \cdot, t) ||  + || {\bf v}(\cdot, \cdot, t) ||< c_2 \exp(c_3 t) $ for all $t  \ge 0$. 
Thus, we can conclude that eq. (\ref{equ}) defines a  global  semiflow in $\mathcal H$. \end{proof}

\section{ Linearization of the problem }
\label{sec: 7.4}

First we follow the standard approach developed for the Rayleigh- B\'enard  convection
\cite{Drazin,GJ} but with small modifications. 

Let 
$U$(y) be a $C^{\infty}$-smooth function of $y \in [0,h]$ such that
\begin{equation}  \label{Ubc}
\frac{dU}{dy}\vert_{y=0}=\beta U(0),  
\end{equation}
 \begin{equation}  \label{Ubc1}
 \frac{dU}{dy}\vert_{y=h}=\beta_1 U(h).
\end{equation}
 For sufficiently small $\gamma$ and $u_0$ such that $u_0 > |U|_{\infty}$ we set
\begin{equation}  \label{Uu1}
u_1=-g_1(U- u_0) (1+ \gamma g_1)^{-1}.
\end{equation} 
We suppose that 
\begin{equation}  \label{gbc} 
\frac{\partial g_1(x, y)}{\partial x}\vert_{x=0, \pi}=0 \quad \forall \  y \in [0,h], 
\end{equation}
and
\begin{equation}  \label{gbc1}
g_1(x, y)\vert_{y=0, h}=0  \quad  \quad \frac{\partial g_1(x, y)}{\partial y}\vert_{y=0, h}=0 \ \forall \ x \in [0, \pi].
\end{equation}
Then 
\begin{equation}  \label{ubc2} 
\frac{\partial u_1(x, y)}{\partial x}\vert_{x=0, \pi}=0 \quad \forall \  y \in [0,h], 
\end{equation}
and
\begin{equation}  \label{ubc3}
\frac{\partial u_1(x, y)}{\partial y}\vert_{y=0}=\beta u_1(x,0) \quad \forall \  x \in [0,\pi],
\end{equation}
\begin{equation}  \label{ubc4}
\frac{\partial u_1(x, y)}{\partial y}\vert_{y=h}=\beta_1 u_1(x,h) \quad \forall \  x \in [0,\pi].
\end{equation}
Assume that
$$
\eta=\eta_0 + \gamma^2 \eta_1, \quad \eta_0=-\Delta (U + \gamma u_1),
$$
where $\eta_1$ is a smooth function, which will be considered as a parameter.
Let us represent $u$ and ${\bf v}$ as
\begin{equation}
u=U + \gamma u_1 +\gamma w, \quad {\bf v}=\gamma \tilde {\bf v},
\label{transform1}
\end{equation}
where $w, \tilde {\bf v}$ a new unknown functions. Taking into account that the Leary projection of a gradient field is zero, and using substitution \eqref{transform1}
we note that eq. (\ref{equ})  can be rewritten as 
\begin{equation}
 \tilde {\bf v}_t =   {\mathbb P} \Big (\nu \Delta \tilde {\bf v}    - \gamma
 ( \tilde {\bf v} \cdot \nabla) \tilde {\bf v}  +   \kappa {\bf e} w (1+ \gamma g_1) \Big),
\label{eveq10}
\end{equation}
\begin{equation}
w_t= \Delta w   -  \tilde v_2 U_y - \gamma (\tilde {\bf v} \cdot\nabla) (u_1 + w) + \gamma \eta_1.
\label{eveq11}
\end{equation}
Due to \eqref{Ubc}, \eqref{Ubc1}, \eqref{ubc2},  \eqref{ubc3},  and \eqref{ubc4} the new unknowns $\tilde {\bf v}$ and $w$ satisfies the same homogeneous boundary conditions
that $\bf v$ and $u$. 

Removing the terms  of  the order $\gamma$  in (\ref{eveq10}), (\ref{eveq11}),
we obtain the linear operator
\begin{equation}
 L\tilde z=(\bar L_1 \tilde  z, \bar L_2 \tilde  z)^{tr},   \quad \tilde z=(\tilde {\bf v}, w)^{tr}
\label{Lop}
\end{equation}
 where the  operators $\bar L_k$ are defined by
\begin{equation}
\bar L_1 {\bf v}= {\mathbb P} (\nu \Delta \tilde {\bf v}   +\kappa {\bf e} w), \quad \bar L_2 w=\Delta w-  \tilde  v_2 U_y.
\label {A1}
\end{equation}
The spectral problem for the operator $L$ has the form 
\begin{equation}
\lambda  {\bf v} = {\mathbb P} (\nu \Delta {\bf v}  +\kappa{\bf e} w) ,
\label {Sp1}
\end{equation}
\begin{equation}
\lambda w=\Delta w  -  v_2 U_y, 
\label {Sp2}
\end{equation}
where ${\bf v}$ satisfies boundary conditions   \eqref{bound5x} and \eqref{Maran1}, and $w$ satisfies the boundary conditions 
\begin{equation}
  w_x(x, y)\vert_{x=0, \pi} =0, \quad \forall \  y \in [0, h],
\label{boundNeumSp}
\end{equation}
\begin{equation}
    w_y(x, y)\vert_{y=0} =\beta w(x, 0),   \quad w_y(x, y)\vert_{y=h} =\beta_1 w(x, h),  \quad  \forall \ x \in [0, \pi].
\label{boundDirSp}
\end{equation}

   This spectral problem  is investigated in coming sections but first we consider some  properties of the operator $L$.

\subsection{ Properties of $L$ }
\label{sec: 7.5}

In order to apply the standard technique \cite{He}, first let us  show  that the operator $L$
is sectorial.

\begin{lemma} \label{5.4} { $L$ is a sectorial operator}.
\end{lemma}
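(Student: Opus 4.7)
The plan is to exhibit $L$ as a bounded (or at worst, relatively compact) perturbation of a diagonal sectorial operator, and then invoke the standard perturbation theorem for sectorial operators (Henry \cite{He}, Theorem 1.3.2).

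Concretely, I would split $L = L_0 + L_1$ where
\begin{equation*}
L_0 \tilde z = \bigl( \nu \mathbb{P}\Delta_D \tilde {\bf v},\ \Delta_N w \bigr)^{tr}, \qquad
L_1 \tilde z = \bigl( \kappa \mathbb{P}({\bf e} w),\ -\tilde v_2 U_y \bigr)^{tr}.
\end{equation*}
The first block of $L_0$ is (up to the factor $\nu>0$) the Stokes operator on $H$ associated with the boundary conditions \eqref{bound5x}, \eqref{Maran1}; it is a negative self-adjoint operator with compact resolvent, hence sectorial. The second block is the Laplacian $\Delta_N$ on $\tilde H$ with the mixed Robin/Neumann conditions \eqref{boundNeum}, \eqref{boundDir}; it is also self-adjoint and sectorial. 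A direct sum of two sectorial operators (with appropriate common sector) is sectorial on the product space $\mathcal H = H \times \tilde H$, which handles $L_0$.

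Next I would check that $L_1$ is a bounded linear operator on $\mathcal H$. The first component is a scalar multiple of the Leray projection applied to the $L^2$ function $w{\bf e}$, and $\mathbb P : L_2(\Omega,\mathbb R^2) \to H$ is a bounded projection, so $\|\mathbb P(\kappa {\bf e} w)\| \le \kappa \|w\|$. For the second component, since $U \in C^\infty([0,h])$ the multiplier $U_y$ is bounded on $\Omega$, so $\|\tilde v_2 U_y\| \le |U_y|_\infty \|\tilde {\bf v}\|$. Thus $L_1$ is bounded from $\mathcal H$ into $\mathcal H$, hence \emph{a fortiori} $L_0$-bounded with relative bound $0$. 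By the standard perturbation theorem, $L = L_0 + L_1$ is sectorial on $\mathcal H$.

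The main (and essentially only) technical point is the Stokes block: one must make sure the boundary conditions \eqref{bound5x}, \eqref{Maran1} — free slip on the vertical sides and no-slip on the horizontal sides — yield a self-adjoint realization of $\mathbb P \Delta$ on $H$ with compact resolvent. This is standard for mixed free-slip/no-slip conditions on a rectangle (the domain of the square root is the natural Sobolev space of divergence-free fields vanishing where required), but it is the only step where the geometry of $\Omega$ enters non-trivially; the rest of the argument is purely a perturbation computation.
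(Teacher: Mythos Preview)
Your proposal is correct and follows essentially the same route as the paper: the paper also splits $L$ into the diagonal self-adjoint piece $L^{(0)}\tilde z=(\nu\mathbb P\Delta\tilde{\bf v},\,\Delta w)^{tr}$ and the off-diagonal perturbation $B\tilde z=(\kappa w,\,-\tilde v_2 U_y)^{tr}$, and then invokes the Kato--Henry perturbation criterion $\|B\rho\|\le\sigma\|L^{(0)}\rho\|+K(\sigma)\|\rho\|$. Your observation that $L_1$ is in fact \emph{bounded} on $\mathcal H$ (so the relative bound is $0$) is slightly more explicit than the paper's ``it is clear that estimate (\ref{sect}) is satisfied,'' but the argument is the same.
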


\begin{proof}
 We use the following  result \cite{Kato, He}: if $L^{(0)}$ is a self adjoint operator in a Banach space $X$,
$L^{(0)}: X \to X$ and
$B$ is a linear operator, $B: X \to X$ such that $Dom \ L^{(0)} \subset Dom \ B$ and for all $\rho \in Dom \ L^{(0)}$
\begin{equation}
||B \rho || \le \sigma || L^{(0)} \rho || + K(\sigma) ||\rho||
\label{sect}
\end{equation}
for  $0 < \sigma < 1$ and a constant $K(\sigma)>0$, then $L^{(0)} + B$ also is a sectorial operator.

  Let us define the unperturbed operator $L^{(0)}$ by the relations
$$
  \bar L_1^{(0)} ({\bf v},w)^{tr}=  \nu {\mathbb P}  \Delta {\bf v} ,  \quad \bar L_2^{(0)} ({\bf v}, w)^{tr} = \Delta w,
$$
where $\rho=({\bf v},w)^{tr} \in {\mathcal H}$.
The  operator $L^{(0)}$
is self-adjoint in the space $\mathcal H$, its spectrum is discrete
 and lies in the interval $(-\infty, 0)$.
  Therefore,  $-L^{(0)}$ is a sectorial.
The operator $B$ is given then by
$$
 B ({\bf v}, w)^{tr} =(\kappa w,  -v_2  U_y)^{tr}.
$$
It is clear that estimate (\ref{sect}) is satisfied. \end{proof}

\begin{lemma} \label{resolv} {For some $\lambda >0$ and positive $\beta$ such that $\beta >\beta_1$ the resolvent $(L- \lambda)^{-1}$ is a compact operator from $\mathcal H $ to $\mathcal H$ .
}
\end{lemma}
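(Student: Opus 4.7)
The plan is to exhibit $(L-\lambda)^{-1}$ as a compact perturbation of the compact resolvent of the unperturbed operator $L^{(0)}$ from the proof of Lemma \ref{5.4}. First I would argue that $L^{(0)}$ itself has compact resolvent at some $\lambda_0 > 0$. Both diagonal blocks of $L^{(0)}$, namely $\nu\,{\mathbb P}\Delta_D$ on $H$ and $\Delta_N$ on $\tilde H$, are self-adjoint on the bounded smooth domain $\Omega$, and their domains embed continuously into $H_1$ and $\tilde H_1$ respectively. For the Robin block $\Delta_N$ the hypothesis $\beta > 0$, $\beta > \beta_1$ enters through the same boundary-integral estimate \eqref{Ibeta2} used in Proposition \ref{Prop1}: it dominates the indefinite surface contribution $I_\beta(u)$ by $\tfrac{1}{2}\|\nabla u\|^2 + C\|u\|^2$, so that the form associated to $\lambda_0 - \Delta_N$ is coercive on $\tilde H_1$ for any $\lambda_0$ larger than this $C$. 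The Rellich--Kondrachov compact embeddings $H_1 \hookrightarrow H$ and $\tilde H_1 \hookrightarrow \tilde H$ then yield compactness of $(\lambda_0 - L^{(0)})^{-1}$ on $\mathcal{H}$.

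Next I would observe that the perturbation $B\,({\bf v},w)^{tr} = (\kappa\,{\mathbb P}({\bf e}w),\,-v_2 U_y)^{tr}$ is a bounded linear operator on $\mathcal{H}$, since it consists only of multiplication by the smooth bounded function $U_y$ together with the orthogonal projection ${\mathbb P}$. Because $L = L^{(0)} + B$ is sectorial by Lemma \ref{5.4} and $B$ is bounded, the spectrum of $L$ sits inside a half-plane $\{\operatorname{Re} z \le C_1\}$, so after enlarging $\lambda_0$ if necessary I may assume that $\lambda_0$ lies in the resolvent set of both $L^{(0)}$ and $L$.

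The conclusion then follows from the second resolvent identity,
\begin{equation*}
(\lambda_0 - L)^{-1} = (\lambda_0 - L^{(0)})^{-1} + (\lambda_0 - L^{(0)})^{-1}\, B\, (\lambda_0 - L)^{-1}.
\end{equation*}
The first summand is compact by the previous step; the second is the composition of the compact operator $(\lambda_0 - L^{(0)})^{-1}$ with the bounded operators $B$ and $(\lambda_0 - L)^{-1}$, hence also compact. Therefore $(L-\lambda_0)^{-1}$ is compact on $\mathcal{H}$, and by the first resolvent identity the same holds at every point of the resolvent set.

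I expect the most delicate point to be the coercivity of the Robin block under the sign assumption $\beta > 0$, $\beta > \beta_1$; the trace estimate \eqref{Ibeta2} handles this cleanly and is the only place these inequalities are used. The remainder is a routine sectorial-operator perturbation argument that requires no quantitative information about the profile $U$ beyond the smoothness and boundedness of $U_y$.
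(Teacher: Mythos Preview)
Your proof is correct and takes a genuinely different route from the paper. The paper works directly with the coupled resolvent system $(\lambda-L)({\bf v},w)=({\bf g},f)$: testing the two equations against ${\bf v}$ and $w$ and invoking the same boundary estimate \eqref{Ibeta2}, it obtains the a~priori bound $\|\nabla{\bf v}\|+\|\nabla w\|\le c(\|{\bf g}\|+\|f\|)$ for large real $\lambda$, and then appeals to Rellich. Your perturbative argument instead first establishes compact resolvent for the block-diagonal $L^{(0)}$ and then transfers it to $L$ via the second resolvent identity, using only that $B$ is bounded on $\mathcal H$. Your approach is more modular and makes transparent why the off-diagonal coupling is harmless; the paper's approach is a bit more self-contained in that it does not call on the sectoriality Lemma~\ref{5.4} (which you use to place $\lambda_0$ in the resolvent set of $L$), though in exchange it is somewhat cavalier about the surjectivity of $\lambda-L$, which the a~priori estimate alone does not give.
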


\begin{proof} Consider equations  
\begin{equation}
\lambda  {\bf v} = {\mathbb P} (\nu \Delta {\bf v}  +\kappa{\bf e} w + {\bf g}) ,
\label {Sp1K}
\end{equation}
\begin{equation}
\lambda w=\Delta w -  v_2 U_y + f,
\label {Sp2K}
\end{equation}
where ${\bf g}, f$ lie in $H$ and $\tilde H$, respectively. These equations imply  the estimates
$$
\lambda ||{\bf v}||^2 + \nu ||\nabla {\bf v}||^2 \le\kappa ||w|| || {\bf v}|| +   ||{\bf g}|| || {\bf v}||,
$$
$$
\lambda ||w||^2 +  ||\nabla w||^2 \le c_0 ||w|| || {\bf v}|| +   ||{ f}|| || { w}|| +  I_{\beta}(w),
$$
where $c_0 >0$ is independent of $\lambda$ and $I_{\beta}(w)$ is defined by \eqref{Ibeta}.
For sufficiently large positive $\lambda \in {\mathbb R}$ the above inequalities and  \eqref{Ibeta2} imply
$$
||\nabla w|| + \nu ||\nabla {\bf v}|| \le c_1  (||f|| + ||{\bf g}||).
$$
Consequently, $L-\lambda$ is invertible and $(L-\lambda)^{-1}$ is a compact operator. \end{proof}

 According to  (see \cite{Kato}, Ch. III, Theorem 6.29)  the last  lemma implies that the spectrum of $L$ is discrete (consists of isolated eigenvalues), each
eigenvalue has a finite multiplicity $n(\lambda)$,
and the resolvent $(L-\lambda)^{-1}$ is a compact operator for all $\lambda$, where $(L-\lambda)^{-1}$ is bounded.  We investigate the spectrum of $L$ in the next section.

\section{Spectrum of  linear operator  L} \label{Spectrum}

To study the spectral problem for $L$, we use 
the stream function-vorticity   reformulation of the Navier Stokes equations  \cite{Chorin}.
The velocity  $\bf v$ can be expressed
 via the stream
function $\psi(x,y)$  by the relations
$v_1=\psi_y, v_2=-\psi_x$.  Given a ${\bf v}$, the function $\psi$ can be found  by the relation
\begin{equation} \label{psiv1}
\psi(x,y)=-\int_y^h v_1(x, s) ds.
\end{equation}
As a result of the standard transformations, eqs. (\ref{Sp1}), (\ref{Sp2})
  take the form  
\begin{equation}
\lambda \Delta \psi =\nu \Delta^2  \psi  -\kappa w_x,
\label{OBEstream2}
\end{equation}
\begin{equation}
\lambda w =\Delta w + \psi_x U_y.
\label{heat1}
\end{equation}
We obtain the following boundary conditions  for $\psi$: 
\begin{equation}
 {\psi} (x,y,\lambda)\vert_{x=0, \pi}   = \Delta \psi(x, y, \lambda)\vert_{x=0,\pi}=0,
\label{boundstream2x}
\end{equation}
\begin{equation}
 {\psi}_x (x,y, \lambda)\vert_{y=0, h} =  \psi_y(x, y, \lambda)\vert_{y=0, h}=0.
\label{Maran2}
\end{equation}

\subsection{ Some preliminaries} \label{sec:8.1}

Let us consider  the spectral problem defined by \eqref{OBEstream2},  \eqref{heat1},  \eqref{boundstream2x},\eqref{Maran2},   \eqref{boundNeumSp}   and  \eqref{boundDirSp}.
We  seek eigenfunctions $e(x,y,\lambda)=(\psi, w)^{tr}$ with eigenvalues $\lambda \in {\mathbb C}_{1/2}$, where 
 ${\mathbb C}_a$ denotes the half-plane 
\begin{equation}
{\mathbb C}_a= \{ \lambda \in {\mathbb C}:   Re \ \lambda > -a \}.
\label{Ck}
\end{equation}
In fact, we are interested in $\lambda \in {\mathbb C}_{1/2}$ because for small $\gamma$ only the eigenfunctions with  the eigenvalues $\lambda \in {\mathbb C}_a$, where $a >> \gamma$,
are involved in the construction of  the locally invariant manifold ${\mathcal M}_N$. 

Since $U=U(y)$  is independent of $x$,  
we seek the eigenfunctions  in the form 
\begin{equation}
   \quad \quad w(x,y, \lambda)= w_k(y, \lambda) \cos(kx),
\label{wF}
\end{equation}
  \begin{equation}
   \psi(x,y, \lambda)= \psi_k(y, \lambda)  \sin(kx), 
\label{psiF}
\end{equation}
where $k$ are positive integers,  $k \in {\mathbb N}=\{1,2,...,\}$. 
Let us introduce the operator $L_k=D_y^2 -k^2$.
Moreover,  to simplify formulas,  we set $\kappa=\nu$.
Then for  $\psi_k$ and $w_k$ one obtains the following 
boundary value problem on $[0,h]$:
\begin{equation} \label {BVK1}
 \lambda_k \nu^{-1} L_k \psi_k=L_k^2 \psi_k +  k^2 w_k,  
\end{equation}
\begin{equation} \label {BVK1b}
 \lambda_k w_k =L_k w_k -  U_y \psi_k,   
\end{equation}
\begin{equation} \label{Psik}
\psi_k(0)=\psi_k(h)=\frac{d\psi_k}{dy}\vert_{y=0,h}=0,
\end{equation}
\begin{equation} \label{uuk}
\frac{dw_k(y)}{dy}\vert_{y=0}=\beta w_k(0),  \quad \frac{dw_k(y)}{dy}\vert_{y=h}=\beta_1 w_k(h).
\end{equation}

Let us denote $\bar k=\sqrt{k^2 +\lambda_k}$.
We can  suppose, without loss of generality, that $Re \  \bar k >0$ for $\lambda_k \in  {\mathbb C}_{1/2}$,   since  $\bar k$ is involved in  eq. \eqref{BVK1b} only via $\bar k^2$.

\subsection{Choice of profile $U$ and parameters}

To simplify
the spectral problem, let us introduce is a large parameter  $b >0 $ and 
 assume that 
\begin{equation} \label{hviscos}
\nu > b^{10}, \quad  h=-10 \log b. 
\end{equation}
Moreover,  we set
\begin{equation} \label{rbb}
\beta= r  b, \quad  r=b^{-s_0},
\end{equation}
where the value $s_0 \in (0,1)$ will be precise below.
The key trick is the following choice of $U$:  
\begin{equation}  \label{Uy}
U(y)=\bar C_U + \int_0^y \Big(C_U r b^4 \exp(-bs) + \mu s P_N(s)  \Big)ds,
\end{equation}
where
\begin{equation} \label{mua}
 \mu = b^{-s_2},  \quad   s_2 \in (0,1), \quad 
\end{equation}
$P_N(y)$ is a polynomial of degree $N$ and $C_U, \bar C_U \ne 0$ are coefficients.  We shall precise the value of $C_U$ in the end of this section, where it will be shown 
that $C_U$ does not depend on $b$ as $b \to \infty$. To satisfy condition \eqref{Ubc}, we set
\begin{equation}  \label{Uyc}
\bar C_U =\beta^{-1}  C_U r b^4 =  C_U b^3. 
\end{equation}

We adjust $\beta_1$ from condition \eqref{Ubc1} and relation \eqref{Uy} and as a result, one obtains
\begin{equation}  \label{Uyb}
\beta_1=\big(C_U r b^4 \exp(-bh) + \mu h P_N(h) \big)B^{-1},
\end{equation}
where 
$$
B=\Big(\bar C_U + \int_0^h (C_U r b^4 \exp(-bs) + \mu s P_N(s) )ds \Big).
$$
Note that $\beta >0$ and for large $b$ one has $\beta_1 < \beta$, therefore, $\beta$ and $\beta_1$ satisfy the conditions of Prop. \ref{Prop1} and Lemma  \ref{resolv}.

\subsection{ Main result on spectrum of operator $L$}
\label{sec: 8.2}

Let us formulate the  assertion. 

\begin{proposition} \label{6.2}
{ Let   \eqref{hviscos}-\eqref{Uyb} hold,
 $N$ be a positive integer  and ${\mathcal K}_N=\{k_1, ..., k_N \}  \subset {\bf Z}_+$.
Then 
there exists a  polynomial $P_{N}(y)$ such that for  sufficiently large $b$ 
the eigenvalues  $\lambda_k$ of BVP (\ref{BVK1})-(\ref{uuk})  satisfy
\begin{equation}
\lambda_k=0  \quad    k \in {\mathcal K}_N, 
\label{Spec0}
\end{equation}
\begin{equation}
  Re \ \lambda_k < - C_N b^{-c_N}  \quad  k \notin {\mathcal K}_N,  
  \label{Spec}
\end{equation}
where positive $C_N, c_N$ are uniform in $b$ as $b \to +\infty$.
}
\end{proposition}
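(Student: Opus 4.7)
The plan is to reduce the coupled system (\ref{BVK1})--(\ref{uuk}) to a scalar characteristic equation for $\lambda_k$ and then exploit the three well-separated scales in the problem: the extreme viscosity $\nu \gg b^{10}$, the boundary-layer thickness $b^{-1}$ of the exponential part of $U_y$, and the small polynomial correction of size $\mu = b^{-s_2}$.

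First I would invert the $\psi$-equation. Because $\nu^{-1}\lambda_k \ll 1$ for $\lambda_k$ in any bounded region of $\mathbb{C}_{1/2}$, equation (\ref{BVK1}) reduces to $L_k^2 \psi_k = -k^2 w_k$ plus an error of order $\nu^{-1}$. Let $G_k$ denote the Green operator of the biharmonic problem $L_k^2$ on $[0,h]$ with the four clamped-type conditions (\ref{Psik}); then $\psi_k = -k^2 G_k w_k$ and substitution into (\ref{BVK1b}) gives a scalar eigenvalue problem $(L_k - \lambda_k) w_k = -k^2 U_y\, G_k w_k$ with Robin boundary conditions (\ref{uuk}). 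A characteristic equation $F_k(\lambda, C_U, P_N, b) = 0$ then follows by solvability (a determinantal condition involving the Green's function of $L_k - \lambda$). Next, a matched-asymptotic expansion exploits that $U_y(y) = C_U r b^4 e^{-by} + \mu y P_N(y)$ has a sharp spike of thickness $b^{-1}$ near $y = 0$ where $\psi_k(y) \approx \frac{1}{2}\psi_k''(0) y^2$ (because $\psi_k(0) = \psi_k'(0) = 0$), giving the leading spike contribution
\[
\int_0^h C_U r b^4 e^{-by} \psi_k(y)\, dy \;\approx\; C_U b^{1-s_0} \psi_k''(0).
\]
In the outer region the polynomial tail is $O(\mu)$, so to leading order $w_k$ satisfies $(L_k - \lambda_k) w_k = 0$ with a Dirichlet-like condition at $y = 0^+$ enforced by the spike. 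Rescaling and passing to $b \to \infty$ collapses the $k$-dependence, because the ratio $\psi_k''(0) / w_k(0^+)$ combines with the exponential moment to produce a limit equation $F_\infty(\lambda, C_U) = 0$ independent of $k$, exactly the mechanism the author sketches in the introduction. I would fix $C_U = C_U^*$ of order $1$ so that $\lambda = 0$ is a simple root of $F_\infty(\cdot, C_U^*) = 0$.

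The polynomial correction $\mu y P_N(y)$ then splits the degeneracy. This term contributes a $k$-dependent perturbation $\mu \Phi_k(P_N, \lambda)$ to $F_k$, linear in the coefficients of $P_N$, whose value at $\lambda = 0$ is expressible as $\int_0^h y P_N(y) \chi_k(y)\, dy$ for some $k$-dependent profile $\chi_k$ coming from the outer eigenfunctions. Writing $P_N = \sum_{m=0}^{N} a_m y^m$, the $N \times (N+1)$ matrix $\bigl(\int_0^h y^{m+1} \chi_{k_j}(y)\, dy\bigr)_{j,m}$ is of full rank by a Vandermonde-type argument (the $\chi_{k_j}$ are linearly independent for distinct $k_j$), so a choice of $a_0, \ldots, a_N$ together with an $O(\mu)$ adjustment of $C_U$ achieves $\lambda_{k_j} = 0$ for every $j = 1, \ldots, N$.

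The main obstacle I expect is the uniform separation bound (\ref{Spec}) for $k \notin \mathcal{K}_N$. One must show both that small $k$ outside $\mathcal{K}_N$ are pushed off zero by the same perturbation without accidental cancellations (this is handled by the non-degeneracy of the Vandermonde-type matrix above, yielding a gap of size $\mu = b^{-s_2}$), and that large $k$ are automatically stable because $L_k$ contains $-k^2$ while $k^2 G_k$ remains bounded as $k \to \infty$ by a standard elliptic estimate, so the coupling is dominated by the dissipation. Assembling these two regimes into the single lower bound $C_N b^{-c_N}$ uniformly in $k \in \mathbb{N}$ is the most delicate part of the proof.
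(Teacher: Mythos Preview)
Your overall architecture matches the paper's closely: both reduce to a scalar characteristic equation via $\psi_k''(0)$, exploit the boundary-layer integral of $U_y$ against $\psi_k \approx \tfrac{1}{2}\psi_k''(0)y^2$, obtain a limiting equation independent of $k$ (the paper's cubic $(z+1)^2 = 4(1+az)^{-1}$ in the variable $z=\bar k/k$), and then use the $\mu P_N$ term to split the degeneracy.

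However, there is a genuine gap in your treatment of $k\notin\mathcal K_N$. Your Vandermonde argument gives you enough freedom in the coefficients of $P_N$ to force $\lambda_{k_j}=0$ for the $N$ chosen wavenumbers, but it gives no \emph{sign control} on the perturbation $\mu\Phi_k(P_N,0)$ at the infinitely many remaining integers $k$. Non-degeneracy of a finite matrix tells you only that the map $(a_0,\ldots,a_N)\mapsto(\Phi_{k_1},\ldots,\Phi_{k_N})$ is surjective; it says nothing about whether $\Phi_k$ is negative for every other $k$. If $\Phi_k>0$ for even one such $k$, that eigenvalue is pushed to the \emph{right}, destroying (\ref{Spec}). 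The paper resolves this not by linear algebra but by a structural choice: it arranges (Lemma~\ref{pertcon}) that the perturbation of the characteristic equation is, up to lower-order terms, proportional to
\[
Z(k^{-1},d)\;=\;-\Bigl(\prod_{j=1}^N\bigl(k^{-1}-(k_j+d_j)^{-1}\bigr)\Bigr)^{2},
\]
a quantity that is $\le 0$ for every $k$ and vanishes only at $k_1,\ldots,k_N$. Lemma~\ref{YL}(b) then converts $Z<0$ directly into $\mathrm{Re}\,z_k<1-c\,b^{-s_4}$, and the free parameters $d_j$ are tuned by the implicit function theorem (Lemma~\ref{mainL}) to make the zeros at $k_j$ exact. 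This manufactured negativity is the mechanism you are missing, and without it ``no accidental cancellations'' is not the issue---the issue is the sign.

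A secondary point: your dismissal of large $k$ (``$k^2 G_k$ remains bounded as $k\to\infty$'') is too quick to close the argument uniformly. The paper instead shows directly (Lemma~\ref{L2N}) that any eigenvalue with $\mathrm{Re}\,\lambda_k>-1/2$ forces $|\bar k|\lesssim h\,b^{1-s_0}$, so only $k\ll b$ need to be examined at all; this a~priori bound is what makes the boundary-layer expansion valid uniformly over the relevant range of $k$.
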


The plan of the proof is as follows.  We  show that the values $\lambda_k \in {\mathbb C}_{1/2}$ only under the condition $k < c_1 b^{s_1}$, where $s_1 \in (0,1)$. 
That result allows us to
find an asymptotics for the eigenfunctions, which is valid for $k/b << 1$. For $\lambda_k$ we obtain a nonlinear equation. 
The asymptotics of eigenfunctions and the property $k/b << 1$ allows us to simplify this  equation for $\lambda_k$.  By a variable rescaling 
we show that this equation is a small perturbation of a simple cubic one.   As a result, for small  $k/b$  that equation can be investigated by a perturbation technique.

First we prove  a series of  auxiliary  assertions. Consider 
the Green function $\Gamma_{\bar k}(y, y_0)$   of the operator $L_{\bar k}$ defined by 
the equation 
$$
L_{\bar k} \Gamma_{\bar k} =\delta(y-y_0)
$$
and  the boundary conditions  
$$
\frac{d\Gamma_{\bar k}}{dy}(y,y_0)\vert_{y=0}=\beta \Gamma_{\bar k}(0,y_0), 
$$
$$
\frac{d\Gamma_{\bar k}}{dy}(y,y_0)\vert_{y=h}=\beta_1 \Gamma_{\bar k}(h, y_0).
$$

\begin{lemma} \label{LG1} { Let $Re \ \bar k > 0$. Then $\Gamma_{\bar k}$
satisfies the estimate
\begin{equation} \label{LL4}
 |\Gamma_{\bar k}(y, y_0) - \bar \Gamma_{\bar k}(y, y_0)| < C_0 \exp(- Re \ \bar k (|h-y| + |h-y_0|)),  
\end{equation}
where $C_0 >0$ is a constant and $\bar \Gamma_{\bar k}$ is defined by  
\begin{equation} \label{LL4a}
 \bar \Gamma_{\bar k}(y, y_0)= \frac{  \exp(- \bar k y_0) (\sinh(\bar k y)+ \bar  k\beta^{-1} \cosh(\bar k y))}{\bar k (1 + \bar k \beta^{-1})},  \quad y < y_0,   
\end{equation}
and
\begin{equation} \label{LL4b}
 \bar \Gamma_{\bar k}(y, y_0)= \frac{  \exp(- \bar k y) (\sinh(\bar k y_0)+ \bar  k\beta^{-1} \cosh(\bar k y_0))}{\bar k (1 + \bar k \beta^{-1})}  \quad y \ge y_0.   
\end{equation}
}
\end{lemma}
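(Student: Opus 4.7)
The plan is to exploit the fact that $\bar\Gamma_{\bar k}$ is, up to the standard sign convention, the Green function of $L_{\bar k}$ on the half-line $[0,\infty)$ subject only to the Robin condition at $y=0$. I would first verify directly from \eqref{LL4a}--\eqref{LL4b} that $\bar\Gamma_{\bar k}$ is continuous at $y=y_0$ with the correct unit jump in its derivative there, decays like $\exp(-\bar k y)$ as $y\to\infty$ (using $Re\,\bar k>0$), and obeys $\partial_y\bar\Gamma_{\bar k}(0,y_0)=\beta\,\bar\Gamma_{\bar k}(0,y_0)$. These are all routine checks from the explicit formulas.

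Set $R_{\bar k}(y,y_0):=\Gamma_{\bar k}(y,y_0)-\bar\Gamma_{\bar k}(y,y_0)$. Then $L_{\bar k}R_{\bar k}=0$ on $(0,h)$, $R_{\bar k}$ satisfies the \emph{homogeneous} Robin condition at $y=0$, and at $y=h$ it satisfies an inhomogeneous Robin condition with right-hand side
\[
g(y_0):=-\bigl(\partial_y\bar\Gamma_{\bar k}(h,y_0)-\beta_1\bar\Gamma_{\bar k}(h,y_0)\bigr),
\]
which by inspection of \eqref{LL4b} is $O\!\bigl(\exp(-Re\,\bar k\,(h-y_0))\bigr)$. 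The homogeneous equation together with the homogeneous BC at $y=0$ forces
\[
R_{\bar k}(y,y_0)=A(y_0)\bigl(\sinh(\bar k y)+\bar k\beta^{-1}\cosh(\bar k y)\bigr),
\]
and the single remaining BC at $y=h$ determines the scalar $A(y_0)$ explicitly.

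Using $\sinh(\bar k h),\cosh(\bar k h)=\tfrac12 e^{\bar k h}\bigl(1+O(e^{-2Re\,\bar k h})\bigr)$, one computes that the coefficient of $A(y_0)$ in the $y=h$ relation equals $\tfrac12 e^{\bar k h}(1+\bar k\beta^{-1})(\bar k-\beta_1)\bigl(1+O(e^{-2Re\,\bar k h})\bigr)$. Combined with the bound on $g(y_0)$ this gives $|A(y_0)|\le C\exp(-Re\,\bar k\,(2h-y_0))$. Substituting back into the formula for $R_{\bar k}$ and using the crude bound $|\sinh(\bar k y)+\bar k\beta^{-1}\cosh(\bar k y)|=O(\exp(Re\,\bar k\,y))$ produces
\[
|R_{\bar k}(y,y_0)|\le C_0\exp\bigl(-Re\,\bar k\,(h-y)-Re\,\bar k\,(h-y_0)\bigr),
\]
which is the desired estimate (the absolute values in the statement are vacuous since $y,y_0\in[0,h]$).

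The one point requiring genuine care is non-degeneracy of the prefactor $(1+\bar k\beta^{-1})(\bar k-\beta_1)$ that appears in the denominator when solving for $A(y_0)$. Under the standing scalings $\beta=rb$, $r=b^{-s_0}$ with $s_0\in(0,1)$, together with the smallness of $\beta_1$ as $b\to\infty$ from \eqref{Uyb} and the assumption $Re\,\bar k>0$, this factor stays bounded away from zero uniformly in $b$, so $C_0$ can indeed be chosen independent of $b$ in the regime of interest. This is the main obstacle I would expect; everything else is a direct computation from the explicit formulas \eqref{LL4a}--\eqref{LL4b}.
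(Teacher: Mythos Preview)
Your proposal is correct and follows essentially the same route as the paper: decompose $\Gamma_{\bar k}=\bar\Gamma_{\bar k}+\tilde\Gamma_{\bar k}$ with $\bar\Gamma_{\bar k}$ the half-line Green function, observe that the remainder solves the homogeneous equation with the homogeneous Robin condition at $y=0$ and an inhomogeneous one at $y=h$ whose data is exponentially small in $h-y_0$, and then read off the estimate. Your treatment is in fact more explicit than the paper's---you write down the one-parameter family for $R_{\bar k}$ and isolate the prefactor $(1+\bar k\beta^{-1})(\bar k-\beta_1)$ whose non-vanishing the paper leaves implicit.
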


\begin{proof} Let us represent $\Gamma_{\bar k,h}$ as a  sum $\Gamma_{\bar k,h}=\bar \Gamma_{\bar k} (y,y_0) + \tilde \Gamma_{\bar k}(y,y_0)$,
where $\bar \Gamma_{\bar k} (y,y_0)$ is the Green function of the operator $L_k$ on $[0, +\infty)$ under boundary conditions
$$
\frac{d\bar \Gamma_{\bar k}}{dy}(y,y_0)\vert_{y=0}=\beta \Gamma_{\bar k}(0,y_0),  \quad \lim_{y \to +\infty} {\bar \Gamma_{\bar k}}(y, y_0)=0.
$$

Then $\bar \Gamma$ is defined by \eqref{LL4a},  \eqref{LL4b} and 
 $\tilde \Gamma_{\bar k}$ is the solution of the following boundary value problem:
\begin{equation} \label{gbp}
L_{\bar k} \tilde \Gamma_{\bar k}=0, 
\end{equation}
\begin{equation} \label{gbp1}
\frac{d\tilde \Gamma_{\bar k}(y}{dy}\vert_{y=0}=\beta \tilde \Gamma_{\bar k}(0, y_0), 
\end{equation}
\begin{equation} \label{gbp2}
\frac{d\tilde \Gamma_{\bar k}(y}{dy}\vert_{y=h}=\beta_1 \tilde \Gamma_{\bar k}(h, y_0)  +\beta_2,      \quad \beta_2=\beta_1 \bar \Gamma_k(h, y_0)  - \frac{d\bar \Gamma_k(y, y_0)}{dy}\vert_{y=h}.
\end{equation}

Note that $|\bar \Gamma_{\bar k} (h,y_0)| < c |\bar k^{-1}|\exp(-Re \ \bar k|h-y_0|)$. Therefore,  $$\beta_2 < c_1 |\bar k^{-1}|\exp(-Re \ \bar k|h-y_0|).$$
Resolving the BVP defined by \eqref{gbp},\eqref{gbp1}, \eqref{gbp2} and taking into account the above estimate for $\beta_2$
we see that 
$|\tilde \Gamma_{\bar k}| < C_0\exp(- Re \ \bar k (|h-y| + |h-y_0|))$.
\end{proof}

Roughly speaking   Lemma \ref{LG1} asserts that for large $h$ the Green function $\Gamma_{\bar k}$ consists of two terms, the first
one can be computed explicitly and  the second one is a exponentially decreasing 
boundary layer term. Such a structure simplifies the analysis of the spectral problem. All terms induced by the boundary layers are negligible as $b \to +\infty$ due to our choice \eqref{hviscos} of $h$. Note that  for $\lambda \in {\mathbb C}_{1/2}$
$$
Re \ \bar k(\lambda) >  \sqrt{k^2 -  1/2} > \frac{k}{2}.
 $$
Thus estimate \eqref{LL4} implies
\begin{equation} \label{LL4a}
 |\Gamma_{\bar k}(y, y_0) - \bar \Gamma_{\bar k}(y, y_0)| < C_0 \exp(- \frac{ k (|h-y| + |h-y_0|)}{2}).
\end{equation}

\begin{lemma} \label{int}
{For  $m=0,1,2,3$ and $\lambda \in {\mathbb C}_{1/2}$ the solution $\psi_k$ of eq. \eqref{BVK1} satisfies
\begin{equation} \label{psiestm}
|D_y^m \psi_k|_{\infty} \le c_m k^{m-2} h |w_k|_{\infty}
\end{equation} 
and 
for  $m=0,1, 2$ the solution $w_k$ of eq. \eqref{BVK1b} satisfies
\begin{equation} \label{westm}
|D_y^m w_k|_{\infty} \le c_m k^{m-1}|\bar k|^{-1} |\psi_k|_{\infty} \sup_{y \in [0,h]}|U_y|.
\end{equation} 
}
\end{lemma}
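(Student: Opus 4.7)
The plan is to invert both ODEs via Green's functions and read off pointwise bounds. The $w_k$ estimate reduces to analyzing $\Gamma_{\bar k}$, which is already provided by Lemma~\ref{LG1}, while the $\psi_k$ estimate requires constructing the Green's function of the fourth-order operator in \eqref{BVK1}.

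For the $w_k$ bounds, I would rewrite \eqref{BVK1b} as $L_{\bar k}w_k = U_y\psi_k$ and represent
\[
w_k(y)=\int_0^h \Gamma_{\bar k}(y,y_0)\,U_y(y_0)\,\psi_k(y_0)\,dy_0.
\]
From the explicit formulas \eqref{LL4a}--\eqref{LL4b} for $\bar\Gamma_{\bar k}$, one reads off the uniform pointwise bound $|\bar\Gamma_{\bar k}(y,y_0)|\leq C|\bar k|^{-1}\exp(-Re\,\bar k\,|y-y_0|)$, while the boundary-layer remainder $\tilde\Gamma_{\bar k}$ is exponentially small because of \eqref{LL4a} together with the growth $h=-10\log b$. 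Integration in $y_0$ then yields $\int_0^h|\Gamma_{\bar k}|\,dy_0 \leq C|\bar k|^{-2}$, and the inequality $Re\,\bar k\geq k/2$ noted just before Lemma~\ref{int} converts this into $\leq Ck^{-1}|\bar k|^{-1}$, giving the $m=0$ case. Differentiating under the integral produces $\int_0^h|D_y\Gamma_{\bar k}|\,dy_0\leq C|\bar k|^{-1}$, which handles $m=1$. For $m=2$ I would use the equation directly, $D_y^2 w_k = \bar k^2 w_k + U_y\psi_k$, combined with the $m=0$ bound.

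For the $\psi_k$ estimates, rewrite \eqref{BVK1} as the fourth-order ODE $L_k(L_k-\lambda_k\nu^{-1})\psi_k = -k^2 w_k$ with clamped conditions \eqref{Psik}. Its fundamental solutions are $e^{\pm ky}$ and $e^{\pm k_1 y}$ where $k_1 = \sqrt{k^2+\lambda_k/\nu}$; since $\nu>b^{10}$ and $\lambda_k$ lies in the bounded spectral region of interest, $k_1 = k+O(\nu^{-1})$, so the Green's function $G_{4,k}(y,y_0)$ of the problem is a small perturbation of the Green's function of $L_k^2$ under clamped boundary conditions. Constructing $G_{4,k}$ from the fundamental solutions and determining the coefficients from the $4\times 4$ boundary-condition system, one obtains $\int_0^h|D_y^m G_{4,k}(y,y_0)|\,dy_0 \leq Ck^{m-4}h$ for $m=0,1,2,3$. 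Substituting into $\psi_k(y) = -k^2\int_0^h G_{4,k}(y,y_0)w_k(y_0)\,dy_0$ delivers the claimed $|D_y^m\psi_k|_\infty \leq c_m k^{m-2}h|w_k|_\infty$.

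The hardest step will be the explicit construction of $G_{4,k}$ with the stated scaling, particularly the appearance of the interval length $h$ rather than $k^{-1}$ in the integrated bound, and the degeneracy $k_1 \to k$ in the limit $\lambda_k\to 0$ where one must replace $e^{\pm k_1 y}$ by $ye^{\pm ky}$. The degeneracy is harmless since $G_{4,k}$ depends continuously on $\lambda_k$; the factor $h$ reflects the relative size of the integration interval versus the decay length $1/k$ of the fundamental solutions, and is absorbed by the logarithmic growth $h=-10\log b$ in subsequent applications. The remaining bookkeeping — the exponentially small boundary-layer corrections $\tilde\Gamma_{\bar k}$ and the $O(\nu^{-1})$ perturbation of $k_1$ — is routine.
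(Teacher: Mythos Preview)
Your treatment of the $w_k$ bounds is exactly the paper's: integrate against $\Gamma_{\bar k}$, read off the $L^1$ bound on the kernel and its first derivative from Lemma~\ref{LG1}, and recover $m=2$ directly from the equation $D_y^2 w_k = \bar k^2 w_k + U_y\psi_k$.

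For $\psi_k$ the paper takes a much shorter route than your fourth-order Green's-function construction. It multiplies \eqref{BVK1} by $\psi_k^*$ and integrates by parts to obtain the energy identity
\[
\|D_y^2\psi_k\|^2 + \Bigl(2k^2+\tfrac{Re\,\lambda}{\nu}\Bigr)\|D_y\psi_k\|^2 + \Bigl(k^4+\tfrac{k^2\,Re\,\lambda}{\nu}\Bigr)\|\psi_k\|^2 = k^2\,Re\,\langle \psi_k^*,w_k\rangle,
\]
which for $Re\,\lambda>-1/2$ immediately gives $\|D_y^m\psi_k\|\le c_m k^{m-2}\|w_k\|$ for $m=0,1,2$; the equation and an interpolation inequality extend this to $m=3,4$, and the one-dimensional Sobolev bound $|f|_\infty\le \sqrt{h}\,\|f'\|$ (valid since each $D_y^m\psi_k$ vanishes somewhere on $[0,h]$) converts to sup norms, the two factors of $\sqrt{h}$ producing the $h$ in \eqref{psiestm}. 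Your approach would also succeed, but building $G_{4,k}$ explicitly and tracking the $k_1\to k$ degeneracy is genuine work, and your assertion $k_1=k+O(\nu^{-1})$ quietly assumes $|\lambda|$ bounded, which is not part of the hypothesis $\lambda\in\mathbb{C}_{1/2}$. The energy argument is a few lines, needs no explicit Green's function for the fourth-order problem, and is automatically uniform over the whole half-plane; it also explains transparently where the factor $h$ comes from, which in your sketch remains somewhat mysterious (the free-space kernel of $L_k^2$ actually integrates to $O(k^{-4})$ without any $h$).
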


\begin{proof}
To prove \eqref{westm} we use Lemma  \ref{LG1} and \eqref{LL4a}, which show that 
\begin{equation} \label{Lnu}
\int_{0}^{h} | D_y^m \Gamma_{\bar k}(y, y_0)| dy_0  < C_m k^{m-1} |\bar k|^{-1}, \quad m=0,1.
\end{equation} 
These estimates imply \eqref{westm} for $m=0,1$. For $m=2$ estimate\eqref{westm} follows from eq. \eqref{BVK1} and \eqref{Lnu}.

To prove \eqref{psiestm}, we use  the relation  
\begin{equation} \label{psich1}
||D_y^2 \psi_k||^2 + (2k^2 + \frac{Re \ \lambda}{\nu}) || D_y \psi_k||^2 + (k^4 +  \frac{k^2 Re \ \lambda}{\nu})||\psi_k||^2 =k^2 Re \ \langle \psi_k^*, w_k \rangle,
\end{equation} 
which follows from \eqref{BVK1} and where
$\langle f, g \rangle =\int_0^h fg dy$,  $\psi_k^*$ is complex conjugate to $\psi_k$, and $||f||^2 =\langle f , f^* \rangle$. 

For $\lambda \in {\mathbb C}_{1/2}$ \eqref{psich1} implies
\begin{equation} \label{psich2}
||D_y^m \psi_k||  \le  c_m k^{m-2} ||w_k|| \le c_m \sqrt{h}k^{m-2} |w_k|_{\infty}
\end{equation} 
for $m=0,1,2$. Using \eqref{psich2} and eq.  \eqref{BVK1} we extend \eqref{psich2} on the case $m=4$ and 
by $||D_y^3 \psi_k||^2 \ge ||D_y^2 \psi_k|| ||D_y^4 \psi_k||$ we obtain  \eqref{psich2} for  $m=3$.

Now the Sobolev embeddings 
$$
|D_y^m \psi_k|_{\infty}  \le c_m \sqrt{h} ||D_y^{m+1}  \psi_k||, \quad m=0,1,2,3 
$$
lead to \eqref{psiestm}. \end{proof}

\begin{lemma} \label{L2N} { If $Re  \ \lambda_k > -1/2$,  and $\mu, s_2, s_3$ are defined by \eqref{mua} 
 and a nontrivial 
solution  of  BVP (\ref{BVK1})-(\ref{uuk})  exists. Then  
\begin{equation} \label{LL2}
|\bar k| <  c_1  h r b=c_1 h b^{1 -s_0}, 
\end{equation}
where $c_1>0$ is a constant independent of $b,k$.}
\end{lemma}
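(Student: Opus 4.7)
The plan is to invert equation \eqref{BVK1b} using the Green function from Lemma \ref{LG1} and combine it with estimate \eqref{psiestm} at level $m=2$. The decisive ingredient is that the boundary conditions \eqref{Psik} at $y=0$ force $\psi_k(0)=\psi_k'(0)=0$, so $\psi_k$ vanishes to second order precisely where $U_y$ is concentrated as a boundary layer of width $1/b$. Pairing the $y^2$ Taylor weight against the sharp exponential profile \eqref{Uy} yields $\int_0^h y^2|U_y|\,dy = O(rb)$, losing three powers of $b$ relative to the crude bound $\sup|U_y|\cdot h = O(rb^4)$.

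First I would rewrite \eqref{BVK1b} as $L_{\bar k}w_k = U_y\psi_k$ and invoke Lemma \ref{LG1} to get
\begin{equation*}
w_k(y)=\int_0^h \Gamma_{\bar k}(y,y_0)\,U_y(y_0)\,\psi_k(y_0)\,dy_0.
\end{equation*}
From the explicit formulas for $\bar\Gamma_{\bar k}$ in Lemma \ref{LG1} together with the exponentially small correction \eqref{LL4}, a brief inspection gives $\sup_{y,y_0}|\Gamma_{\bar k}(y,y_0)|\le C/|\bar k|$ for $\lambda_k\in\mathbb{C}_{1/2}$. Next, from $\psi_k(0)=\psi_k'(0)=0$ and \eqref{psiestm} with $m=2$,
\begin{equation*}
|\psi_k(y)|\le \frac{y^2}{2}|D_y^2\psi_k|_\infty \le \frac{c\,h\,y^2}{2}|w_k|_\infty.
\end{equation*}
A direct calculation using $\int_0^h y^2 e^{-by}\,dy \le 2/b^3$, together with $\mu=b^{-s_2}$ and $h=O(\log b)$ which render the polynomial term in \eqref{Uy} subdominant, yields $\int_0^h y^2|U_y(y)|\,dy \le C\,r\,b$.

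Substituting these three bounds into the integral representation,
\begin{equation*}
|w_k|_\infty \le \frac{C}{|\bar k|}\int_0^h |U_y(y_0)||\psi_k(y_0)|\,dy_0 \le \frac{C'\,h\,r\,b}{|\bar k|}\,|w_k|_\infty.
\end{equation*}
For a nontrivial $(\psi_k,w_k)$ one has $w_k\not\equiv 0$ (otherwise \eqref{BVK1b} gives $U_y\psi_k\equiv 0$, and since $U_y$ is analytic and not identically zero this forces $\psi_k\equiv 0$), so dividing by $|w_k|_\infty$ yields $|\bar k|\le c_1 h\,r\,b = c_1\,h\,b^{1-s_0}$, as claimed. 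The main technical point is precisely the matching of the $y^2$ vanishing of $\psi_k$ at $y=0$ with the $b^{-3}$ second moment of the sharply peaked term $b^4 e^{-by}$: the naive estimate $\int|U_y\psi_k|\le \sup|U_y|\,|\psi_k|_\infty\,h = O(hrb^4)\,|\psi_k|_\infty$ would give only $|\bar k|\le C\,h\,r\,b^4$, three powers of $b$ too weak to control the spectrum of $L$ in the subsequent perturbation analysis.
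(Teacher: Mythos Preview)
Your argument is essentially the paper's own: bound $|\psi_k(y)|\le \tfrac12 y^2|D_y^2\psi_k|_\infty$ from the no-flip conditions, invert \eqref{BVK1b} through the Green function of Lemma~\ref{LG1}, pair the $y^2$ weight against $U_y$ to extract the factor $rb$, and close the loop with \eqref{psiestm} at $m=2$. One small imprecision: the uniform bound $\sup_{y,y_0}|\Gamma_{\bar k}|\le C/|\bar k|$ holds for the principal part $\bar\Gamma_{\bar k}$ but not for the full kernel, since the boundary-layer correction in \eqref{LL4} is $O(1)$ at $y=y_0=h$; nevertheless that piece is localized near $y_0=h$, where only the $O(\mu)$ polynomial part of $U_y$ survives, and its contribution to the integral is absorbed into the $b^{-s_3}$ term the paper records, so the conclusion is unaffected.
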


\begin{proof} 
One has  $2|\psi_k(y)|\le   y^2 |D_y^2 \psi|_{\infty}$.  
That estimate and eq. \eqref{BVK1b} imply that
\begin{equation} \label{LL2p1b}
|w_k|_{\infty} \le c_2 |D_y^2 \psi_k|_{\infty} \sup_{y \in [0, h]} I_2(y),  
\end{equation} 
where
\begin{equation} \label{Imm}
I_m(y)=\int_0^h
\Big |\Gamma_{\bar k}(y, y_0) \big(C_U r b^4  \exp(- by_0)  +  \mu y_0 P_N(y_0)  \big) \Big| y_0^m dy_0.  
\end{equation}
By Lemma \ref{LG1} and \eqref{hviscos} we find that for sufficiently large $b$ one has
$$|I_2 |< c_3 |\bar k|^{-1}(rb +  b^{-s_3}), $$ 
where 
\begin{equation} \label{s4}
s_3=s_2/2.
\end{equation}
According to Lemma  \ref{int}  
\begin{equation} \label{LL2p1}
|D_y^2 \psi_k|_{\infty} \le c_0 h |w_k|_{\infty}.
\end{equation} 
The above estimate of $|I_2|$ and  \eqref{LL2p1}, \eqref{LL2p1b}  imply 
\begin{equation} \label{LL2p2}
|w_k|_{\infty} \le c_4 h (rb + b^{-s_3})  |\bar k|^{-1} |w_k|_{\infty}   
\end{equation} 
that entails \eqref{LL2}.\end{proof}

\begin{lemma} \label{Ldps} { If $Re \ \lambda_k > -1/2$,  
 then  for sufficiently large $b >0$ a nontrivial 
solution  of  problem (\ref{BVK1})-(\ref{uuk})  exists only under condition  
\begin{equation} \label{LL2p}
 \rho_2=\frac{1}{2} \frac{d^2 \psi_k(y)}{ dy^2}\vert_{y=0} \ne 0.
\end{equation}}
\end{lemma}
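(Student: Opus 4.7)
The plan is to argue by contradiction: assume $\rho_2 := \psi_k''(0)/2 = 0$ and that $(\psi_k, w_k)$ is a nontrivial solution of \eqref{BVK1}--\eqref{uuk}, and then show that $(\psi_k, w_k) \equiv 0$ for all sufficiently large $b$. The mechanism is a strengthening of the estimate used in the proof of Lemma \ref{L2N}: under the conditions $\psi_k(0)=\psi_k'(0)=0$ from \eqref{Psik} together with the additional hypothesis $\psi_k''(0)=0$, Taylor's formula with integral remainder upgrades the estimate $|\psi_k(y)| \le y^2|D_y^2\psi_k|_\infty/2$ used there to the strictly better $|\psi_k(y)| \le y^3|D_y^3\psi_k|_\infty/6$. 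This extra factor of $y$, when integrated against the boundary-layer profile $C_U r b^4\exp(-by)$ appearing in $U_y$ (which is concentrated in a strip of width $1/b$), buys an additional factor $b^{-1}$ and destroys the factor $b$ on the right-hand side of the bound $|\bar k| < c_1 hrb$ from Lemma \ref{L2N}.

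Concretely, since \eqref{BVK1b} reads $L_{\bar k}w_k = U_y\psi_k$, one has $w_k(y) = \int_0^h \Gamma_{\bar k}(y,y_0) U_y(y_0)\psi_k(y_0)\,dy_0$. Substituting the improved Taylor bound and applying Lemma \ref{LG1}, the leading $\exp(-by_0)$ piece of $U_y$ contributes $C_U r b^4 \int_0^\infty y_0^3 \exp(-by_0)\,dy_0 = 6 C_U r$ (the $b^4$ cancels), while the polynomial piece $\mu y_0 P_N(y_0)$ contributes at most $O(b^{-s_3})$ for any $s_3 \in (0, s_2)$ once logarithmic factors arising from $h = -10\log b$ are absorbed. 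This yields
\begin{equation*}
|w_k|_\infty \le C\,|D_y^3\psi_k|_\infty\,|\bar k|^{-1}\bigl(r + b^{-s_3}\bigr).
\end{equation*}
Invoking Lemma \ref{int} with $m=3$, which gives $|D_y^3\psi_k|_\infty \le c_3 kh\,|w_k|_\infty$, I obtain
\begin{equation*}
|w_k|_\infty \le C'\,kh\,|\bar k|^{-1}\bigl(r + b^{-s_3}\bigr)\,|w_k|_\infty.
\end{equation*}
If $w_k \not\equiv 0$, dividing through and using $|\bar k|\ge ck$ (which follows from $\mathrm{Re}\,\bar k^2 = k^2+\mathrm{Re}\,\lambda_k \ge k^2 - 1/2$ on $\mathbb{C}_{1/2}$) forces $1 \le C''h(r + b^{-s_3}) = C''\bigl((\log b)b^{-s_0} + (\log b)b^{-s_3}\bigr) \to 0$ as $b \to \infty$. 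This contradiction gives $w_k \equiv 0$.

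Substituting $w_k \equiv 0$ into \eqref{BVK1} reduces it to $L_k(L_k - \lambda_k/\nu)\psi_k = 0$ under the four clamped conditions \eqref{Psik}. Because $\nu > b^{10}$ while $\lambda_k$ is bounded by Lemma \ref{L2N}, the parameter $|\lambda_k/\nu|$ is far below the first eigenvalue of the clamped biharmonic-type problem $L_k(L_k - \zeta)\psi = 0$ on $[0,h]$, whose spectrum stays bounded away from $0$ uniformly in $k\ge 1$. Hence $\psi_k \equiv 0$, contradicting the assumed nontriviality. The main obstacle I anticipate is the uniform bookkeeping of constants in $k$ and $b$: in particular, verifying $|\bar k|\ge ck$ with $c$ independent of $b$, and checking uniformly in $k\ge 1$ that the clamped biharmonic problem in the closing step has trivial kernel whenever $|\lambda_k/\nu|$ is small enough. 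Both are routine once the Green-function estimates from Lemma \ref{LG1} are in hand, but are essential for the contradiction above to be genuine for every admissible $k$.
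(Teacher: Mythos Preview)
Your argument is essentially identical to the paper's: both exploit the upgraded Taylor bound $|\psi_k(y)| \le y^3|D_y^3\psi_k|_\infty/6$ under $\rho_2=0$, feed it through the Green-function estimate to obtain $|w_k|_\infty \le c\,h(r+b^{-s_3})(k/|\bar k|)\,|w_k|_\infty$, and conclude $w_k\equiv 0$ for large $b$. Your closing step via the clamped biharmonic spectral gap is more elaborate than necessary---once $w_k=0$, estimate \eqref{psiestm} of Lemma~\ref{int} (case $m=0$) already gives $|\psi_k|_\infty \le c_0 k^{-2} h\,|w_k|_\infty = 0$ directly.
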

 
\begin{proof} 
The proof is analogous to the previous one. If $\rho_2=0$, then  $6|\psi_k(y)|\le   y^3 |D_y^3 \psi|_{\infty}$.  
This estimate and eq. \eqref{BVK1b} imply that
\begin{equation} \label{LL2p4}
|w_k|_{\infty} \le C_3 |D_y^3 \psi_k|_{\infty} \sup_{y \in [0, h]} I_3(y),  
\end{equation} 
where $I_3$ is defined by \eqref{Imm}.
As above  one has
$$|I_3| < c_4 |\bar k|^{-1}(r +  b^{-s_3}).$$ 
This estimate and the inequality    
\begin{equation} \label{LL2p3}
|D_y^3 \psi_k|_{\infty} \le c_1 k h |w_k|_{\infty},
\end{equation} 
that follows from Lemma  \ref{int},    imply 
\begin{equation} \label{LL2p2}
|w_k|_{\infty} \le c_5  h(r+ b^{-s_3}) k |\bar k|^{-1} |w_k|_{\infty}.   
\end{equation} 
For large $b$ one has $c_2 h (r+ b^{-s_3})  < 1$, thus $|w_k|_{\infty}=0$.\end{proof}

The next step is to find an asymptotics for $\psi_k$ under condition \eqref{LL2}.

\subsection{Asymptotics of  eigenfunctions}

 Let us find an asymptotic  for $\psi_k$ and $w_k$  with respect to the parameter
$k/b$, which is small due to Lemma \ref{L2N}. Using the Taylor expansion for $\psi_k$ at $y=0$
we introduce the function $\bar W_k$ as a solution of the equation
\begin{equation} \label {BV1a}
L  \bar W_k - \lambda_k \bar W_k= C_U r b^4 \rho_2 y^2 \exp(-by), 
\end{equation}
where, according to Lemma \ref{Ldps},  without any loss of generality one can set $\rho_2=1$.
Solving eq. (\ref{BV1a}) one has  
\begin{equation} \label {mainB1}
\bar W_k=C_U r b^4 D_b^2 \Big (\frac{\exp(-by) - \xi_k \exp(-\bar k y)}{b^2 -\bar k^2}\Big ),
\end{equation}
where 
$$
\xi_k=\frac{1+r}{r(1+ \bar k/\beta)},  \quad D_b=\frac{\partial }{\partial b}.
$$
We represent  $\psi_k$ by 
\begin{equation} \label{mainB2}
\psi_k = \bar \Psi_k  + \tilde \Psi_k,
\end{equation}
where
$$
\bar \Psi_k=C_U r k^2 b^4   D_b^2 \Big((b^2 -\bar k^2)^{-1} (\Phi_{k}(b,y) - \xi_k \Phi_{k}(\bar k, y))\Big)
$$
and
\begin{equation} \label {PhB}
 \Phi_{k}(p, y) =\frac{\exp(- p y) -  \exp(- ky) + y(p -k) \exp(-ky)}{(p^2 -k^2)(p^2 -k^2 -\lambda \nu^{-1})}. 
\end{equation}
We see that 
\begin{equation} \label{mainterm}
\bar \Psi_k(\bar k, y) = -6C_U k^2   \tilde \xi_k \Phi_{k}(\bar k, y) + Z_k(\bar k, y), 
\end{equation}
where 
\begin{equation} \label{txik}
  \tilde \xi_k=\frac{1+r}{1+\bar k\beta^{-1}},
\end{equation}
and $Z_k$ is defined by relations
\begin{equation} \label{corrterm}
Z_k(\bar k, y)=Z_k^{(1)}(\bar k, y) + Z_k^{(2)}(\bar k, y).
\end{equation}
Here
$$
Z_k^{(1)}(\bar k, y) =C_U r k^2 b^4 D_b^2 (\frac{\Phi_k(b,y)}{b^2 -\bar k^2}),
$$
$$
Z_k^{(2)}(\bar k, y) =-C_U k^2   \tilde \xi_k \big(b^4 D_b^2 ((b^2 -\bar k^2)^{-1}) -6 \big ) \Phi_k(\bar k,y).
$$
By these relations and Lemma  \ref{L2N}  one obtains 
\begin{equation} \label{esttildeZ}
\sup_{y \in [0,h]} |Z_k(\bar k, y)| < c_6 |C_U| |\bar k^2| /b^2
\end{equation}
and thus the function $Z_k$ is a small correction to the main term $ \bar \Psi_k$.

\begin{lemma} \label{asmainterm} {In the domain $ \lambda \in {\mathbb C}_{1/2}$ 
the function $\bar \Psi_k(\bar k, y) $ is an analytic in $\lambda$ and for sufficiently large $b>0$ satisfies  the estimate
\begin{equation} \label{mainterm2}
|\bar \Psi_k(\bar k, y) | <  C_0 | C_U|,  
\end{equation}
where $C_0>0$ is a constant independent of $k$ and $b$.
}
\end{lemma}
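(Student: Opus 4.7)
My plan is to read off both claims directly from the decomposition \eqref{mainterm}, $\bar \Psi_k(\bar k, y) = -6 C_U k^2 \tilde \xi_k \Phi_k(\bar k, y) + Z_k(\bar k, y)$, combined with the estimate \eqref{esttildeZ}, $|Z_k(\bar k, y)| < c_6 |C_U|\, |\bar k|^2/b^2$, already established just above the lemma. It therefore suffices to analyze the factor $k^2 \tilde \xi_k \Phi_k(\bar k, y)$ carefully.

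For analyticity in $\lambda \in \mathbb{C}_{1/2}$, I would check each factor of the decomposition separately. The branch $\bar k = \sqrt{k^2 + \lambda}$ is analytic away from $\lambda = -k^2 \le -1$, which lies outside $\mathbb{C}_{1/2}$. By \eqref{txik}, $\tilde \xi_k = (1+r)/(1 + \bar k/\beta)$ has a potential pole at $\bar k = -\beta$; since $\beta > 0$ and $\mathrm{Re}\,\bar k > 0$ on $\mathbb{C}_{1/2}$, this does not arise. In $\Phi_k(\bar k, y)$ the denominator of \eqref{PhB} at $p = \bar k$ becomes $\lambda \cdot \lambda(1 - 1/\nu) = \lambda^2(1 - 1/\nu)$, so the only candidate singular point is $\lambda = 0$ (where also $\bar k = k$). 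The Taylor expansion
\begin{equation*}
e^{-py} - e^{-ky} + y(p-k)e^{-ky} = \tfrac12 (p-k)^2 y^2 e^{-ky} + O((p-k)^3),
\end{equation*}
combined with $\bar k - k = \lambda/(2k) + O(\lambda^2)$, shows the numerator vanishes to second order in $\lambda$ at the origin, so the apparent singularity is removable. Analyticity of $Z_k$ on $\mathbb{C}_{1/2}$ follows from the same observations applied to the explicit summands in \eqref{corrterm}.

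For the uniform bound, observe that on $\mathbb{C}_{1/2}$ one has $\mathrm{Re}(1 + \bar k/\beta) \ge 1$, so $|\tilde \xi_k| \le 1 + r \le 2$. The Taylor argument above, together with the elementary inequality $|e^{-\bar k y}| \le 1$ for $\mathrm{Re}\,\bar k > 0$ and $y \ge 0$, yields after a short case split over small versus large $|\lambda|$ a uniform bound of the form $|\Phi_k(\bar k, y)| \le C y^2 e^{-ky}/k^2$. Since $\sup_{y \ge 0,\, k \ge 1} k^2 y^2 e^{-ky} = 4 e^{-2}$, the first summand in \eqref{mainterm} is bounded by a constant multiple of $|C_U|$. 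For $Z_k$, Lemma \ref{L2N} together with \eqref{hviscos}, \eqref{rbb} gives $|\bar k|^2/b^2 \le c_1^2 h^2 r^2 = O((\log b)^2 b^{-2 s_0})$, which tends to zero as $b \to \infty$, so $|Z_k(\bar k, y)| \le c_6 |C_U| \cdot o(1)$. Adding the two contributions yields $|\bar \Psi_k(\bar k, y)| \le C_0 |C_U|$ uniformly in $k$ and $b$, as required.

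The main obstacle is establishing the uniform bound $|\Phi_k(\bar k, y)| \le C y^2 e^{-ky}/k^2$, since at $p = \bar k$ the explicit formula \eqref{PhB} becomes a $0/0$ expression as $\lambda \to 0$. The cleanest route is to substitute the Taylor expansion of the numerator around $p = k$, cancel the vanishing $(p-k)$-factors against the denominator $\lambda^2(1-1/\nu)$ explicitly, and only then estimate the resulting analytic expression pointwise across $\mathbb{C}_{1/2}$; this avoids any need for contour integration or Cauchy-type estimates.
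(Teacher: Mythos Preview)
Your approach is essentially the paper's: both isolate the factor $\Phi_k(\bar k,y)$ from \eqref{mainterm}, resolve the $0/0$ at $\lambda=0$ by Taylor-expanding the numerator of \eqref{PhB}, and then split into small versus large $|\lambda|$. The paper carries this out via the substitution $\lambda=k^2\tau$ and the explicit series $T(ky,\tau)$ in \eqref{Phtau}--\eqref{Phtau2}, but the underlying argument is the same as yours.

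One small slip: the intermediate bound you state, $|\Phi_k(\bar k,y)|\le C\,y^2 e^{-ky}/k^2$, is not valid on all of $\mathbb C_{1/2}$. When $-\tfrac12<\mathrm{Re}\,\lambda<0$ one has $\mathrm{Re}\,\bar k<k$, so the contribution $e^{-\bar k y}$ in the numerator of \eqref{PhB} decays strictly more slowly than $e^{-ky}$ and cannot be dominated by $y^2 e^{-ky}$ for large $y$ (take e.g.\ $k=1$, $\lambda=-0.4$). The fix is immediate and does not affect your conclusion: since $\mathrm{Re}\,\bar k>\sqrt{k^2-1/2}\ge k/\sqrt2$ on $\mathbb C_{1/2}$, one may replace $e^{-ky}$ by $e^{-ky/2}$ in your bound, and then $\sup_{y\ge 0,\,k\ge 1}k^2 y^2 e^{-ky/2}=16e^{-2}$ still gives the required uniform estimate on $k^2|\Phi_k(\bar k,y)|$. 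Alternatively, as the paper does, simply aim for the cruder bound $k^2|\Phi_k|\le C$ without tracking the $y$-dependence; that is all the lemma needs.
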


\begin{proof}   

Note that according to definition \eqref{rbb} of $\beta$ and $r$,  one has $|\tilde \xi_k| < 2$ for $ \lambda \in {\mathbb C}_{1/2}$. 
Consider the function $\Phi_k(\bar k, y)$.  We make the substitution $\lambda=k^2 \tau$, where $\tau$ is a new complex variable defined 
in the domain $Re \ \tau > -1/2$. Then $\bar k -k =k(\sqrt{1+\tau}-1)$.  Using  the Taylor series one has 
\begin{equation} \label{Phtau}
\Phi_k(\bar k, y)= \frac{y^2 (\sqrt{1+\tau}-1)^2}{2 k^2 \tau^{2}(1- \nu^{-1}) } T(ky, \tau)\exp(-ky), 
\end{equation}
where
\begin{equation} \label{Phtau1}
T(z, \tau)=1 - 2z \frac{\sqrt{1+\tau}-1}{3!} + 2z^2 \frac{(\sqrt{1+\tau}-1)^2}{4!} - ...    
\end{equation}
Note that the function $ (\sqrt{1+\tau}-1)^2 \tau^{-2}$ is uniformly bounded for $Re \ \tau > -1/2$.  Since $(ky)^m \exp(-ky) \le  m^m \exp(-m)  $ the series
for $T$ converges and uniformly bounded  in the domain $|\sqrt{1+\tau}-1| <1$. Consequently, in that domain
one has
\begin{equation} \label{Phtau1}
|\Phi_k(\bar k(\tau), y)|  <   C_0. 
\end{equation}
 On the other hand for $|\tau| > \delta$ we have 
\begin{equation} \label{Phtau2}
|\Phi_k(\bar k, y)|= \frac{| \exp(- \bar k y) -\exp(- ky) + y (\bar k -k) \exp(-ky)|} {k^{4} |\tau^{2}| (1- \nu^{-1})}<  c_0(\delta) k^{-2}. 
\end{equation}
Using \eqref{Phtau1}, \eqref{Phtau2}, \eqref{PhB} and \eqref{mainterm} one obtains \eqref{mainterm2}.
\end{proof}

\subsection{Perturbation theory}

We represent $\psi_k$ by   \eqref{mainB2}. Then 
 for  $\tilde \Psi_k$  one obtains the equation
\begin{equation} \label {BI1}
\lambda \nu^{-1} L_k\tilde \Psi_k = L_k^2 \tilde \Psi_k + k^2  \tilde  W_k,  
\end{equation}
where $\tilde W_k$ satisfies
\begin{equation} \label {BI2}
  \lambda_k  \tilde  W_k =  L_k  \tilde  W_k +  \mu y P_N(y)  (\bar \Psi_{k} + 
 \tilde \Psi_{k})  + S_{k},   
\end{equation}
and $S_{k}$ admits the estimate
\begin{equation} \label{BI2E}
  |S_{k}(y) | < C_3  r b^4  y^{3} \exp(-by) |D_y^{3} ( \bar \Psi_{k}+ \tilde \Psi_k  )|_{\infty}. 
\end{equation}
The functions $\tilde \Psi_k$ and $\tilde W_k$ satisfy the boundary conditions 
\begin{equation} \label {Ebw}
   \frac{d\tilde W_k(y)}{dy}\vert_{y=0}=\beta \tilde W_k(0), \quad  \frac{d\tilde W_k(y)}{dy}\vert_{y=h}=\beta_1 \tilde W_k(h), 
\end{equation}
\begin{equation} \label {Ebp1}
  \tilde \Psi_k(0)=0, \quad  \frac{d\tilde \Psi_k(y)}{dy}\vert_{y=0}=0, 
\end{equation}
\begin{equation} \label {Ebp2}
  \tilde \Psi_k(h)=p_k, \quad  \frac{d\tilde \Psi_k(y)}{dy}\vert_{y=h}=q_k, 
\end{equation}
where $|p_k|, |q_k| < c_1\exp(- k h) < c_2 b^{-10}$.

We can resolve  the BVP  defined by (\ref{BI1}), (\ref{BI2}), (\ref{Ebw}),  (\ref{Ebp1}) and (\ref{Ebp2})  by iterations that follows from the next lemma.  That
BVP defines $\tilde \Psi_k$ via $\bar \Psi_k$,  i.e., $\tilde \Psi_k=A(\bar \Psi_k)$, where $A$ is a linear operator. We consider that operator  on the space $C^3[0,h]$ of functions $f$
with the bounded norm 
$$|f|_3=\sum_{m=0}^3 \sup_{y \in [0,h]} |D_y^m f|.$$

\begin{lemma} \label{Pert} {For   $|\bar k| < c_0 b^{s_1}$, where $s_1 \in (0,1)$,  and sufficiently large $b$ the operator $A$ is a contraction on $C^3[0,h]$. More precisely,  the solutions of the BVP  defined by (\ref{BI1}), (\ref{BI2}), (\ref{Ebw}),  (\ref{Ebp1}) and (\ref{Ebp2})  satisfy 
\begin{equation} \label{estBI}
| \tilde \Psi_k|_{3}   <  c_1 b^{-s_4} | \bar \Psi_k|_{3} , \quad s_4=\min\{s_3,  4s_0/5 \}>0.
\end{equation}}
\end{lemma}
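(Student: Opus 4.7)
The plan is to realize the operator $A$ as a composition of two solve-steps. Given a candidate $\tilde\Psi_k$, first recover $\tilde W_k$ from \eqref{BI2} and \eqref{Ebw} via the Green function $\Gamma_{\bar k}$ of Lemma \ref{LG1}; then recover a new $\tilde\Psi_k$ from the fourth-order equation \eqref{BI1} with the clamped conditions \eqref{Ebp1} and the exponentially small boundary data \eqref{Ebp2}. Showing that this composition is a contraction on $C^3[0,h]$ with rate $b^{-s_4}$ will give \eqref{estBI} after a routine rearrangement.

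For the first step, I would rewrite \eqref{BI2} as $L_{\bar k}\tilde W_k = -\mu y P_N(y)(\bar\Psi_k+\tilde\Psi_k) - S_k$ and represent $\tilde W_k$ by the convolution with $\Gamma_{\bar k}$. Lemma \ref{LG1} together with the integral bound \eqref{Lnu} show that the polynomial source contributes a factor of order $\mu h^2|\bar k|^{-1} = b^{-s_2}(\log b)^2|\bar k|^{-1}$ acting on $|\bar\Psi_k|_\infty+|\tilde\Psi_k|_\infty$; since $\int_0^\infty y^3 e^{-by}\,dy = 6b^{-4}$ cancels the prefactor $b^4$ in $S_k$, the term $S_k$ contributes a factor of order $r|\bar k|^{-1} = b^{-s_0}|\bar k|^{-1}$ acting on $|D_y^3(\bar\Psi_k+\tilde\Psi_k)|_\infty$. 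Differentiating under the integral, and using \eqref{BI2} itself to read off $D_y^2\tilde W_k$, gives analogous bounds for $|D_y^m\tilde W_k|_\infty$ with $m=0,1,2$.

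For the second step, under \eqref{hviscos} and the bound $|\lambda|<|\bar k|^2+1/2\le cb^{2s_1}$ the term $\lambda\nu^{-1}L_k$ is a negligible perturbation of $L_k^2$. I would construct the biharmonic Green function on $[0,h]$ with clamped BC at $y=0$ and match the $O(b^{-10})$ data at $y=h$; the first three $y$-derivatives satisfy the standard exponential-decay estimates in $k|y-y_0|$. Combined with the $-k^2\tilde W_k$ right-hand side, this yields $|\tilde\Psi_k|_3 \le C|\bar k|^q |\tilde W_k|_\infty+O(b^{-9})$ for an explicit integer $q$ determined by the derivative count, the boundary contribution being absorbed into the error.

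Composing the two estimates gives $|\tilde\Psi_k|_3 \le C(\log b)^{C}(b^{-s_2}+b^{-s_0})|\bar k|^{q-1}(|\bar\Psi_k|_3+|\tilde\Psi_k|_3)$. Choosing $s_1\in(0,1)$ small enough that $(q-1)s_1\le s_0/5$ and $(q-1)s_1\le s_3$ puts the contraction coefficient below $b^{-s_4}$ with $s_4=\min\{s_3,4s_0/5\}$, and rearranging yields \eqref{estBI} for all sufficiently large $b$. The main obstacle is the second step: controlling three $y$-derivatives of the biharmonic Green function on the slab $[0,h]$ uniformly in $k$, $\bar k$ and $b$, and in particular showing that the boundary layer of width $|\bar k|^{-1}$ at $y=h$ does not spoil the $D_y^3$-estimate. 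The exponentially small data $p_k,q_k=O(b^{-10})$ and the large width $h=-10\log b$ are precisely what save this, and the trade-off between the gain from the small source and the loss to derivatives is what forces the factor $4/5$ appearing in $s_4$.
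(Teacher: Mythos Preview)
Your two-step contraction structure and your treatment of the first step (inverting $L_{\bar k}$ via $\Gamma_{\bar k}$, splitting the source into the polynomial piece and $S_k$) are essentially what the paper does. The difference is in the second step. You propose to build and estimate the biharmonic Green function for $L_k^2-\lambda\nu^{-1}L_k$ with clamped data, and you correctly identify this as the main obstacle. The paper sidesteps it entirely: it simply quotes Lemma~\ref{int}, already proved by the energy identity \eqref{psich1} and Sobolev embedding, which gives
\[
|D_y^m\tilde\Psi_k|_\infty \le c_m k^{m-2}h\,|\tilde W_k|_\infty,\qquad m=0,1,2,3.
\]
No pointwise Green-function control, no boundary-layer analysis at $y=h$, no tracking of $\bar k$-dependence is needed. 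Substituting these bounds back into the $|\tilde W_k|_\infty$ estimate closes the loop with a coefficient of order $h\,k|\bar k|^{-1}(b^{-s_3}k^{-1}+r)\le h(b^{-s_3}+r)$, which is bounded uniformly in $k$ and $\bar k$ because $k/|\bar k|$ is bounded on ${\mathbb C}_{1/2}$. So there is no growing factor $|\bar k|^{q-1}$ at all, and no need to shrink $s_1$; this is fortunate, since in the paper $s_1$ is dictated by Lemma~\ref{L2N} (roughly $1-s_0$), not a free parameter you may choose. The factor $4/5$ in $s_4$ comes from nothing more than absorbing the logarithm in $rh=10b^{-s_0}\log b<b^{-4s_0/5}$, not from any derivative-versus-decay trade-off.

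In short: your route via the biharmonic Green function could in principle be made to work, but it manufactures a difficulty (your ``main obstacle'') that the paper avoids by reusing the elementary a~priori bound of Lemma~\ref{int}. The paper's argument is both shorter and yields the result for every $s_1\in(0,1)$ without restriction.
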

\begin{proof} 
We have $\tilde W_k(y) =J_P(y) +J_S(y)$, where
\begin{equation} \label{JP}
J_P=\mu \int_0^h   \Gamma_{\bar k} (y, y_0)  y_0 P_N(y_0)  (\bar \Psi_k(y_0) +  \tilde \Psi_k(y_0))dy_0,
\end{equation}
\begin{equation} \label{JS}
J_S= \int_0^h   \Gamma_{\bar k} (y, y_0)  S_k(y_0) dy_0. 
\end{equation}
By \eqref{Lnu} and using that
for large $b$ one has $\mu \max_{y \in [0,h]} |y P_N(y)| < c_4 b^{-s_3}$, one finds
\begin{equation} \label{JP1}
|J_P| \le   c_2 b^{-s_3}  |\bar k|^{-1} k^{-1} (|\bar \Psi_k|_{\infty}  +  |\tilde \Psi_k|_{\infty}).
\end{equation}
To estimate $|J_S|$ we note that 
$$
\int_0^h  | \Gamma_{\bar k} (y, y_0)| y_0^3 \exp(-by_0) dy_0 \le c_3 b^{-4} |\bar k|^{-1}.
$$
Thus, by \eqref{BI2E} one has
\begin{equation} \label{JS1}
|J_S|  <  c_4 r |\bar k|^{-1} (|D_y^3 \bar \Psi_k|_{\infty}  +  |D_y^3 \tilde \Psi_k|_{\infty}).
\end{equation}
Consequently
\begin{equation} \label{Wkk}
|\tilde W_k|_{\infty}  <  c_5\Big( b^{-s_3} \frac{1}{ |\bar k| k}  (|\bar \Psi_k|_{\infty}  +  |\tilde \Psi_k|_{\infty}) + \frac{ r }{|\bar k|} (|D_y^3 \bar \Psi_k|_{\infty}  +  |D_y^3 \tilde \Psi_k|_{\infty})\Big).
\end{equation}
From Lemma \ref{int} it follows that 
\begin{equation} \label{kaka}
|\tilde \Psi_k|_{\infty} \le  c_6 k^{-2} h |\tilde W_k|_{\infty}, \quad |D_y^3\tilde \Psi_k|_{\infty} \le  c_7 k h |\tilde W_k|_{\infty}.
\end{equation}
Substituting those inequalities into \eqref{Wkk} and noticing that for large $b$ one has $r h < b^{-4s_0/5}$, we find that
\begin{equation} \label{Wkk2}
|\tilde W_k|_{\infty}  <  c_7\Big( b^{-s_3} ( |\bar k|k)^{-1}  |\bar \Psi_k|_{\infty}  + r |\bar k|^{-1} |D_y^3 \bar \Psi_k|_{\infty}\Big).
\end{equation}
Again using \eqref{kaka}, one has \eqref{estBI}.\end{proof}

\subsection{Nonlinear equation for $\lambda_k$}

Let us make the substitution $z=\bar k(\lambda)/k$, where $z$ is a new complex unknown.  Since $\lambda \in {\mathbb C}$, for each fixed $k$ the variable $z$
lies in the domain 
$$
{\mathbb D}_{k,b}=\{z \in {\mathbb C}:  Re \ z > \sqrt{k^2 -1/2},  \    |z| <c_0 h rb/k \}.
$$
According to \eqref{PhB}, \eqref{mainterm},  \eqref{corrterm}, \eqref{Phtau} and \eqref{Phtau1}   one has
\begin{equation} \label{d22psi}
\frac{d^2\bar \Psi_k(y, \lambda)}{dy^2}\vert_{y=0}=-3 \tilde \xi_k(z)  C_U (1- \nu^{-1})^{-1} g(z) + H_k( z, b),
\end{equation}
where 
$$
g(z)=\frac{1}{(z +1)^2},   \quad \tilde \xi_k(z)=\frac{1+r} {1 + kz/rb},  \ 
$$
$$
H_k=H_{k,0}  + H_{k,1},
$$
$$
H_{k,0}=C_U r b^4 D_b^2\Big(   \frac{b-k} {(b^2 - k^2 z^2)(b+k)(b^2 - k^2 (1 + (z^2-1) \nu^{-1}))}        \Big),
$$
$$
2H_{k,1}=-C_U \tilde \xi_k(z) \big(b^4 (D_b^2(b^2 - k^2z^2)^{-1})   - 6      \big) g(z).
$$
Let us set 
\begin{equation} \label{s2s0}
    3C_U =-8(1- \nu^{-1})(1+r)^{-1}.
\end{equation}
We note that $|1+z| > 1$ and thus according to  estimate \eqref{esttildeZ} and Lemma \ref{L2N} for large $b$  one has  
\begin{equation} \label{d2G}
\sup_{z \in {\mathbb D}_{k,b}}| H_k(z, b)| <  c_1 b^{-2s_0}.
\end{equation}

Let us introduce
$$
\tilde H_k(z, b) :=\frac{d^2 \tilde \Psi_k(y, \lambda(z))}{dy^2}\vert_{y=0}.
$$
Due to Lemma \ref{Pert} the term  $\tilde H_k$  is a smooth uniformly bounded function in the domain ${\mathbb D}_{k,b}$:
\begin{equation} \label{HH1}
 \sup_{z \in {\mathbb D}_{k,b}}|\tilde H_k(\tau, b)| <  c_3 b^{-s_4}. 
\end{equation}

Then we use that $\psi_k=\bar \Psi_k + \tilde \Psi_k$ and compute  $\rho_2$ defined by \eqref{LL2p}. Without any loss of generality one can  set $\rho_2=1$.
 As a result,   we obtain the equation for $z$:
\begin{equation} \label{Spmu1}
 (z+1)^2  = 4  (1 + a z)^{-1}  +  Y_k(z, b),
\end{equation}
where
$$
2Y_k=  (1+z)^2( \tilde H_k(z, b) + H_k(z, b)) ,  \quad  a=k/rb.
$$
We note  that $Y_k$ admits the estimate
\begin{equation} \label{YYY}
 |Y_k( z, b) | <  C_1 b^{-s_4} |1+z|^2,  \quad z \in {\mathbb D}_{k,b}.
\end{equation}

\subsection{Investigation of equation \eqref{Spmu1}}

As $b \to \infty$ we have $Y_k \to 0$ and the limit cubic equation, which arises from  \eqref{Spmu1},  has three roots: a real  and two imaginary ones.

\begin{lemma}  \label{YL} 
{
({\bf a})  If
the root $z_k$ of \eqref{Spmu1} lies in the domain 
$$
{\mathbb E}_{k,b, c_1}=\{ z \in {\mathbb D}_{k,b}: \  Re \ z  > 1 - c_1 b^{-s_4} \},
$$ 
then $z_k \in I_b$, where
\begin{equation} \label{zzz2}
  I(b)= \{ z \in  {\mathbb C} :    |z-1|  <  c_4 b^{-s_4} \},
\end{equation}
where $c_4 >0$ is a constant. The subdomain $I_b$ contains only a single root of \eqref{Spmu1};

({\bf b}) 
Let
\begin{equation} \label{yyy}
 Re \ Y_k(z,b) <  -   c_1 b^{-s_4}, \quad  z \in I(b).
\end{equation}
Then for sufficiently large $b$ the root $z_k$ of \eqref{Spmu1} satisfies
\begin{equation} \label{zzz}
 Re \ z_k<  1 -   c_2 b^{-s_4}.
\end{equation}}

\end{lemma}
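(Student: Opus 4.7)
The plan is to treat equation \eqref{Spmu1} as a small holomorphic perturbation of the cubic obtained by dropping $Y_k$, and to locate its roots by Rouché's theorem (for part (a)) and by direct linearisation at $z=1$ (for part (b)).

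First I would clear denominators: multiplying \eqref{Spmu1} by $(1+az)$ puts it in the equivalent form $F(z)=G(z)$, with
\[
F(z):=(z+1)^2(1+az)-4,\qquad G(z):=Y_k(z,b)(1+az),
\]
and by \eqref{YYY} one has $|G(z)|\le C_1 b^{-s_4}|1+z|^2|1+az|$, which is uniformly $O(b^{-s_4})$ on bounded subregions of $\mathbb{D}_{k,b}$. The unperturbed cubic $F(z)=0$ has coefficients depending on $a=k/(rb)$. At $a=0$ it degenerates to $(z+1)^2=4$ with roots $1$ and $-3$; for small $a>0$ these persist as $z_0(a)=1-a+O(a^2)$ and $z_{-}(a)=-3+O(a)$, while a third root $z_\infty(a)=-1/a+O(1)$ comes in from infinity. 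Since $F'(z)=(z+1)(2+a+3az)$, I read off $F'(z_0)=4+O(a)\ne 0$, so $F$ is locally biholomorphic near $z_0$ and there is a constant $c>0$ (independent of $b$ for $a$ small) with $|F(z)|\ge c|z-z_0|$ on a fixed disc around $z_0$.

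For part (a), I apply Rouché's theorem to $F$ and $G$ on $\partial I(b)$. Because $|z_0-1|=O(a)=o(b^{-s_4})$ under the parameter choices (with $r=b^{-s_0}$, the admissible $k$'s give $a=kb^{s_0-1}$, which is $o(b^{-s_4})$), the root $z_0$ lies inside $I(b)$ and is the only root of $F$ there, since $z_-\approx -3$ and $z_\infty\approx -1/a$ both have negative real part. On $\partial I(b)$ the lower bound gives $|F(z)|\ge c(c_4-Cab^{s_4})b^{-s_4}\ge\tfrac12 cc_4 b^{-s_4}$, while $|G(z)|\le C'b^{-s_4}$; choosing $c_4$ large enough forces $|F|>|G|$ on $\partial I(b)$, so by Rouché $F-G$ has exactly one zero in $I(b)$. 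The same lower bound on $|F|$ extends to $\mathbb{E}_{k,b,c_1}\setminus I(b)$ (the exceptional roots $z_-,z_\infty$ being far in the left half-plane), so no further roots exist there, yielding both assertions of (a).

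For part (b), part (a) lets me write $z_k=1+w$ with $|w|<c_4 b^{-s_4}$. Taylor expansion about $z=1$ gives $F(1)=4a$ and $F'(1)=4+8a$, so the equation $F(z_k)=G(z_k)$ becomes
\[
(4+8a)\,w \;=\; Y_k(z_k,b)(1+az_k)-4a+O(w^2).
\]
Dividing by $4+8a$, taking real parts, and invoking the hypothesis $\mathrm{Re}\,Y_k(z,b)<-c_1 b^{-s_4}$ on $I(b)$ together with $a>0$, I obtain $\mathrm{Re}\,w\le -\tfrac14 c_1 b^{-s_4}-a+O(b^{-2s_4})<-c_2 b^{-s_4}$ for $b$ large, and hence $\mathrm{Re}\,z_k<1-c_2 b^{-s_4}$. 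The main technical obstacle is the uniformity of the lower bound $|F(z)|\ge c|z-z_0|$ in both $b$ and the admissible range of $k$: one must control $F'(z_0(a))$ and the separation between $z_0(a)$ and the other cubic roots $z_-(a),z_\infty(a)$ as $a=k/(rb)$ varies, and verify that the domain restriction $|z|<c_0 hrb/k$ from Lemma~\ref{L2N} keeps $z_\infty(a)$ safely outside $\mathbb{E}_{k,b,c_1}$.
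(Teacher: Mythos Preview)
Your approach via Rouché's theorem and holomorphic perturbation is genuinely different from the paper's, which instead separates \eqref{Spmu1} into its real and imaginary parts and bounds them directly. The paper first observes that $\operatorname{Re} z_k>0$ forces $|1+az_k|>1$, so \eqref{Spmu1} and \eqref{YYY} give $|1+z_k|<3$ and hence $|Y_k|<C_2 b^{-s_4}$; taking imaginary parts then yields $|\operatorname{Im} z_k|<c_2 b^{-s_4}$, and taking real parts yields $\operatorname{Re} z_k<1+c b^{-s_4}$, with no hypothesis on the size of $a$. Uniqueness is obtained separately from derivative bounds on $H_k,\tilde H_k$ and the Implicit Function Theorem. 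Part~(b) is done similarly by re-examining the real-part relation. Your Rouché strategy packages existence and uniqueness together and is conceptually clean, while the paper's bare-hands estimate is more elementary and, crucially, uniform in $a$.

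There is, however, a genuine gap in your argument: the assertion that $a=k/(rb)=kb^{s_0-1}=o(b^{-s_4})$ is not justified, and in fact fails under the paper's parameter choices. With $s_2\in(0,1/10)$ and $s_0\in(1-s_2/2,1)$ one has $s_4=\min\{s_2/2,4s_0/5\}=s_2/2$, so $1-s_0<s_2/2=s_4$; hence even for bounded $k$, $a\sim b^{-(1-s_0)}\gg b^{-s_4}$. This means the unperturbed root $z_0(a)\approx 1-a$ lies \emph{outside} $I(b)$, so your Rouché count on $\partial I(b)$ does not give a single root there, and the lower bound $|F(z)|\ge c|z-z_0|$ on $\mathbb{E}_{k,b,c_1}\setminus I(b)$ cannot be invoked in the form you state. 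The paper's real/imaginary part argument sidesteps this entirely because it never compares $a$ to $b^{-s_4}$; the term $4(1+a\operatorname{Re} z_k)/|1+az_k|^2\le 4$ holds for every $a>0$. To repair your approach you would need either a different contour adapted to $z_0(a)$ rather than to $1$, or the paper's direct estimate for the localisation step, reserving Rouché only for uniqueness once $z_k\in I(b)$ is known.
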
 

\begin{proof}
({\bf a})
Since $Re \ z_k >0$,  we have $|1+ az_k| >1$ and then \eqref{Spmu1} implies that
\begin{equation} \label{1+z}
 |1 +z_k| < 2 + c_1 b^{-s_4} < 3.
\end{equation}
Then \eqref{YYY} can rewritten as 
\begin{equation} \label{YYY1}
 |Y_k( z, b) | <  C_2 b^{-s_4},  \quad z \in {\mathbb D}_{k,b}.
\end{equation}
We take the imaginary part of \eqref{Spmu1} and one has
$$
2 (Re \ z_k +1 ) Im \ z_k =- \frac{4a  Im \ z_k }{1+ a^2 |z_k|^2}  + Im \ Y_k.
$$
That relation and \eqref{YYY1} give 
\begin{equation} \label{Imz}
 |Im \ z_k| <  c_2 b^{-s_4}.
\end{equation}
Now 
we take the real part of \eqref{Spmu1} that entails
\begin{equation} \label{Rez0}
 (Re \ z_k +1)^2= (Im \ z_k )^2 +  4(1 - a  Re \ z_k)(1+ a^2 |z_k|^2)^{-1}  + Re \ Y_k.
\end{equation}
By that relation, estimates \eqref{YYY1},  \eqref{Imz}, $Re \ z_k > 0$ and $a >0$ one obtains that
\begin{equation} \label{Rez1}
 Re \ z_k    <   1+ c_2 b^{-s_4}. 
\end{equation}
That inequality and \eqref{Imz} entail the assertion {(\bf a)}.

Let us prove the uniqueness of the roots $z_k$ in the case ({\bf a}).  
One has  $z_k(a) \in I(b)$.   For bounded $k$ the distance $d_{E, D}=dist(I_b, \partial {\mathbb D}_{k, b})$ between the boundary of the domain ${\mathbb D}_{k, b}$ and $I_b$ satisfies $d_{E, D} > 1/4$.  Note that $H_k(z, b)$ is an analytical  function of $z$ in the domain ${\mathbb D}_{k, b}$ (see the proof of Lemma  \ref{asmainterm}). Therefore, estimate \eqref{d2G} entails
\begin{equation} \label{derH}
|\frac{dH_k}{dz}|  <  c_3  b^{-2s_0},   \quad  z \in I_b.
\end{equation}
The perturbation $\tilde H_k$ also is analytic in $z$ in  the domain ${\mathbb D}_{k, b}$. Indeed,  $\tilde H_k$ is the derivative of $\tilde \Psi_k$, which is 
a fixed point of of contraction mapping and that contraction analytically depends on the  parameter $z$ if $z \in  {\mathbb D}_{k, b}$.  Therefore, 
\eqref{HH1} gives
\begin{equation} \label{derH1}
|\frac{d\tilde H_k}{dz}|  <  c_7  b^{-s_4},   \quad  z \in I_b.
\end{equation}
Estimates \eqref{derH} and \eqref{derH1} imply that
\begin{equation} \label{derH2}
|\frac{dY_k}{dz}|  <  c_8 b^{-s_5},   \quad    z \in I(b), \  s_5=\min\{s_4, 2s_0 \}
\end{equation}
for some $c_8 >0$.
Now the uniqueness of the root $z_k=0$ for $k=k_1,..., k_N$ follows from \eqref{derH2} and the Implicit Function Theorem.

Consider assertion ({\bf b}).   If $Re \ Y_k < -c_1 b^{-s_4}$, then relations \eqref{Imz} and  \eqref{Rez0} imply
\begin{equation} \label{Rez1}
 Re \ z_k +1 <  -c_2  b^{-s_4} +   4 (1+ a^2 |z_k|^2)^{-1},
\end{equation}
and \eqref{zzz} follows.
\end{proof}

For large $b$ and  $z \in I(b)$ the main term of the asymptotics of $\tilde H_k$ 
can be found by the system of equations 
\begin{equation} \label {PT1}
L_k^2  \Psi_k^{(0)}=k^2 W_k^{(0)},  
\end{equation}
\begin{equation} \label {PT2}
 L_k  W_k^{(0)} =    y^3 P_N(y) \exp(-ky) :=V_k(y),     
\end{equation}
for the unknown functions $\Psi_k^{(0)}(y)$ and $ W_k^{(0)}(y)$, which satisfy no-flip  boundary conditions analogous to ones for   $\tilde \Psi_k$ and $ \tilde W_k$.
For $z=1$ the perturbation $\tilde H_k(z, b)$ can be expressed
via $\Psi_k^{(0)}(y)$ by the relation
\begin{equation} \label {PTH}
\tilde H_k(1, b)= \mu \frac{d^2\Psi_k^{(0)}(y)}{dy^2}\vert_{y=0}  + O(\mu b^{-s_4}).
\end{equation}

By a variation of   $P_N(y)$ in \eqref{Uy}, we can obtain  a large class of  perturbations  $\tilde H_k$   that
follows from the next lemma.

\begin{lemma} \label{pertcon}
{
For any polynomial $Z_N(p)= q_0+ q_1 p+ ... + q_N p^N  $ of variable $p=k^{-1}$
 there exists a polynomial $\tilde P_N(y)=\sum_{n=0}^N \bar r_n y^n$ such that
\begin{equation} \label {PT3}
 D_y^2  \Psi_k^{(0)}(y) \vert_{y=0}= k^{-6}\big( Z_N(k^{-1})  +  \frac{k}{(\beta + k)} \tilde Z_N(k^{-1}) +  O(b^{-10}) \big), \quad b \to \infty
\end{equation}
 where
\begin{equation} \label {PT4}
  \tilde Z_N(k^{-1})=a_0 q_0+ a_1 q_1 k^{-1} + ... + a_N q_N k^{-N}, 
\end{equation}
and $a_n >0,  n=1,..., N$ are some coefficients independent of $k,b$.}
\end{lemma}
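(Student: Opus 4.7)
The plan is to compute $D_y^2\Psi_k^{(0)}(y)|_{y=0}$ explicitly as a linear functional of the coefficients $\bar r_n$ of $\tilde P_N(y)=\sum_n \bar r_n y^n$, and then to invert that functional. Since the source $V_k(y)=y^3\tilde P_N(y)\exp(-ky)$ is exponentially concentrated within $O(k^{-1})$ of $y=0$, Lemma~\ref{LG1} and estimate~\eqref{LL4a} permit replacement of the Green function $\Gamma_k$ on $[0,h]$ by its half-line approximation $\bar\Gamma_k$; the resulting boundary-layer corrections are bounded by $O(\exp(-kh))=O(b^{-10})$ thanks to the choice~\eqref{hviscos} of $h$, which absorb into the advertised remainder. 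All subsequent analysis therefore takes place on $[0,\infty)$ with a single Robin condition at $y=0$ and decay at infinity.

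For each monomial source $y^{n+3}\exp(-ky)$ the equation $L_k W = y^{n+3}\exp(-ky)$ is solved by the ansatz $W = Q_n(y)\exp(-ky)$ with $Q_n$ a polynomial of degree $n+4$ determined by the algebraic recursion $Q_n''-2kQ_n' = y^{n+3}$; its coefficients are explicit rational functions of $k^{-1}$. Enforcing the Robin condition $D_y W(0) = \beta W(0)$ requires adding the decaying homogeneous correction
\[
A_n \exp(-ky), \qquad A_n = \frac{Q_n'(0)}{\beta + k},
\]
which is the sole source of the factor $\frac{1}{\beta+k}$ that will appear in the boundary term of~\eqref{PT3}.

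Next I would solve $L_k^2 \Psi_k^{(0)} = k^2 W_k^{(0)}$ by the same polynomial-times-exponential ansatz $\Psi_p = R(y)\exp(-ky)$, adding the two decaying homogeneous corrections $c_1 \exp(-ky) + c_2\, y \exp(-ky)$ fixed by the no-flip conditions $\Psi_k^{(0)}(0)=0$ and $D_y\Psi_k^{(0)}(0)=0$. A direct differentiation yields the convenient identity $D_y^2 \Psi_k^{(0)}(y)|_{y=0} = R_p''(0) + 2k R_p'(0) + k^2 R_p(0)$, which bypasses most of the homogeneous bookkeeping. Splitting $R_p$ into the pieces driven by $k^2 Q_n \exp(-ky)$ and $k^2 A_n \exp(-ky)$ respectively then separates the right-hand side of~\eqref{PT3} into exactly the two advertised summands: the first piece is a rational function of $k^{-1}$ free of $\beta$, producing $k^{-6} Z_N(k^{-1})$, while the second inherits the $\frac{1}{\beta+k}$ prefactor of $A_n$ and yields $k^{-6}\,\frac{k}{\beta+k}\,\tilde Z_N(k^{-1})$ with positive constants $a_n$ that reflect the uniform sign pattern of the iterated recursion.

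Finally, the induced linear map $(\bar r_0,\dots,\bar r_N) \mapsto (q_0,\dots,q_N)$ is lower-triangular with respect to the grading by powers of $k^{-1}$, the monomial $y^n$ in $\tilde P_N$ contributing to $q_n$ at its own leading order and to $q_m$ for $m>n$ only at strictly lower powers of $k^{-1}$, with nonzero diagonal entries expressible as explicit products of factorials and powers of $2$. Hence any prescribed $(q_0,\dots,q_N)$ is realized by a unique $\tilde P_N$, giving the conclusion of the lemma. The main obstacle is the clean bulk/boundary separation in the preceding paragraph: one must track two coupled polynomial recursions simultaneously and verify that the Robin correction enters the final answer \emph{only} through the single $\frac{1}{\beta+k}$ prefactor advertised in~\eqref{PT3}, together with the assertion that the resulting constants $a_n$ are truly independent of $k$ and $b$. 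This is an involved but essentially mechanical algebraic verification.
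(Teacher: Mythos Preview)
Your approach is correct but takes a different route from the paper. The paper avoids solving the ODEs explicitly: it multiplies $L_k^2\Psi_k^{(0)}=k^2 W_k^{(0)}$ by the test function $f_k(y)=(ky+1)e^{-ky}$ (which lies in the kernel of $L_k^2$ and has $f_k'(0)=0$) and integrates by parts to obtain $D_y^2\Psi_k^{(0)}(0)=\int_0^h W_k^{(0)}f_k\,dy+O(e^{-kh})$; it then introduces an auxiliary $\rho_k(y)=-(\tfrac{3}{4k(\beta+k)}+\tfrac{3y}{4k}+\tfrac{y^2}{4})e^{-ky}$ satisfying $L_k\rho_k=f_k$ together with the Robin condition, and a second integration by parts against $L_k W_k^{(0)}=V_k$ yields the closed integral $\int_0^h y^3 P_N(y)e^{-ky}\rho_k(y)\,dy$. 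Evaluating the resulting gamma-type integrals gives the explicit formula with factorials in one line, and the $\tfrac{1}{\beta+k}$ term is already visible as the constant piece of $\rho_k$. Your direct construction via the ansatz $Qe^{-ky}$ reaches the same destination but through heavier polynomial recursions; the paper's adjoint trick is shorter and makes the positivity of the $a_n$ and the exact form of the boundary term transparent without tracking two coupled recursions.

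Two minor remarks on your write-up. First, the identity you quote should read $D_y^2\Psi_k^{(0)}(0)=R''(0)-2kR'(0)+k^2R(0)$ (minus sign on the middle term), though your conclusion is unaffected since the no-flip conditions force $R(0)=R'(0)=0$ after adding the linear correction, and $R_p''(0)$ is indeed insensitive to that correction. Second, the map $(\bar r_n)\mapsto(q_n)$ is in fact \emph{diagonal}, not merely lower-triangular: the source $y^{n+3}e^{-ky}$ is homogeneous under the rescaling $y\mapsto y/k$ (away from the Robin condition), so each $\bar r_n$ contributes a pure power $k^{-n-6}$ to the bulk part. This makes the inversion step trivial and is exactly what the paper's integral formula exhibits.
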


\begin{proof}  
Consider equation \eqref{PT1}
under no-flip conditions at $y=0, h$. We multiply the both hand sides of \eqref{PT1}  by $f_k(y)=(ky +1)\exp(-ky)$.  Integrating by parts on $[0, h]$ and using that
$df_k(y)/dy=0,  \Psi_k^{(0)}(y)=0$ and 
$d\Psi_k^{(0)}/dy=0
$ at $y=0$, we obtain
\begin{equation} \label {PT2b}
 \frac{d^2  \Psi_k^{(0)}}{dy^2}\vert_{y=0} =   \int_0^h  W_k^{(0)}(y) (ky+1) \exp(-ky)  dy  +  O(\exp(-kh)).   
\end{equation}
Let 
$$
\rho_k(y)=-(\frac{3}{4k(\beta +k)} + \frac{3y}{4k} +\frac{ y^2}{4})\exp(-ky).
$$
Note that $L_k \rho_k =f_k(y)$.
Suppose  $w(y)$ be a smooth function defined on $[0,h]$ and satisfying the condition
$$
\frac{dw(y)}{dy}\vert_{y=0}=\beta w(0).
$$
The function $\rho_k$ also satisfies the same boundary condition.
Then, integrating by parts one obtains
$$
 \int_0^h  (L_k w(y))  \rho_k(y) dy=O(\exp(-kh)) + \int_0^h  w(y)  (ky +1) \exp(-ky) dy. 
$$
We multiply the both hand sides of eq. \eqref{PT2} by $\rho_k$. Then, again  integrating by parts, and using the last equality, we find
that
\begin{equation} \label {PT2ab}
 \frac{d^2 \Psi_k^{(0)}}{dy^2}\vert_{y=0} =   \int_0^h y^3 P_N(y) \exp(-ky)  \rho_k(y)\ dy  +  O(b^{-10}).   
\end{equation}
We substitution $P_N$ into that relation that gives 
\begin{equation} \label {PT2abm}
 \frac{d^2 \Psi_k^{(0)}}{dy^2}\vert_{y=0} =   \sum_{n=0}^N  \bar r_n (2k)^{-n-6}\Big(\frac{3k(n+3)! }{\beta +k} +  
\frac{3(n+4)! }{2}+ \frac{(n+5)! }{4}\Big)
+  O(b^{-10}).   
\end{equation}
Then using  relation \eqref{PT2abm} we can find coefficients $\bar r_n$ such that
$$
k^{-6} \sum_{l=0}^N  q_l k^{-l} =  \sum_{n=0}^N \bar r_n (2k)^{-n-6}\Big(  
\frac{3(n+4)! }{2}+ \frac{(n+5)! }{4}\Big)
$$
for all $k \ne 0$. As a result, we obtain \eqref{PT3} and \eqref{PT4}, where
$$
a_n=3\Big(\frac{3(n+4)}{2} +  \frac{(n+4)(n+5)}{4} \Big)^{-1},
$$
 that  entails the assertion of the Lemma.
\end{proof}

In order to use the last lemma  we
 take $Z$ of a special form:
\begin{equation} \label{SQpr}
Z(p,d)=- \Big(\prod_{j=1}^N (p - (k_j+ d_j)^{-1} \Big)^2,
\end{equation}
where $k_j$ are defined by Prop. \ref{6.2} and $d_j$ are parameters. The polynomial $\tilde Z_k$ is defined then by 
\eqref{PT4}.

\begin{lemma} \label{mainL}
{For sufficiently large $b >0$ one can find  coefficients $d_j(b)$ such that
the real roots $z_{k}$ of eq. (\ref{Spmu1}) satisfy
\begin{equation} \label{A1}
z_{k} < 1 - c_0 b^{-c_1},              \quad  \ k \notin { \mathcal K}_N,
\end{equation}
for some $c_1 >0$ independent of $b$ and
\begin{equation} \label{A2}
z_{k}=1,              \quad  \ k \in { \mathcal K}_N.
\end{equation}
Moreover,  $d_j(b) \to 0$ as $b \to +\infty$.}
\end{lemma}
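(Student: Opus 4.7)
My plan is to use the $N$ free parameters $d_1,\ldots,d_N$ in \eqref{SQpr} to enforce the $N$ resonance conditions $z_{k_j}=1$, while the squared structure of $Z$ guarantees that the main term of $Y_k(1,b)$ is strictly negative for nonresonant $k$. Substituting the polynomial $P_N$ produced by Lemma \ref{pertcon} for the choice $Z_N=Z(\cdot,d)$ into \eqref{PTH} and combining with \eqref{d2G} and the definition $2Y_k=(1+z)^2(\tilde H_k+H_k)$ evaluated at $z=1$, I obtain the representation
\[
Y_k(1,b)=2\mu k^{-6}\Bigl(Z(k^{-1},d)+\tfrac{k}{\beta+k}\tilde Z(k^{-1},d)\Bigr)+R_k(b),
\]
with remainder $|R_k(b)|\le C(b^{-2s_0}+\mu b^{-s_4})$ collected from $H_k$ and the subleading contributions to $\tilde H_k$.

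For $k\notin\mathcal K_N$ in the range $1\le k\le c_0 b^{s_1}$ (the complement being vacuous by Lemma \ref{L2N}), the quantity $k^{-1}$ stays at distance $\ge c/(1+k)$ from each $(k_j+d_j)^{-1}$ when $d$ is small, so $-Z(k^{-1},d)\ge c' k^{-2N}$ and hence $\mu k^{-6}|Z(k^{-1},d)|\ge c'' b^{-s_2-(6+2N)s_1}$. Choosing $s_1$ small enough in \eqref{hviscos}--\eqref{mua} that this quantity beats both $|R_k(b)|$ and the $\tilde Z$ correction (the latter further suppressed by the factor $k/(\beta+k)\le k/(rb)$), I conclude $\mathrm{Re}\,Y_k(z,b)<-c_1 b^{-c_2}$ throughout the disc $I(b)$ by analyticity together with \eqref{derH2}, and Lemma \ref{YL}({\bf b}) then delivers \eqref{A1}.

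For $k=k_j\in\mathcal K_N$, I must solve $Y_{k_j}(1,b)=0$ in $d$. Since $Z(k_j^{-1},0)=0$ by the construction \eqref{SQpr}, at $d=0$ we have $Y_{k_j}(1,b)=2\mu k_j^{-5}(\beta+k_j)^{-1}\tilde Z(k_j^{-1},0)+R_{k_j}(b)$, which is $O(\mu)$ small. I view $(Y_{k_j}(1,b))_{j=1}^N$ as a smooth ${\mathbb R}^N$-valued function of $d$ in a neighborhood of $0$ and apply the implicit function theorem: the linearization at $d=0$ is proportional to the matrix $(\partial_{d_l}\tilde Z(k_j^{-1},0))_{j,l}$, which I will show is a nonsingular weighted Vandermonde built from the distinct nodes $k_j^{-1}$ and the positive weights $a_n$ of \eqref{PT4}. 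This yields unique solutions $d_j=d_j(b)$ with $d_j(b)\to 0$ as $b\to\infty$, and Lemma \ref{YL}({\bf a}) then promotes $z_{k_j}=1$ to the unique root of \eqref{Spmu1} in $I(b)$, giving \eqref{A2}.

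The main obstacle is the invertibility of $(\partial_{d_l}\tilde Z(k_j^{-1},0))_{j,l}$. A direct calculation shows that $\partial_{d_l}Z(k_j^{-1},0)$ vanishes identically (the nodes $k_j^{-1}$ are double zeros of $Z$ whose first variations again vanish at those nodes), so the entire $d$-dependence of $\tilde Z(k_j^{-1},0)$ at leading order is driven by the discrepancy between the weights $a_n$ of \eqref{PT4} and a uniform weight; establishing nondegeneracy therefore requires a careful linear-algebra argument rather than a classical Vandermonde invocation. A secondary technical issue is the bookkeeping of the exponents $s_0,\ldots,s_4$ so that the leading term in $Y_k(1,b)$ dominates every remainder uniformly across $1\le k\le c_0 b^{s_1}$.
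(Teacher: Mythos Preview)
Your overall plan coincides with the paper's own proof: both expand $Y_k$ at $z=1$ via \eqref{PTH} and \eqref{d2G} to obtain the representation \eqref{YY55}, treat $k\notin\mathcal K_N$ by the sign of $Z$ together with Lemma~\ref{YL}({\bf b}), and close \eqref{A2} by an implicit function argument in $d$. The paper, however, simply asserts that the Jacobian $J_{jl}=\partial_{d_l}Z(k_j^{-1},d)$ is invertible for small $|d|$ and applies the implicit function theorem directly to the $Z$ term; you are right that at $d=0$ this Jacobian is identically zero, since each $k_j^{-1}$ is a \emph{double} root of $Z(\cdot,0)$ by \eqref{SQpr}. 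On this point you are more careful than the paper.

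There are two genuine problems with your proposed fix. First, the resonance condition is not $Y_{k_j}(1,b)=0$: substituting $z=1$ into \eqref{Spmu1} gives $Y_{k_j}(1,b)=4a_j/(1+a_j)$ with $a_j=k_j/(rb)$, which is what the paper's system \eqref{Spmu1K} is meant to encode. Second, and more seriously, shifting the Jacobian to the $\tilde Z$ term does not rescue the scaling. In \eqref{YY55} the $\tilde Z$ contribution carries the extra suppression $k/(\beta+k)\sim k_j b^{s_0-1}$, so even if $(\partial_{d_l}\tilde Z(k_j^{-1},0))_{j,l}$ were nonsingular, the effective $d$-derivative of $Y_{k_j}(1,b,d)$ would be only of order $\mu\,b^{s_0-1}$, whereas the target $4a_j/(1+a_j)$ is of order $b^{s_0-1}$; solving would force $|d|\gtrsim\mu^{-1}=b^{s_2}\to\infty$, contradicting $d\to 0$. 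Nor is the nondegeneracy of $(\partial_{d_l}\tilde Z(k_j^{-1},0))$ a ``weighted Vandermonde'': the polynomials $\partial_{d_l}Z(\cdot,0)$ all vanish at \emph{every} node $k_j^{-1}$, so whether the coefficient reweighting by the $a_n$ of \eqref{PT4} undoes this vanishing depends on the specific values of $a_n$ and is not a generic fact. The obstacle you flag is therefore real and is not resolved by this route; one needs either a different choice of $Z$ in \eqref{SQpr} (for instance simple rather than double zeros, together with a separate sign argument for \eqref{A1}), or a second-order argument exploiting $Z(k_j^{-1},d)\approx -c_j d_j^2$ near $d=0$.
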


\begin{proof}
Let $d_j$ be small enough, for example, $|d_j| < 1/10$ for all $j=1,..., N$.
Let us introduce a dependence on $d$ in notation  for $Y_k$. We use Lemmas  \ref{YL} and \ref{pertcon} noticing that for $z \in I(b)$
\begin{equation} \label{YY55}
Y_k( z, b,d)=  \mu k^{-6} (Z(k^{-1}, d) + \frac{k}{(\beta+k)} \tilde Z(k^{-1}, d)) + O(\mu b^{-s_{4}})  + O(b^{-2s_0}).
\end{equation}
Now we make a special choice of the exponents $s_0$ and $s_2$.  Let us take $s_2 \in (0, 1/10)$. Then we choose $s_0 \in (1- s_2/2,1)$ close to $1$ such that
\begin{equation} \label{s0s2C}
  s_*=(6+ 2N)(1-s_0) >    s_4.
\end{equation}
Using Lemma \ref{L2N} and definitions \eqref{SQpr} and \eqref{PT4} of $Z$ and $\tilde Z$,    we see then that for all positive $k$ such that $k < \min \{ \beta^{1/2},
\ c_1 b^{1-s_0} \}$, 
and  $k  \notin { \mathcal K}_N $   the term
$\mu k^{-6} Z(k^{-1}, d)$ is dominant in the right hand side of \eqref{YY55} and that term is negative. One has $Z(k^{-1}, d) <  -c_0 b^{-s_*} < 0$. 

For all positive $k$ such that $k >\beta^{1/2}$ but $k < 
c_{9} b^{1-s_0}\}$,    the both terms
$\mu k^{-6} Z(k^{-1}, d)$ and $\mu k^{-6} \frac{k}{(\beta +k)}\tilde  Z(k^{-1}, d)$ are dominant (as $b \to +\infty$) in the right hand side of \eqref{YY55}  with respect to the contribution
$O(b^{-s_4})$.  Those terms are both   negative. Thus we obtain that 
for $k  \notin { \mathcal K}_N $  the  roots of eq. \eqref{Spmu1} satisfy \eqref{A1}.

Consider \eqref{A2}.
We define $d_j$ by the system
\begin{equation} \label{Spmu1K}
  \frac{rb+k_j}{4rb(1+ r)} +  \mu Y_{k_j}(z, b, d)\vert_{z=1}=0, \quad \forall \ j=1,..., N.
\end{equation}
Let us prove that this system has a solution $d$ such that $|d|$ is small enough. We use relation \eqref{YY55}.
Since $
Z(k_j^{-1}, 0) =0$ and the Jacobian matrix defined by  
$$
J_{jl}(d)=\frac{\partial Z(k_j^{-1}, d)}{\partial   d_l}
$$ is invertible for small $|d|$ (it follows from definition  \eqref{SQpr} of $Z$)
  we can apply to eq. \eqref{Spmu1K} the Implicit Function Theorem and find
a solution $d$ of \eqref{Spmu1K}  such that $|d| \to 0$ as $b \to 0$. Now  relation \eqref{A2}
follows from Lemma \ref{YL}.
\end{proof}

\begin{proof}  of Prop. \ref{6.2}.
Proposition \ref{6.2} follows from the last lemma.
\end{proof}


\subsection{Conjugate spectral problem }
\label{sec8.5}

Standard computations show that the  corresponding conjugate spectral problem  is defined by the following equations and the boundary conditions:
\begin{equation}
\lambda \Delta \phi  =\nu \Delta^2  \phi  - U_y \tilde w_x,
\label{conj4S}
\end{equation}
\begin{equation}
\lambda \tilde w = \Delta  \tilde w   + \nu \phi_x, 
\label{conj3S}
\end{equation}
\begin{equation}
  \tilde w_x(x, y)\vert_{x=0, \pi} =0,
\label{boundNeumS}
\end{equation}
\begin{equation}
  \tilde w_y(x, y)\vert_{y=0} =\beta \tilde w(x, 0), \quad  \tilde w_y(x, y)\vert_{y=h} =\beta_1 \tilde w(x, h),
\label{boundDirS2}
\end{equation} 
\begin{equation}
 {\phi} (x,y)\vert_{y=0,h}   = \phi_y(x, y)\vert_{y=0, h}=0,
\label{boundstream2}
\end{equation}
\begin{equation}
 {\phi} (x,y)\vert_{x=0, \pi}   = \Delta \phi(x, y)\vert_{x=0,\pi}=0.
\label{boundstream2x}
\end{equation}
The Fourier decomposition of that  problem   is  as follows:
\begin{equation}
\lambda_k L_k \phi_k=\nu L_k^2 \phi_k  - k^2 U_y \tilde w_k,
\label{conj4}
\end{equation}
\begin{equation}
\lambda_k \tilde w_k=L_k \tilde w_k   + \nu  \phi_k,
\label{conj3}
\end{equation}
\begin{equation}
\frac{d\phi_k}{dy}(y)\vert_{y=0,h}= \phi_k(0)=\phi_k(h)=0,
\label{conj5}
\end{equation}
\begin{equation}
          \frac{d\tilde w_k}{dy}(y)\vert_{y=0}=\beta \tilde w_k(0), \quad  \frac{d\tilde w_k}{dy}(y)\vert_{y=h}=\beta_1 \tilde w_k(h).
\label{conj6}
\end{equation}

\subsection{ Eigenfunctions of $L$ and $L^*$ with zero eigenvalues}
\label{sec8.4}

Let us consider the eigenfunctions $e_k$ of $L$ with the zero eigenvalues. By the stream function,  they can be represented as     have the form
\begin{equation}
e_j=({\bf v}_j, \theta_j)^{tr},        
\label{eig1}
\end{equation}
where $j=1, ..., N$,  ${\bf v}_j=(D_y \psi_j,  -D_x \psi_j)^{tr}$ and
$$
\psi_j=\Psi(y) \sin(k_j x), \theta_j(y) =\Theta_j\cos(k_j x).
$$

Relations   \eqref{PhB} and \eqref{mainterm}  show that
\begin{equation}
\Psi_j(y)= y^2 \exp(-k_j y) + \tilde \psi_j ,    
\label{eig1a}
\end{equation}
where $||\tilde \psi_j|| \le c_1 b^{-c_2}$ for some $c_i >0$. 
Moreover, by \eqref{mainB1} one obtains
\begin{equation}
\Theta_j(y)=  \beta_j\big( \exp(-by) -  \exp(-k_jy) +    \tilde \theta_j(y) \big),  
\label{eig2}
\end{equation}
where $||\tilde \theta_j  ||< c_0  b^{-c_1}$ for some $c_i >0$,  and $\beta_j$ are constants.

The eigenfunctions $e_j=({\bf v}_j^*, \theta_j^*)^{tr}$ of $L^*$  with the zero eigenvalues satisfy similar relations.  To obtain them, we
consider conjugate spectral problem (\ref{conj4}), (\ref{conj3}).
Asymptotics of solutions 
of that system for large $b$ can be found as above. Without any loss of generality, we can set $\tilde w_k(0)=1$ in 
eq. (\ref{conj4}). Then,  to obtain a main term of asymptotics for $\phi_k$ as $b \to \infty$, we replace $U_y \tilde w_k$ by $C_U r b^4 \exp(-by)$ in that equation.  Then for bounded $k$
a simple computation gives
\begin{equation}
 \phi_k=\nu^{-1} r  C_U \Big(\exp(-by)- \exp(-ky)  + y(b -k) \exp(-ky)  + \hat \phi_k \Big)
\label{eig1mphi}
\end{equation}
where $|\hat \phi_k  | < c_3 b^{-c_4}$, $c_i >0$.
Substituting that relation into \eqref{conj3},
after some computations we obtain 
\begin{equation}
   \theta_j^*=\Theta_j^* (y) \cos(k_j x),
\label{eig1m}
\end{equation}
where $j=1, ..., N$ and
\begin{equation}
\Theta_j^*= \tilde a_j \Big((k_j y^2  + y)\exp(-k_j y) + \tilde \theta^*_j \Big),
\label{eig3m}
\end{equation}
where $||\tilde \theta^*_j  ||< c_4 b^{-c_5}$, $c_i >0$, $\tilde a_j$ are constants, which should be chosen to satisfy biorthogonality relations
$$
\langle e_j, e_i^* \rangle =\delta_{i, j},
$$
where $\delta_{i,j}$ stands for the Kronecker symbol.    For $v_j^*$ one has 
$v_j=(D_y \psi^*_j , - D_x \psi^*_j )^{tr}$.
As it follows from \eqref{eig1mphi}  the norms of the functions $\psi^*_j$ are small: 
\begin{equation}   \label{estz}
||\psi^*_j|| < C\nu^{-1}.
\end{equation}

The next lemma 
concludes the investigation of spectral properties of the operator $L$.

\begin{lemma} \label{simplespec}
{The eigenvalue $0$ of the operator $L$ has no generalized eigenfunctions.}
\end{lemma}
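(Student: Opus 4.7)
The plan is to use the standard duality argument for discrete spectra: a semisimple eigenvalue is characterized by the property that its eigenspace intersects trivially with the range of the operator, and this can be tested against the adjoint kernel via biorthogonality. Since we already know from Lemma \ref{resolv} and the remark that follows it that the spectrum of $L$ is discrete and each eigenvalue has finite algebraic multiplicity, a generalized eigenfunction at $0$ would, after passing to a minimal Jordan chain, yield a pair $(g, e)$ with $Lg = e$, $e \in \ker L$, $e \neq 0$.

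First I would set up the contradiction. Suppose such a pair exists. By Proposition \ref{6.2}, $\dim \ker L = N$ with basis $\{e_j\}_{j=1}^N$ described in Section \ref{sec8.4}, so $Lg = \sum_{j=1}^N c_j e_j$ for some scalars $c_j$ not all zero. The adjoint $L^*$ is the operator given by the conjugate spectral problem \eqref{conj4S}--\eqref{boundstream2x} (obtained by the integration-by-parts computation already made in Section \ref{sec8.5}), and the family $\{e_i^*\}_{i=1}^N$ of Section \ref{sec8.4} spans $\ker L^*$ and has been normalized to satisfy the biorthogonality relation $\langle e_j, e_i^*\rangle = \delta_{ij}$.

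Next I would pair both sides of $Lg = \sum_j c_j e_j$ with $e_i^*$. On one hand, using that $e_i^* \in \ker L^*$,
\begin{equation*}
\langle L g, e_i^* \rangle = \langle g, L^* e_i^* \rangle = 0.
\end{equation*}
On the other hand, by biorthogonality,
\begin{equation*}
\langle L g, e_i^* \rangle = \sum_{j=1}^N c_j \langle e_j, e_i^* \rangle = c_i.
\end{equation*}
Hence $c_i = 0$ for every $i$, contradicting $Lg \neq 0$. This shows that $0$ admits no Jordan chain of length $\geq 2$, i.e., no generalized eigenfunctions beyond the genuine eigenfunctions $e_j$.

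The main technical point to verify, and the only step that is not purely formal, is that the pairing $\langle L\,\cdot\,,\,\cdot\,\rangle = \langle\,\cdot\,, L^*\,\cdot\,\rangle$ really holds on the domains in question: the boundary conditions \eqref{bound5x}--\eqref{Maran1}, \eqref{boundNeum}--\eqref{boundDir} for $L$ and the conjugate conditions \eqref{boundNeumS}--\eqref{boundstream2x} for $L^*$ are precisely the ones that make all boundary terms arising from integration by parts on $\Omega$ vanish. Once this is recorded (it is a routine computation using $\nabla \cdot \mathbf{v}=0$, the Leray projection, and the Robin boundary terms canceling symmetrically because $L$ and $L^*$ share the same Robin coefficients $\beta,\beta_1$ on the temperature component), the duality argument above closes the proof.
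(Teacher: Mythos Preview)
Your proof is correct and follows essentially the same approach as the paper: assume a Jordan chain $Lg=\sum_j c_j e_j$ exists, pair against the adjoint eigenfunctions $e_i^*$, and use $\langle Lg, e_i^*\rangle=\langle g, L^* e_i^*\rangle=0$ together with biorthogonality to force all $c_i=0$. Your version is in fact more explicit than the paper's, which simply asserts $\langle b, e_k^*\rangle=0$ without spelling out the duality step.
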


\begin{proof} Let us check that generalized eigenfunctions are absent.
Since $0$ has a finite multiplicity, we can use the Jordan representation.  Assume that there exists a generalized eigenfunction  $e_g$.
Then there is a non-zero $b$ such that
$
L e_g=b=\sum_{j=1}^N b_l e_l
$
for some $b_l$, $ l \in \{1, ..., N\}$.  Then  $\langle b,  e_k^* \rangle=0$ for all $k \in \{1,..., N\}$. Eigenfunctions $e_k^*$ and $e_l$ are orthogonal
for $k \ne l$, and thus  all coefficients $b_l=0$. \end{proof}

\subsection{ Estimates for semigroup $\exp(Lt)$ }
\label{sec: 8.6}

The operator $L$ is sectorial and, according to  Prop. \ref{6.2},   satisfies  the Spectral Gap Condition. Therefore \cite{He}
for some $\kappa > 0$ and for all $v=(\tilde \omega, w)^{tr}$ such that
$$
\langle v, \tilde e_j \rangle =0, \quad j=1, ..., N,
$$
one has the following estimates 
\begin{equation}
|| \exp(Lt) v|| \le M \exp(-  \rho t) ||v||,  \quad
\label {semigroup1}
\end{equation}
\begin{equation}
|| \exp(Lt) v||_{\alpha} \le \bar M S_{\alpha}(t) \exp(- \rho t) ||v||
\label {semigroup2}
\end{equation}
Here $\alpha > 1/2$ and 
$$
S_{\alpha}(t) = t^{-\alpha}, \quad 0 < t \le \rho^{-1},
$$
and
$$
S_{\alpha}(t) = \rho^{-\alpha}, \quad  t > \rho^{-1}.
$$
The constants $M, \bar M$ and $\rho$ can depend on $b$ but they are independent of $\gamma$.
Estimates (\ref{semigroup1}) and (\ref{semigroup2}) will be used in the proof of  existence of the invariant manifold, see Appendix.

\section{Finite dimensional invariant manifold}
\label{sec: 9}

 In this section, we reduce the Navier -Stokes  dynamics to a system of ordinary differential equations 
  following  Sect. 4.  Let  $E_N$ be
the finite dimensional subspace
$E_N=Span \{e_1,  ..., e_{N} \}$
of the phase space ${\mathcal H}$, where $e_j=({\bf v}_j, \theta_j)^{tr}$ are the  eigenfunctions of the operator
$L$ with the  zero eigenvalues. Let $e_j^*=({\bf v}_j^*, \theta_j^*)^{tr}$ be the corresponding eigenfunctions of the conjugate operator
$L^*$.  We assume that $e_j^*$ and $e_j$  are biorthogonal.
Let ${\bf P}_N, {\bf Q}_N$ be  the  projection operators
\begin{equation}
{\bf P}_{N} z= \sum_{j=1}^{N} \langle z, e_j^* \rangle e_j, \quad {\bf Q}_N={\bf I} - {\bf P}_N.
\label{Pr2}
\end{equation}
We denote by ${\bf P}_{v, N}, \ {\bf P}_{w, N},  {\bf Q}_{v, N}$ and  ${\bf Q}_{w, N}$  the components of these operators. For  ${\bf P}_{v, N}$ and ${\bf P}_{w, N}$ one 
has
\begin{equation}
{\bf P}_{v, N} z= \sum_{j=1}^{N} \langle z, e_j^* \rangle {\bf v}_j, \quad {\bf P}_{w,N}=\sum_{j=1}^{N} \langle z, e_j^* \rangle { \theta}_j.
\label{Pr2v}
\end{equation}

We transform equations (\ref{eveq10}),(\ref{eveq11}) into a  system with "fast" and "slow" variables.
 Let us introduce the auxiliary functions  $R_{v}(X)$ and $R_{w}(X)$ by
$$
 R_{v}(X) =\sum_{j=1}^{N} X_j {\bf v}_j,  \quad R_w(X)= \sum_{j=1}^{N} X_j \theta_j,
$$
where $X=(X_1, ..., X_N)^{tr}$.
We represent $ \tilde {\bf v}$ and $w$ by
\begin{equation}
\tilde {\bf v}= R_{v}(X) + \hat {\bf v},  \quad w=  R_w(X) +\hat w,  
\label{om2}
\end{equation}
where
$$
{\bf P}_N (\hat {\bf v}, \hat w)^{tr}=0,
$$
and $\hat w, \hat {\bf v}$ and $X(t)$ are new unknown functions.

We  substitute relations (\ref{om2}) in eqs. (\ref{eveq10}) and (\ref{eveq11}).
 As a result,  one obtains  the system
\begin{equation}
\frac{dX_i}{dt}= \gamma   F_i( X, \hat {\bf v},\hat w),
\label{X2}
\end{equation}
\begin{equation}
\hat {\bf v}_t={\mathbb P}\Big(\nu \Delta \hat {\bf v} + \kappa {\bf e}\hat w+ \gamma {\bf Q}_{1,N} F(X, \hat {\bf v},\hat w) \Big),
\label{hom2}
\end{equation}
\begin{equation}
\hat w_t= \Delta \hat w  - \hat v_2 U_y +  \gamma {\bf Q}_{2,N} G( X, \hat {\bf v}, \hat w),
\label{hw2}
\end{equation}
where
\begin{equation}
F= \kappa {\bf e} g_1 (R_w(X) +w)-((R_v(X) + \hat {\bf v}) \cdot \nabla ) (R_v(X) + \hat {\bf v}) ,
\label{F2}
\end{equation}
\begin{equation}
G=  \eta_1 - ((R_v(X) + \hat {\bf v}) \cdot \nabla ) (R_w(X) + \hat w + u_1), 
\label{G2}
\end{equation}
\begin{equation}
F_i=\langle  F, {\bf v}_i^*\rangle + \langle G,  \theta_i^*\rangle.
\label{F2}
\end{equation}

We consider   equations (\ref{X2}), (\ref{hom2}) and (\ref{hw2})  in the domain
\begin{equation}
  {\mathcal W}_{\gamma, R_0, C^{(1)}, \alpha} =\{(X, \hat w, \hat {\bf v}): \ |X| <  R_0, \ ||\hat {\bf v}||_{\alpha} + 
   ||\hat w||_{\alpha}  < C^{(1)}\gamma \}.
\label{Dom}
\end{equation}
That domain is a tubular neighborhood of the ball ${\mathcal B}^N(R_0)$ and the parameter $C^{(1)}>0$ is independent of $\gamma$ for small $\gamma$.  The width of that
neighborhood is $C^{(1)} \gamma$.

\begin{lemma} \label{7.1}
{
Let $\delta  \in (0,1)$ and $\alpha \in (1/2, 1)$.
Assume $\gamma > 0$ is small enough:
$
\gamma < \gamma_0(N,  R_0,  b, \alpha, \delta, C^{(1)}).
$
Then the local semiflow $S^t$, defined by equations  (\ref{X2}),(\ref{hom2}), and (\ref{hw2}) has a locally invariant in the set ${\mathcal W}_{\gamma, R_0, C, \alpha} \subset {\mathcal H}$  
and locally attracting manifold ${\mathcal M}_{N}^{(1)}$ of dimension $N$. This manifold is  defined by
\begin{equation}
 \hat {\bf v}= \hat {\bf v}_0(X, \gamma),   \quad \hat w= 
   \hat w_0(X, \gamma),
\label{rW}
\end{equation}
where
$\hat {\bf v}_0(X, \gamma)$, $\hat w_0(X,\gamma)$ are maps from the ball ${\mathcal B}^{N}(R_0)$ to $H_{\alpha}$ and $\tilde H_{\alpha}$, respectively.
They are bounded in $C^{1+\delta}$ -norm :
\begin{equation}
  |\hat {\bf v}_0(X, \gamma)|_{C^{1+\delta}({\mathcal B}^{N}(R_0))} < C_1\gamma, 
\label{rWe}
\end{equation}
\begin{equation}
   |\hat w_0(X, \gamma)|_{C^{1+\delta}({\mathcal B}^{N}(R_0))} < C_2 \gamma,  
\label{rWe1}
\end{equation}
constants $C_i >0$ are uniform in $\gamma$. The restriction of  the semiflow $S^t$ on ${\mathcal M}_{N}$  is defined by the system of differential equations
\begin{equation}
  \frac{dX_i}{dt}=\gamma (V_{i}(X)  +  \tilde V_i(X, \gamma)), 
\label{maineq1}
\end{equation}
where    
\begin{equation}  \label{Vpm}
V_i(X)= F_i(X, 0, 0)
\end{equation}
and the corrections $\tilde V_i(X, \gamma)=
 F_i( X, \hat {\bf v}_0(X,\gamma),\hat w_0(X, \gamma))   - F_i(X, 0,0)
$ satisfy the estimates
\begin{equation}
  |\tilde V_i|, |D_X \tilde V_i|  < c_1 \gamma^s,   \quad  s>0.
\label{maineq4}
\end{equation}
}
\end{lemma}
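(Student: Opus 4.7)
The plan is to apply a standard invariant manifold theorem from \cite{He,Bates,CLu}, using the decomposition $\mathcal{H} = E_N \oplus {\bf Q}_N\mathcal{H}$ and treating the slow $X$-equation together with the fast $(\hat{\bf v},\hat w)$-equation as a coupled slow-fast system. Proposition \ref{6.2} supplies the spectral gap: on ${\bf Q}_N\mathcal{H}$ the sectorial operator $L$ has spectrum contained in $\{\mathrm{Re}\,\lambda \le -\rho\}$ with $\rho = C_Nb^{-c_N}>0$, so the semigroup estimates \eqref{semigroup1}--\eqref{semigroup2} of Sect. \ref{sec: 8.6} hold on the stable complement. The $X$-equation \eqref{X2} evolves at speed $O(\gamma)$, which after choosing $\gamma$ sufficiently small compared to $\rho$ yields the required separation of time scales.

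First I would verify that the nonlinearities $F$ and $G$ in \eqref{F2}--\eqref{G2} are locally $C^{1+\delta}$-maps from $\mathcal{H}_\alpha$ into $\mathcal{H}$ with bounded Lipschitz constants on the tubular neighborhood ${\mathcal W}_{\gamma,R_0,C^{(1)},\alpha}$. This follows from the Sobolev embeddings \eqref{emb1}--\eqref{emb2}: the bilinear advection terms $(R_v(X)+\hat{\bf v})\cdot\nabla(\cdot)$ are controlled by $|\cdot|_\infty\|\cdot\|_\alpha$ for $\alpha\in(1/2,1)$, and $R_v,R_w$ depend analytically on the finite-dimensional parameter $X\in {\mathcal B}^N(R_0)$. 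In particular, on the specified domain the full nonlinearity $\gamma(F,G)$ has $C^{1+\delta}$-Lipschitz constant $O(\gamma)$ into $\mathcal{H}$ uniformly in $X$.

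Next I would construct the graph $(\hat{\bf v}_0(X,\gamma),\hat w_0(X,\gamma))$ via the Lyapunov--Perron fixed point equation
\begin{equation*}
(\hat{\bf v},\hat w)(t) = \int_{-\infty}^{t} e^{L(t-s)}\,\gamma\,{\bf Q}_N\bigl(F,G\bigr)\bigl(X(s),\hat{\bf v}(s),\hat w(s)\bigr)\,ds,
\end{equation*}
where $X(s)$ is determined backward in time by \eqref{X2} restricted to the candidate manifold passing through a prescribed point $X\in{\mathcal B}^N(R_0)$. Using \eqref{semigroup1}--\eqref{semigroup2}, the kernel $e^{L(t-s)}$ on ${\bf Q}_N\mathcal{H}$ obeys $\|e^{Lt}\|_{\mathcal{H}\to\mathcal{H}_\alpha}\le \bar M S_\alpha(t)e^{-\rho t}$, whose $L^1$-integral in $t$ is finite and independent of $\gamma$. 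Thus for $\gamma<\gamma_0(N,R_0,b,\alpha,\delta,C^{(1)})$ small enough, the right-hand side is a contraction on the ball of radius $C^{(1)}\gamma$ in $C^0({\mathcal B}^N(R_0);\mathcal{H}_\alpha)$, giving a unique fixed point of order $O(\gamma)$. The standard higher-regularity argument (differentiating the integral equation in $X$ and iterating on the resulting linear problem, and then invoking the H\"older-continuity of $DF,DG$) promotes this to a $C^{1+\delta}$-bound, yielding \eqref{rWe}--\eqref{rWe1}. Local invariance and local attraction follow from the spectral gap together with the cone inclusion argument of \cite{He,Bates}.

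Finally, substituting $(\hat{\bf v}_0(X,\gamma),\hat w_0(X,\gamma))$ back into \eqref{X2} gives the reduced equation \eqref{maineq1} with $V_i(X)=F_i(X,0,0)$, and the error
\[
\tilde V_i(X,\gamma) = F_i\bigl(X,\hat{\bf v}_0(X,\gamma),\hat w_0(X,\gamma)\bigr) - F_i(X,0,0)
\]
obeys \eqref{maineq4} by the Lipschitz property of $F_i$ in $(\hat{\bf v},\hat w)$ combined with the bounds \eqref{rWe}--\eqref{rWe1}; the exponent $s>0$ reflects the slight loss incurred from the $C^{1+\delta}$ regularity. The principal technical obstacle is checking the Lyapunov-Perron contraction with the two different norms $\|\cdot\|$ and $\|\cdot\|_\alpha$ simultaneously, since $F$ and $G$ target $\mathcal{H}$ but the graph lives in $\mathcal{H}_\alpha$; this is handled precisely because $S_\alpha(t)e^{-\rho t}$ is integrable at $t=0$ thanks to $\alpha<1$ and at $t=\infty$ thanks to the spectral gap $\rho>0$ supplied by Prop.~\ref{6.2}.
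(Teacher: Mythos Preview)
Your proposal is correct and follows essentially the same route as the paper: both reduce the lemma to a direct application of the invariant manifold theory in \cite{He} (the paper cites Theorem~6.1.7 there explicitly and simply verifies its numerical hypotheses $M_0,\lambda,M_2,M_3=O(\gamma)$, $\theta_p<1$, etc., using the same semigroup estimates \eqref{semigroup1}--\eqref{semigroup2} and Sobolev bounds you invoke). The only ingredient the paper makes explicit that you leave implicit is the truncation $\hat F\mapsto \hat F\chi_{R_0}(X)$ of the slow equation; this is needed so that the $X$-flow is globally defined backward in time (the quadratic nonlinearity would otherwise obstruct your phrase ``$X(s)$ is determined backward in time''), turning the invariant manifold for the truncated system into a locally invariant one for the original.
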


This assertion is proved in  Appendix.

Due to Theorem on persistence of hyperbolic sets  \cite{Katok}, for sufficiently small $\gamma$  we can remove  small corrections $\tilde V_i$ in the right hands of  (\ref{maineq1}).    Then, after a time rescaling,  we obtain   from \eqref{maineq1}  the following system of differential equations with quadratic nonlinearities:  
\begin{equation}
   \frac{dX_i}{dt}=V_i(X)=   { K_i}(X)    + { M_i}(X) + f_i,
\label{gks}
\end{equation}
which does not involve the small parameter $\gamma$.
Let us note that $K_i$ and $M_i$  can be represented  as
\begin{equation}\label{Gpm}
  K_i( X) =\sum_{j,l=1}^N K_{ijl} X_j X_l,  \quad M_i( X) = \sum_{j=1}^N M_{ij}X_j.
\end{equation}

Using the stream- function representation of the eigenfunctions ${\bf v}_j$  we see that the coefficients $K_{ijl}$ and $M_{ij}$ in (\ref{Gpm}) can be computed  by the relations
\begin{equation}
   M_{ij}(u_1(\cdot, \cdot))=
\langle \{\psi_j,  \theta_i^*  \},  u_1\rangle,
\label{MatrM}
\end{equation}
and for large $\nu$
\begin{equation}
   |K_{ijl}-\langle \{\psi_j, \theta_l\},  \theta_i^{*} \rangle| < c_0\nu^{-1}.
\label{G+}
\end{equation}
  Here we have used estimate (\ref{estz}), which implies that the scalar products,  where the fluid components ${\bf v}_j^*$ of the conjugate
eigenfunctions $e_j^*$ are involved, have the order $O(\nu^{-1})$.

\section{Control of linear terms  in system (\ref{gks})}
\label{sec: 10}

In this section, we  show that the coefficients $M_{ij}$ involved in system  (\ref{gks})
are completely controllable by the function $u_1(x,y)$.

To calculate the entries of $M_{ij}$ 
  we use the relation
\begin{equation}
   M_{ij}(u_1(\cdot,\cdot))=
\langle \{\psi_j,  \theta_i^*  \},  u_1\rangle.
\label{M}
\end{equation}
 Using     \eqref{eig1}, \eqref{eig1a},  \eqref{eig2}, \eqref{eig1m} and \eqref{eig3m} one obtains
\begin{equation}
   M_{ij}(u_1(\cdot,\cdot))=\frac{1}{2}\int_0^{2\pi} \int_0^h
[\tilde \zeta_{i j}(y)\cos((k_i +k_j)x)
+  \zeta_{i j}\cos((k_i - k_j)x)]
 u_1(x,y)dx dy,
\label{M++}
\end{equation}
where
$$
\zeta_{ i j}=k_j\Psi_{j}(y)\frac{d \Theta_{i}^*(y)}{dy} + k_i \frac{d\Psi_{j}(y)}{dy} \Theta_{i}^*(y),
$$
$$
\tilde \zeta_{i j}=k_j\Psi_{j}(y)\frac{d \Theta_{i}^*(y)}{dy} - k_i \frac{d\Psi_{j}(y)}{dy} \Theta_{i}^* (y).
$$
Using relations \eqref{eig1a},  \eqref{eig2} and \eqref{eig3m}, one obtains
\begin{equation}
   \tilde \zeta_{i j}= \bar a_i \bar b_j \tilde \eta_{ij}, \quad \zeta_{i j}= \bar a_i \bar b_j \eta_{ij},
\label{M2}
\end{equation}
where $\bar a_i, \bar b_j$ are some non-zero coefficients, and 
\begin{equation} 
    \eta_{i j}=y^2 (k_j+ 2k_i   + 2 k_i^2 y - 2k_i^2 k_j y^2) \exp(-(k_i +k_j) y)  + O(b^{-r_0}),  
\label{eta1}
\end{equation}
\begin{equation}
\tilde \eta_{i j}=y^2 (k_j - 2k_i  +2k_i(k_j -k_i)y) \exp(-(k_i +k_j) y) +  O(b^{-r_0})
\label{eta2}  
\end{equation}
for some $r_0 >0$.

\begin{lemma} \label{8.1}
{
For each   $N \times N$  matrix $T$ and each $\delta >0$  there
exists a $2\pi$ -periodic in $x$ smooth function $g_1(x,y)$
 such that for sufficiently large $b$ one has
\begin{equation}
 | M_{jl}(u_1(\cdot,\cdot))  - T_{jl}| < \delta, \ \forall j,l=1,...,N,
\label{contM1}
\end{equation}
where $u_1$ is defined via $g_1$ by  \eqref{Uu1}.

}
\end{lemma}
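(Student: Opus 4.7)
The plan is to reduce the claim to a linear surjectivity question and then solve it by Fourier decomposition with a carefully chosen lacunary $\mathcal{K}_N$. First I would note that the nonlinear substitution $g_1 = -u_1/(U - u_0 + \gamma u_1)$ is a smooth near-identity bijection for small $\gamma$, since the assumption $u_0 > |U|_\infty$ keeps $U - u_0$ bounded away from zero, and the boundary conditions \eqref{gbc}, \eqref{gbc1} on $g_1$ translate into $u_1|_{y=0,h} = 0$, $\partial_y u_1|_{y=0,h} = 0$ together with Neumann in $x$. Hence it suffices to exhibit a smooth $u_1$ obeying these boundary conditions with $M_{jl}(u_1) = T_{jl}$ exactly; the $O(\gamma)$ error from inverting the map can then be absorbed into $\delta$ by taking $\gamma$ small.

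The linear map $u_1 \mapsto (M_{jl}(u_1))_{j,l=1}^N$ is most naturally analyzed via the Fourier expansion in $x$,
\begin{equation*}
u_1(x,y) = A_0(y) + \sum_{k \geq 1}\bigl(A_k(y)\cos(kx) + B_k(y)\sin(kx)\bigr),
\end{equation*}
in which the sine coefficients do not contribute to $M_{jl}$ and $A_k(y)$ contributes only when $k = k_j + k_l$ or $k = |k_j - k_l|$. I would exploit the freedom left open by Proposition~\ref{6.2} and choose $\mathcal{K}_N$ lacunary (say $k_i = 4^i$) so that all sums $k_j + k_l$ with $1 \leq j \leq l \leq N$ and all nonzero differences $|k_j - k_l|$ with $1 \leq j < l \leq N$ are pairwise distinct positive integers. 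With this choice the $N^2$ equations $M_{jl}(u_1) = T_{jl}$ decouple into independent blocks indexed by unordered pairs $\{j,l\}$.

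For an off-diagonal block with $j < l$ I would set $A_{|k_l - k_j|} \equiv 0$ and use $A := A_{k_j+k_l}$, so that the two equations read $\int_0^h \tilde\zeta_{jl} A\,dy = c_1$ and $\int_0^h \tilde\zeta_{lj} A\,dy = c_2$ for prescribed constants. Such a smooth $A$ vanishing with its first derivative at $y = 0, h$ exists for any pair $(c_1, c_2)$ provided $\tilde\zeta_{jl}$ and $\tilde\zeta_{lj}$ are linearly independent in $L^2[0,h]$. By \eqref{eta2}, to leading order both lie in the two-dimensional space spanned by $y^2 e^{-(k_j+k_l)y}$ and $y^3 e^{-(k_j+k_l)y}$, with coefficient matrix
\begin{equation*}
\begin{pmatrix} k_l - 2k_j & 2k_j(k_l - k_j) \\ k_j - 2k_l & -2k_l(k_l - k_j) \end{pmatrix},
\end{equation*}
whose determinant equals $-2(k_l - k_j)(k_j^2 - 4k_j k_l + k_l^2)$. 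The quadratic factor vanishes only when $k_l/k_j = 2 \pm \sqrt 3$, impossible for distinct positive integers, so the determinant is nonzero, and the $O(b^{-r_0})$ corrections in \eqref{eta1}, \eqref{eta2} preserve linear independence for large $b$. Diagonal blocks ($j = l$) are simpler: I would set $A_0 \equiv 0$ and use $A_{2k_j}$, which suffices since $\tilde\zeta_{jj}$ is a nonzero scalar multiple of $y^2 e^{-2k_j y}$.

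Summing the resulting Fourier pieces yields a smooth $u_1$ with $M_{jl}(u_1) = T_{jl}$ exactly, and inverting the defining relation to obtain $g_1$ completes the construction up to the $O(\gamma)$ tolerance. The main obstacle is the linear-independence computation for $(\tilde\zeta_{jl}, \tilde\zeta_{lj})$: it rests on the explicit leading-order shape of the eigenfunctions from Section~\ref{sec8.4} and on the fact that the lacunary choice of $\mathcal{K}_N$ makes each block a genuine $2 \times 2$ problem. Without Fourier decoupling by lacunarity, one would face entangled linear algebra involving all $N^2$ matrix entries of $T$ simultaneously, which would be substantially harder to solve explicitly.
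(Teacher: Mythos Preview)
Your argument is correct but follows a different route from the paper. Both proofs Fourier-expand $u_1$ in $x$, reach $M_{ij}=\tfrac12\int_0^h(\tilde\zeta_{ij}\,\hat u_{k_i+k_j}+\zeta_{ij}\,\hat u_{|k_i-k_j|})\,dy$, and end by solving $2\times 2$ linear systems indexed by unordered pairs $\{i,j\}$. You set the difference-frequency modes to zero and work only with $\hat u_{k_i+k_j}$, which needs your lacunary hypothesis so that all sums $k_i+k_j$ are distinct; the $2\times 2$ system is then the pair $(\tilde\zeta_{ij},\tilde\zeta_{ji})$, and your determinant check $-2(k_l-k_j)(k_j^2-4k_jk_l+k_l^2)\neq 0$ resolves it. The paper goes the other way: it suppresses the sum-frequency contribution and uses the difference modes, but parametrizes not by $\hat u_n$ directly but by the moments $V^{(p)}_{n,m}=\int_0^h y^{2+p}e^{-my}\hat u_n\,dy$, proving a small auxiliary lemma (linear independence of $\{y^p e^{-my}\}$) that any finite collection of such moments is realizable by a smooth $\hat u_n$ vanishing to first order at $y=0,h$. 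The unknowns become $(V^{(0)}_{|k_i-k_j|,k_i+k_j},\,V^{(1)}_{|k_i-k_j|,k_i+k_j})$, and decoupling uses only the \emph{automatic} injectivity of the map $(i,j)\mapsto(|k_i-k_j|,k_i+k_j)$ on unordered pairs, so no constraint on $\mathcal{K}_N$ is imposed. This generality is exactly what the paper needs downstream: the later verification of the $p$-Decomposition condition forces $\mathcal{K}_N=\{\bar k_1,\dots,\bar k_p\}\cup\{\bar k_i+\bar k_j:i\le j\le p\}$, and for that set the pairwise sums are \emph{not} all distinct (e.g.\ $(\bar k_1+\bar k_2)+(\bar k_3+\bar k_4)=(\bar k_1+\bar k_3)+(\bar k_2+\bar k_4)$), so your lacunary choice $k_i=4^i$ is incompatible with it. Your proof of Lemma~\ref{8.1} stands on its own, but to mesh with the global argument you would need the paper's moment device or some other way to avoid restricting $\mathcal{K}_N$.
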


\begin{proof}
We are looking for $u_1$ satisfying \eqref{contM1}. Just $u_1$ is found we use relation \eqref{Uu1} and define
$g_1$ as
$$
g_1=\frac{u_1}{ (U-u_0) (1 + \gamma u_1)}
$$
choosing a sufficiently large $u_0>0$ such that the function $U -u_0$ has no roots.
 
We represent $u_1(x,y)$ by  a Fourier series:
$$
u_1(x,y)=\hat u_0(y) + \sum_{k=1}^{+\infty} \hat u_{k}(y) \cos( k x). 
$$
Then relation  (\ref{M++}) gives
\begin{equation}
   M_{ij}(u_1)=\frac{1}{2} \int_0^h
\big (\tilde \zeta_{i j}(y)\hat u_{k_i +k_j}(y) 
+  \zeta_{i j}\hat u_{|k_i -k_j|}(y) \big)  dy.
\label{M2++}
\end{equation}
 We introduce the auxiliary quantities
$V_{n,m}^{(p)}$  by
\begin{equation}
   V_{n,m}^{(p)}= \int_0^h  y^{2+p}  \exp(-my)  \hat u_n(y) dy,
\label{Vpm}
\end{equation}
 where $n>0, m, p \ge 0$ are integer indices. 
Let us prove an auxiliary assertion.

\begin{lemma} {For any  $a_{m, p}$, where $m=1,..., M$ and $p=0,1,2$,  we can find a  function $W(y) \in C^2([0,h])$ such that
\begin{equation}
    \int_0^h  y^p \exp(-my) W(y) dy=a_{m, p},    \quad \forall  \ m=1,..., M, \ p=0,1,2.
\label{Vpm}
\end{equation}
and 
\begin{equation}
    W(0)=W(h)=W'(0)=W'(h)=0.
\label{Vpm2}
\end{equation}
}
\end{lemma}

\begin{proof} Consider the map $W(\cdot) \to a_{m,p}$ defined by \eqref{Vpm} on the space $E_3$
of  the functions $\in C^2([0,h])$ and satisfying \eqref{Vpm2}. The range of that map  is a  linear subspace
of ${\mathbb R}^{3M}$.  If this assertion is not fulfilled, there is a vector orthogonal to the closure of that linear subspace and thus there exist numbers $b_{m,p}$  such that 
$$\sum_{m=1}^M \sum_{p=0}^2 |b_{m,p}|=1$$
and 
$$
\sum_{m=1,..., M} \sum_{p=0,1,2} \int_0^h  b_{m,p} y^p \exp(-my) W(y) dy=0
$$
for all  $W(y) \in C^2([0,h])$ such that \eqref{Vpm2} holds.  This means that
$$
\sum_{m=1,..., M} \sum_{p=0,1,2}   b_{m,p} y^p \exp(-my) = 0, \quad \forall \ y \in (0,h),
$$
i.e., the  functions $y^p \exp(-my)$ are linearly dependent that is not the case. \end{proof}

Using that lemma we assume that 
all $V_{n,m}^{(p)}=0$ for all $n=m=k_i +k_j$.
Then by \eqref{M2++} and \eqref{Vpm} estimate   (\ref{contM1}) can rewritten as follows:
\begin{eqnarray} \label{M3}
 |(k_j +2k_i)V^{(0)}_{|k_j -k_i|, k_j +k_i} +  2k_i^2
V^{(1)}_{|k_j -k_i|, k_j +k_i}  -  \nonumber \\
-2 k_j k_i^2  V^{(2)}_{|k_j -k_i|, k_j +k_i} +  \tilde B_{ij}(b)-  \bar T_{ij}| <\delta,
\end{eqnarray}
where $\tilde B_{ij}$, \ $i,j=1,..., N$ satisfy 
$
|\tilde B_{ij}| <  c_1 b^{-c_{2}}, \quad c_{1}, c_2 >0
$
and  $ \bar T_{ij}= T_{ij}  \tilde a_{i} \tilde b_j$, where $\tilde a_i, \tilde b_j$ are coefficients.

Estimate \eqref{M3} shows that, in order to prove the assertion,  it suffices to find 
$X_{n, m}=V^{(0)}_{n, m} $ and $Y_{n,m}=V^{(1)}_{n, m} $ satisfying the system
\begin{equation} \label{VV03}
  (k_j + 2k_i) X_{|k_j -k_i|, k_j +k_i}   
+ 2k_i^2   Y_{|k_j -k_i|, k_j +k_i} =\bar T_{ij}, 
\end{equation}
where $\ i, j\in {1,..., N}$.

We decompose that system into the symmetric and antisymmetric parts.  Then we obtain
\begin{equation} \label{VV03s}
   \frac{3}{2} (k_j + 2k_i) X_{|k_j -k_i|, k_j +k_i} +  
 (k_i^2 +k_j^2)  Y_{|k_j -k_i|, k_j +k_i} =\bar T_{ij}^{s},  i \ge j, 
\end{equation}
\begin{equation} \label{VV03a}
  \frac{1}{2} (k_i - k_j) X_{|k_j -k_i|, k_j +k_i} +  
 (k_i^2 - k_j^2)  Y_{|k_j -k_i|, k_j +k_i}  =\bar T_{ij}^{a},   \quad  i > j, 
\end{equation}
for some $\bar T_{ij}^{s}$ and $\bar T_{ij}^{a}$, where $\ i, j\in {1,..., N}$.
Consider the map $R_N: (i,j) \to |k_j -k_i|, k_j +k_i$ defined on the set of the pairs $(i,j)$
where $i=1,..., N$, $j=1,..., N$ and $i \ge j$.  That map  is  an injection, therefore,
  the  system of equations \eqref{VV03s} and \eqref{VV03a} can be represented as a set of  independent systems of two linear equations for two unknowns.  For each $(i,j)$ the corresponding linear $2 \times 2$ system is resolvable that can be checked by
the determinant calculation.\end{proof}

Let us formulate a lemma about control $f$ by $\eta_1$.

\begin{lemma} \label{fcontrol}
{ Given a vector $f=(f_1, ..., f_{N})$, there exists a
 smooth $2\pi$-periodic in $x$  function $\eta_1(x, y)$ such that
$$
\langle  \tilde \theta_i, \eta_1 \rangle =  f_i, \quad i=1,..., N.
$$
}
\end{lemma}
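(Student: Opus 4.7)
The plan is to seek $\eta_1$ as an explicit finite Fourier series, using the fact that the conjugate eigenfunctions $\theta_i^{*}=\Theta_i^{*}(y)\cos(k_i x)$ are already separated in $x$ and $y$ with the $k_i$ pairwise distinct positive integers. Specifically, I take
\begin{equation*}
\eta_1(x,y)=\sum_{j=1}^{N} c_j\,\Theta_j^{*}(y)\cos(k_j x),
\end{equation*}
where $c_1,\dots,c_N$ are scalars to be determined. By construction $\eta_1$ is $C^\infty$ on $\Omega$ and $2\pi$-periodic in $x$, as required, so the only thing left is to solve the linear system $\langle \theta_i^{*},\eta_1\rangle=f_i$.

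The key simplification is that the $x$-Fourier modes $\cos(k_j x)$, $j=1,\dots,N$, are pairwise $L^2$-orthogonal on $[0,\pi]$ because the $k_j\in\mathbb{Z}_{+}$ are distinct. With the inner product \eqref{inprod} this gives
\begin{equation*}
\langle \theta_i^{*},\eta_1\rangle
=\sum_{j=1}^{N} c_j\Big(\int_0^{\pi}\cos(k_i x)\cos(k_j x)\,dx\Big)\Big(\int_0^{h}\Theta_i^{*}(y)\Theta_j^{*}(y)\,dy\Big)
=\frac{\pi}{2}\,c_i\,\|\Theta_i^{*}\|_{L^2[0,h]}^{2}.
\end{equation*}
Hence the $N\times N$ coupling matrix is diagonal, and the system decouples completely.

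It remains only to check that the diagonal entries are nonzero. By \eqref{eig3m}, the profile $\Theta_i^{*}(y)$ is, up to a correction of order $b^{-c_5}$, the nontrivial function $\tilde a_i(k_i y^2+y)\exp(-k_i y)$, which has $L^2[0,h]$-norm bounded away from zero uniformly for large $b$. Therefore $\|\Theta_i^{*}\|_{L^2[0,h]}^{2}>0$ and one may set
\begin{equation*}
c_i=\frac{2 f_i}{\pi\,\|\Theta_i^{*}\|_{L^2[0,h]}^{2}},\qquad i=1,\dots,N,
\end{equation*}
which yields the required $\eta_1$. There is no substantive obstacle: once one observes that the $x$-orthogonality diagonalizes the map $\eta_1\mapsto(\langle\theta_i^{*},\eta_1\rangle)_{i=1}^{N}$, the lemma reduces to a trivial $N$-term linear system with nonvanishing diagonal.
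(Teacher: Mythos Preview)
Your proof is correct and is exactly the natural elementary argument the paper has in mind; the paper itself omits the proof entirely, simply stating ``We omit an elementary proof.'' Your use of the $x$-orthogonality of the modes $\cos(k_j x)$ on $[0,\pi]$ to diagonalize the linear system, together with the nonvanishing of $\|\Theta_i^{*}\|_{L^2[0,h]}$ from \eqref{eig3m}, is precisely the expected route.
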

We omit an elementary proof.

In coming sections we investigate system  (\ref{maineq1}) mainly following works \cite{Sud} and \cite{Stud,Vak4}. 

\section{Quadratic systems}
\label{sec:5}

System (\ref{gks}) defines a local semiflow $S^t(f,{ M})$ in the ball ${\mathcal B}^N(R_0) \subset {\mathbb R}^N$ of the radius $R_0$
centered at $0$. We shall consider the vector $f$ and the matrix $ M$ as parameters of this semiflow whereas
the entries $K_{ijl}$ will be fixed.

Let us formulate an assumption on entries $K_{ijl}$.
We  represent  $X$ as a pair $X=(Y, Z)$, where
$$
Y_l=X_{l},  \quad l \in I_p, \quad Z_j=X_{j+p}, \quad j \in J_p,
$$
where  $I_p=\{1,..., p \}$ and $J_p=\{1, ..., N-p \}$.
Then  system (\ref{gks}) can be rewritten as
 \begin{equation}
   \frac{dY}{dt}=   { K}^{(1)} (Y) + { K}^{(2)} (Y, Z) + { K}^{(3)}(Z) + { R} Y + { P} Z
 +  f,
\label{gks2}
\end{equation}
 \begin{equation}
   \frac{dZ}{dt}=  \tilde { K}^{(1)} (Y) + \tilde{ K}^{(2)} (Y, Z) + \tilde { K}^{(3)} (Z)  +  \tilde { R} Y +  \tilde { P} Z
 +  \tilde f,
\label{gks3}
\end{equation}
where  for $i=1,..., p$ 
\begin{equation}
   { K}^{(1)}_i(Y)=  \sum_{j \in I_p} \sum_{l \in I_p}  K_{ijl}^{(1)} Y_{j} Y_{l}, \quad { K}^{(3)}_i(Z)=  \sum_{j \in J_p} \sum_{l \in J_p}  K_{ijl}^{(3)} Z_{j} Z_{l},
\label{KY1}
\end{equation}
\begin{equation}
   { K}^{(2)}_i(Y,Z)=  \sum_{j \in I_p} \sum_{l \in J_p}  K_{ijl}^{(2)} Y_{j} Z_{l}, 
\label{KY2}
\end{equation}
and for $k=1,..., N-p$
\begin{equation}
    \tilde { K}_k^{(1)}(Y)= \sum_{j\in I_p} \sum_{l \in I_p}  \tilde K_{kjl}^{(1)} Y_{j} Y_{l}, \quad \tilde { K}_k^{(3)}(Z)=\sum_{j \in J_p} \sum_{l \in J_p}  \tilde  K_{kjl}^{(3)} Z_{j} Z_{l},
\label{KY3}
\end{equation}
\begin{equation}  \label{K2YZ}
\tilde { K}_k^{(2)}(Y,Z)= \sum_{j \in I_p} \sum_{l \in J_p}  \tilde  K_{kjl}^{(2)} Y_{j} Z_{l}.
\end{equation}
Note that
\begin{equation}
     \tilde K_{kjl}^{(1)} =K_{k+p, jl},  \quad k=1,...,p,  \quad  j,l =1,..., p.
\label{KGBfollowsyou}
\end{equation}

The linear terms $MX$ take the form 
 \begin{equation}
 ({ R} Y)_i= \sum_{j \in I_p} R_{ij} Y_j,  \quad  (\tilde { R} Y)_k= \sum_{j \in I_p} \tilde R_{kj} Y_j,
\label{gks3c}
\end{equation}
\begin{equation}
  ( { P} Z)_i=\sum_{j \in J_p}  P_{ij} {Z_j},
 \quad  ({\tilde  P} Z)_k= \sum_{j \in J_p} \tilde P_{kj} Z_j,
\label{gks3d}
\end{equation}
and $f=(f_{1}, ..., f_{p}), \ \tilde f=(\tilde f_1, ..., \tilde f_{N-p})$.

We denote by  $S^t({\mathcal P})$ the local semiflow defined by
(\ref{gks2}) and (\ref{gks3}). Here ${\mathcal P}$ is a semiflow parameter, $ {\mathcal P}=\{f, \tilde f, {P}, \tilde { P},
{R}, \tilde { R} \}$.  Let us formulate an assumption on quadratic terms $K_i(X)$. 
\vspace{0.2cm}

{\bf $p$-Decomposition Condition} \label{ass40}
{\em
Suppose  entries $K_{ijl}$ satisfy  the following condition.
For some $p$ there exists a decomposition $X=(Y, Z)$, where $Y \in {\mathbb R}^p$ and $Z \in {\mathbb R}^{N-p}$ such that for all   $b_{jl}$
 the  linear system  
\begin{equation}
	     \sum_{i \in J_p} \tilde K_{ijl}^{(1)}  \chi_i= b_{jl}, \quad l, j \in I_p
\label{barK}
\end{equation}
has a solution $\chi$.
}
\vspace{0.2cm}

Clearly that for $N > p^2 + p$ and  generic matrices $K$ this condition is valid.

Let us formulate  some conditions to  the matrices ${ R}, \tilde { R}, {P}$ and $\tilde { P}$.
Let $\xi >0$ be  a  parameter.  We suppose that
\begin{equation}
  \tilde P_{ij}=-\xi^{-1} \delta_{i,j},    \quad  i=1,..., N-p, \ j=1, ...,
\label{gks4c}
\end{equation}
\begin{equation}
  \tilde R_{ij}=0, \quad  \tilde f_i=0, \quad i=1,..., N-p, \ j=1,...,p,
\label{gks4d}
\end{equation}
\begin{equation}
   P_{ij}=\xi^{-1} T_{ij}, \quad |T_{ij}| < C_0, \quad i=1,..., p, \ j=1,..., N-p,
\label{gks5c}
\end{equation}
\begin{equation}
 |R_{ij}| < C, \quad i=1,..., p, \ j =1,..., p,
\label{gks5d}
\end{equation}
Let us define the domain in ${\mathbb R}^N$:
\begin{equation} \label{domainR0}
{\mathcal  W}_{\bar R, C^{(2)},\xi}=\{X=(Y, Z):   \quad  |Y| < \bar R,\ |Z| < C^{(2)}\xi   \},  \quad  C^{(2)}>0.
\end{equation}
Note that ${\mathcal  W}_{\bar R,C^{(2)},\xi}$ is a tubular neigborhoof of the ball ${\mathcal B}^p(\bar R)$
of the small width $C^{(2)}\xi$.

\begin{lemma} \label{quadr1} {Assume  (\ref{gks4c}),  (\ref{gks4d}), (\ref{gks5c}) and (\ref{gks5d})  hold and $\bar R >0, c_0>0$ are constants.
For sufficiently small positive $\xi < \xi_0(\bar R, \delta, M,K,f, p, N, c_0)$ and $C^{(2)} >c_0$ the local semiflow $S^t({\mathcal P})$ defined by  system (\ref{gks2}), (\ref{gks3}) 
has a locally invariant in the domain ${\mathcal  W}_{\bar R, C^{(2)},\xi}$
and locally attracting manifold $\tilde {\mathcal M}_p^{(2)}$.  This manifold is  defined by equations
\begin{equation}
  Z= \xi (\tilde { K}^{(1)}(Y)  +  W(Y,\xi)), \quad Y \in {\mathcal B}^p(\bar R)
\label{gks6c}
\end{equation}
where $W$ is a $C^{1+\delta}$ smooth map defined on the ball ${\mathcal B}^p(\bar R)$ to ${\mathbb R}^{N-p}$ and such that
\begin{equation}
|W(\cdot, \xi)|_{C^1({\mathcal B}^p(\bar R))}  < C_1\xi^s, \quad s > 0.
\label{gks7c}
\end{equation}}
\end{lemma}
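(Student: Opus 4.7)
The strategy is to construct $\tilde{\mathcal M}_p^{(2)}$ as a graph $Z = \xi(\tilde K^{(1)}(Y) + W(Y,\xi))$ and determine $W$ by a contraction-mapping argument, exploiting the fact that $\tilde P = -\xi^{-1}I$ gives the $Z$-direction a uniform contraction rate of order $\xi^{-1}$ while the tangential dynamics stays of order one. In other words, I would treat (\ref{gks2})--(\ref{gks3}) as a singularly perturbed slow--fast system and apply the standard invariant-manifold machinery of \cite{He, Bates, CLu}.

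First I would substitute the ansatz $Z=\xi\phi(Y)$ with $\phi = \tilde K^{(1)}(Y)+W(Y,\xi)$ into (\ref{gks3}). Using $\tilde R=0$, $\tilde f=0$ and $\tilde P=-\xi^{-1}I$, the dominant term $-\xi^{-1}\cdot\xi\tilde K^{(1)}(Y)$ cancels the leading $\tilde K^{(1)}(Y)$ of the RHS, while the substitution $P=\xi^{-1}T$ turns the $PZ$ term in (\ref{gks2}) into the bounded contribution $T\phi$. Hence the restriction $\dot Y = V(Y,W,\xi)$ is $O(1)$ and $C^{1+\delta}$-bounded uniformly for $W$ in a bounded set. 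The invariance condition $\dot Z=\xi D\phi(Y)\cdot\dot Y$ then rewrites as
\[
W \;=\; \xi\,\tilde K^{(2)}(Y,\phi) \;+\; \xi^{2}\tilde K^{(3)}(\phi) \;-\; \xi\,D\phi(Y)\cdot V(Y,W,\xi),
\]
in which every summand on the right-hand side carries an explicit factor of $\xi$.

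Next I would solve this equation by Banach's fixed point theorem on the closed ball
$\mathcal U_\xi=\{W\in C^{1+\delta}(\mathcal B^p(\bar R),\mathbb R^{N-p}):|W|_{C^{1+\delta}}\le C_1\xi^{s}\}$ for some $s\in(0,1)$ and $C_1$ independent of $\xi$. Since $\tilde K^{(1)}$ and $V$ are uniformly bounded in $C^{1+\delta}$ on $\mathcal B^p(\bar R)$, for $W\in\mathcal U_\xi$ each term of the RHS is $O(\xi)$ in the $C^{1+\delta}$ norm; the delicate piece $\xi\,DW\cdot V$ is controlled because $|DW|_{C^\delta}\le C_1\xi^{s}$ while $|V|_{C^0}=O(1)$. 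The same decomposition yields a Lipschitz constant of order $\xi$ for the map on $\mathcal U_\xi$, hence strictly less than one for small $\xi$, so a unique fixed point $W(\cdot,\xi)$ exists and satisfies (\ref{gks7c}). The resulting graph lies inside $\mathcal W_{\bar R,C^{(2)},\xi}$ once $C^{(2)}\ge \sup_{Y\in\mathcal B^p(\bar R)}|\tilde K^{(1)}(Y)|+1$, which holds for any $C^{(2)}>c_0$ and $\xi$ small, establishing local invariance.

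Local attraction of $\tilde{\mathcal M}_p^{(2)}$ then follows from the spectral structure of the linearization: along the transverse $Z$-direction the principal part of the Jacobian of (\ref{gks3}) is $-\xi^{-1}I+O(1)$, giving eigenvalues with real part $\le -c\xi^{-1}$, whereas the tangential dynamics has eigenvalues of size $O(1)$. This spectral gap of size $\sim\xi^{-1}$ drives the standard cone-condition and Lyapunov--Perron arguments of \cite{He,Bates,CLu}, yielding exponential attraction to $\tilde{\mathcal M}_p^{(2)}$ at rate $\sim\xi^{-1}$. The main technical obstacle is running the contraction in $C^{1+\delta}$ rather than $C^0$: one must differentiate the invariance equation in $Y$ and show that the derivative equation obtained for $DW$ inherits the same small-Lipschitz structure, which is routine but requires separately estimating the $C^\delta$ seminorm of $DW\cdot V$ using the product rule and the $\xi^s$ bound on $|DW|_{C^\delta}$.
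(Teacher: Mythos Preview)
Your outline is essentially correct: the slow--fast structure with $\tilde P=-\xi^{-1}I$ is precisely what makes the invariant manifold exist, the ansatz $Z=\xi(\tilde K^{(1)}(Y)+W)$ is the right one, and the cancellation you identify is exactly why the remainder $W$ is $O(\xi^s)$. The paper itself does not supply a proof of this lemma at all---it simply refers to \cite{Sud}---so there is nothing detailed to compare against here. Your argument is in the same spirit as the invariant-manifold machinery the paper invokes elsewhere (Theorem 6.1.7 of \cite{He}, used in the Appendix for Lemma \ref{7.1}).

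One remark on the route you chose. Solving the invariance equation $W=\xi\tilde K^{(2)}(Y,\phi)+\xi^2\tilde K^{(3)}(\phi)-\xi D\phi\cdot V$ directly by a contraction in $C^{1+\delta}$ runs into the usual loss-of-derivative: the term $\xi DW\cdot V$ is only $C^{\delta}$ when $W\in C^{1+\delta}$, so the map does not obviously preserve that space. You flag this as ``the main technical obstacle'' and propose to handle $DW$ by a separate equation, which is fine (this is essentially the fibre-contraction argument), but it is not quite ``routine''. A cleaner alternative, and almost certainly what \cite{Sud} does given the paper's uniform reliance on \cite{He}, is to rescale time $\tau=t/\xi$: then $dY/d\tau=O(\xi)$ and $dZ/d\tau=-Z+O(\xi)$, placing the system squarely in the hypotheses of Henry's Theorem 6.1.7 with spectral gap $1$ and nonlinearities of size $O(\xi)$. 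That Lyapunov--Perron formulation delivers the $C^{1+\delta}$ manifold, the bound \eqref{gks7c}, and the local attraction in one stroke, without the derivative-loss bookkeeping.
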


{\bf Proof} can be found in the paper \cite{Sud}.

The semiflow $S^t$ restricted to $\tilde {\mathcal M}_p$ is defined by the equations
\begin{equation}
\frac{dY}{d\tau}=   \xi S(Y,  \xi),
\label{inert}
\end{equation}
where
$$
    S(Y,  \xi)={ K}^{(1)} (Y) + \xi { K}^{(2)} (Y, \tilde { K}^{(1)}(Y)  +  W(Y, \xi)) +
$$
$$
    + \xi^2 { K}^{(3)}(\tilde { K}^{(1)}(Y)  +  W(Y,\xi)) + { R} Y + { T}\tilde { K}^{(1)}(Y)  + W(Y, \xi))
 +  f.
$$
The estimates for $W$ show that $S$ can be presented as
\begin{equation}
S(Y,  \xi)={ K}^{(1)} (Y)  + { R} Y + { T}\tilde { K}^{(1)}(Y) + f + \tilde S(Y, \xi)
\label{inert2}
\end{equation}
where a small correction $\tilde S(Y, \xi)$ satisfies
\begin{equation}
|\tilde S(Y, \xi)|_{C^1({\mathcal B}^p(R_0))} < c_0\xi^{1/2}.
\label{inert2c}
\end{equation}
In (\ref{inert2})  $ R$ and $f$ are free parameters. The quadratic form  ${ D}(Y)={ K}^{(1)} + { T}\tilde { K}^{(1)}$
 can be also considered as  a free parameter according to $p$- Decomposition Condition.
Therefore, we have proved the following assertion.

\begin{lemma} \label{quadr2} { Let
\begin{equation} \label{QField}
W(Y)= { D}(Y) + { R} Y + f
\end{equation}
be a quadratic vector field on the ball ${\mathcal B}^p(\bar R)$, $\bar R >0$, where
$$
{D}_i(Y)=\sum_{j=1}^p \sum_{l=1}^p  D_{ijl} Y_j Y_l, \quad ({ R} Y)_i=\sum_{j=1}^p R_{ij} Y_j.
$$
Consider system  (\ref{gks2}), (\ref{gks3}). Let $p$- Decomposition Condition hold.
Then for any $\epsilon >0$  and $R_d >0$ the field $W$ can be $\epsilon$ - realized by local  semiflow  defined  by
system (\ref{gks2}), (\ref{gks3}) on $C^{1+\delta}$-smooth locally invariant in
a  neighborhood ${\mathcal  W}_{\bar R, C^{(2)}, \xi}$ of ball ${\mathcal B}^p(\bar R)$ and locally attracting manifold ${\mathcal M}^{(2)}_p$ of dimension 
$p$.  Here parameters $\mathcal P$ are the matrices ${ P}$, $ R$, $ \tilde  P$,
$ \tilde  R$ and the vectors $f, \tilde f$.
}
\end{lemma}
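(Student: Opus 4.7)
My plan is to apply Lemma \ref{quadr1} to produce the invariant manifold $\tilde{\mathcal M}_p^{(2)}$, and then to match the reduced dynamics on that manifold with the prescribed field $W(Y) = D(Y) + RY + f$ by choosing the free parameters $\mathcal{P} = \{P, \tilde P, R, \tilde R, f, \tilde f\}$ appropriately. The constraints (\ref{gks4c})--(\ref{gks5d}) already prescribe the form $\tilde P = -\xi^{-1} I$, $\tilde R = 0$, $\tilde f = 0$, $P = \xi^{-1} T$ with $|T_{ij}| < C_0$; the genuinely free quantities at our disposal are then the matrix $T$, the matrix $R$, and the vector $f$ in the $Y$-block of (\ref{gks2}).

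Once Lemma \ref{quadr1} is in force, the restriction of the semiflow to $\tilde{\mathcal M}_p^{(2)}$ is (after the time rescaling $t \mapsto \xi^{-1} t$ in \eqref{inert}) the equation $dY/dt = S(Y,\xi)$ with
\[
S(Y,\xi) = K^{(1)}(Y) + RY + T\tilde K^{(1)}(Y) + f + \tilde S(Y,\xi), \qquad |\tilde S(\cdot,\xi)|_{C^1} < c_0 \xi^{1/2},
\]
by (\ref{inert2}), (\ref{inert2c}). To match with $W$, I set the linear part $R$ in (\ref{gks2}) equal to the prescribed linear part $R$ of $W$, take the free vector $f$ equal to the prescribed constant term, and choose $T$ so that
\[
K^{(1)}(Y) + T\tilde K^{(1)}(Y) = D(Y).
\]
Writing this in coefficients, for every $i \in I_p$ and every $j,l \in I_p$ I need
\[
\sum_{k \in J_p} T_{ik}\,\tilde K^{(1)}_{kjl} = D_{ijl} - K^{(1)}_{ijl} =: b^{(i)}_{jl}.
\]
This is exactly the linear system \eqref{barK} of the $p$-Decomposition Condition, applied row-by-row with $\chi_k = T_{ik}$ and right-hand side $b^{(i)}_{jl}$, so the condition yields a solution $T$. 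Since $D$, $K^{(1)}$, and $\tilde K^{(1)}$ are fixed finite-dimensional data, the entries $T_{ik}$ are bounded by a constant $C_0 = C_0(D, K)$, so (\ref{gks5c}) is satisfied.

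With this choice, the reduced dynamics on $\tilde{\mathcal M}_p^{(2)}$ takes the form $dY/dt = W(Y) + \tilde S(Y,\xi)$ where $|\tilde S(\cdot,\xi)|_{C^1(\mathcal B^p(\bar R))} < c_0 \xi^{1/2}$. Choosing $\xi$ small enough that $c_0\xi^{1/2} < \epsilon$ gives the $C^1$-closeness required in (\ref{estRVF}). The manifold $\tilde{\mathcal M}_p^{(2)}$ furnished by Lemma \ref{quadr1} is $C^{1+\delta}$-smooth, $p$-dimensional, diffeomorphic to $\mathcal B^p(\bar R)$ via the graph map \eqref{gks6c}, and locally invariant and attracting in the tubular neighborhood $\mathcal W_{\bar R, C^{(2)}, \xi}$, which verifies conditions (\textbf{i})--(\textbf{iii}) of Definition \ref{RealVF}.

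The main obstacle is the solvability of the quadratic matching, and this is precisely what the $p$-Decomposition Condition is designed to handle; the remaining steps (choice of $R$ and $f$, control of the residue $\tilde S$ by shrinking $\xi$, and verification of the defining properties of an $\epsilon$-realization) are essentially book-keeping from Lemma \ref{quadr1}. A minor subtlety is that the time rescaling used to pass from $\xi S$ to $S$ does not affect the geometric realization, since orbital equivalence is preserved under time reparametrization, so the $\epsilon$-realization in the sense of Definition \ref{RealVF} is achieved.
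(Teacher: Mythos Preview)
Your argument is correct and follows essentially the same route as the paper: apply Lemma \ref{quadr1} to obtain the reduced dynamics \eqref{inert2}, then use the $p$-Decomposition Condition row-by-row to solve $K^{(1)}+T\tilde K^{(1)}=D$ for $T$, set $R$ and $f$ equal to the prescribed data, and shrink $\xi$ so that the remainder $\tilde S$ is below $\epsilon$ in $C^1$. Your explicit remark on the time rescaling is a small clarification the paper leaves implicit.
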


 Lemma \ref{quadr2}  and results \cite{Stud} imply the following assertion.  Let us consider the families $\Phi_{2, R_0}$ of quadratic fields $V(X)$ defined 
on the ball  ${\mathcal B}^N(R_0)$ by \eqref{gks} and depending on a parameter ${\mathcal P}$
as follows.   Each field in a family $\Phi_{2, R_0}$ is defined by numbers $p, N$, the matrix $M$, the coefficients $K_{ijl}$, where $i,j,l \in {1,...,N}$,  and the vector $f$.  We consider $p, N$, the matrix $M$ and the vector $f$ as free parameters, i.e., the parameter $\mathcal P$ of our family  is a quadruple $\{p, N, M, f \}$, where $p, N$ runs over the set of all positive integers  ${\mathbb N}_{+}$, 
and $p^2 + p < N$.  For each fixed $N$ the  range of the parameter $M$ is the set of all square $N \times  N$ matrices and the range of  $f$  is  ${\mathbb R}^N$.
For each ${\mathcal P}$ the corresponding coefficients $K_{ijl}$  
are defined uniquely and  satisfy $p$-Decomposition condition.

\begin{proposition} \label{QuadrP} { Consider a family of the semiflows defined by  systems  (\ref{gks}),   where $V \in \Phi_{2, R_0}$. Then that family  enjoys the following property. 
For each integer $n$, each $\epsilon > 0$  and each vector field
$Q$ satisfying (\ref{cond1}) and (\ref{inward}), there exists a value of the parameter
${\mathcal P}={\mathcal P}(Q, \epsilon, n)$ such that
the corresponding system   (\ref{gks}) defines a  semiflow, which $\epsilon$-realizes
the vector field $Q$.
 } 
\end{proposition}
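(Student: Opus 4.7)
The approach is a two-stage realization that combines Lemma \ref{quadr2} with the maximal complexity of the family of quadratic systems already established in \cite{Stud}. Given a field $Q \in C^1({\mathcal B}^n)$ satisfying (\ref{cond1}) and (\ref{inward}) together with an accuracy $\epsilon > 0$, I want to first realize $Q$ up to $\epsilon/2$ by a quadratic vector field $W$ living on a ball ${\mathcal B}^p(\bar R)$ of some higher dimension $p \ge n$, and then realize $W$ itself up to $\epsilon/2$ by the semiflow defined by (\ref{gks2})--(\ref{gks3}). Transitivity of the realization relation (concatenating two $C^{1+\delta}$ embeddings as in Definition \ref{RealVF}) then produces a semiflow $\epsilon$-realizing $Q$.

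For Stage 1 I would invoke the maximal complexity of the quadratic family from \cite{Stud}: given $Q$ and $\epsilon/2$, there exist an integer $p \ge n$, a radius $\bar R$, and coefficients $D_{ijl}$, $R_{ij}$, $f_i$ defining $W(Y) = D(Y) + R Y + f$ on ${\mathcal B}^p(\bar R)$ such that the semiflow of $\dot Y = W(Y)$ has a $C^{1+\delta}$ locally invariant and locally attracting $n$-dimensional manifold on which the reduced dynamics lies within $\epsilon/2$ of $Q$ in the $C^1$-norm. The proof in \cite{Stud} uses the same kind of degree reduction: auxiliary variables $z_\alpha$ with fast relaxation $\dot z_\alpha = -\xi^{-1}(z_\alpha - y^\alpha)$ track monomials of a polynomial approximation of $Q$, so that the slow manifold carries the target dynamics. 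Once this reduction is quoted, the remainder is an application of the existing RVF machinery.

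Stage 2 is an immediate application of Lemma \ref{quadr2}. Choose $N$ large enough that $N > p^2 + p$, so that by the remark following $p$-Decomposition Condition (\ref{barK}) the coefficients $K_{ijl}$ admit a decomposition for which the condition holds. Then select the matrices $P$, $\tilde P$, $\tilde R$ and the vector $\tilde f$ according to (\ref{gks4c})--(\ref{gks5d}), set $R$ and $f$ so that $W$ emerges as the reduced field in (\ref{inert2}), and take $\xi > 0$ small enough that the correction $\tilde S(Y,\xi)$ in (\ref{inert2c}) has $C^1$ norm below $\epsilon/2$. The resulting $p$-dimensional locally attracting manifold $\tilde{\mathcal M}_p^{(2)} \subset {\mathcal B}^N(R_0)$ carries a dynamics that $\epsilon/2$-realizes $W$, and hence $\epsilon$-realizes $Q$ after composing with the Stage 1 embedding.

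The main obstacle is the careful ordering and propagation of error estimates under composition of the two realizations. Because the correction in Stage 2 is only $O(\xi^{1/2})$ in $C^1$, and Stage 1 contributes its own correction whose Lipschitz constant multiplies this error when pushed through the Stage 1 embedding, the parameters must be chosen in the order $Q \to p, \bar R, D, R, f \to N, \xi$, with $\xi$ selected last and small enough that the product of the Stage 1 Lipschitz constant with $O(\xi^{1/2})$ stays below $\epsilon/2$. The target domains must also be nested: the image of ${\mathcal B}^p(\bar R)$ under the Stage 1 embedding must lie inside the tubular neighborhood ${\mathcal W}_{\bar R, C^{(2)}, \xi}$ of Lemma \ref{quadr2}, and the inward-pointing condition (\ref{inward}) for $W$ on $\partial {\mathcal B}^p(\bar R)$ must survive the $O(\xi^{1/2})$ perturbation. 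Both are standard bookkeeping once the parameter hierarchy is fixed.
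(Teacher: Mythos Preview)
Your proposal is correct and matches the paper's approach exactly: the paper itself gives no detailed proof but simply states that Proposition~\ref{QuadrP} follows from Lemma~\ref{quadr2} combined with the results of \cite{Stud}, which is precisely your two-stage realization (Stage~1 invoking \cite{Stud} to realize $Q$ by a free quadratic field $W$ on ${\mathcal B}^p(\bar R)$, Stage~2 invoking Lemma~\ref{quadr2} to realize $W$ inside system~(\ref{gks})). Your additional bookkeeping on parameter ordering and error composition fills in details the paper leaves implicit; the only minor imprecision is that in the family $\Phi_{2,R_0}$ the coefficients $K_{ijl}$ are assumed to satisfy the $p$-Decomposition Condition by definition, so you need not appeal to genericity for $N>p^2+p$.
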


\subsection{ Verification of $p$-Decomposition condition for system (\ref{maineq1})}
\label{sec: 10.2}

Let us fix a $p$. To verify $p$-Decomposition condition for system
(\ref{gks}) we choose the set ${\mathcal K}_N$ from Prop. {6.2}  in a special way.  Namely, we set
$ {\mathcal K}_N={\mathcal K}_{p, N}$, where the set ${\mathcal K}_{p, N}$ is defined below as follows.

Let us denote by $P_{2,p} $ the set of non-ordered pairs $(i,j)$, where  $i, j \in \{1,..., p\}$. The equality $(i,j)=(i', j')$ means that
either  $i=i', j=j'$  or $i=j', j=i'$. Let ${\mathcal S}_p$ be the set consisting of all sums $k_i +k_j$, where $i, j \in \{1,..., p\}$.
Consider the map $S_p: P_{2,p} \to {\mathcal S}_p $ from the set 
$P_{2,p}$
on the set ${\mathcal S}_p$  defined by $S_p((i,j))= k_i +k_j$. Let us prove an auxiliary lemma.

\begin{lemma} \label{8.3}
{ For each $p>1$ there exists a set $\bar {\mathcal K}_p =\{\bar k_1,..., \bar k_p\}$ of  integers $\bar k_i>0$ such that 
\begin{equation} \label{5555}
  \bar k_j \ne 5n,  \quad \forall n \in {\mathbb N}
\end{equation}
and 
all the sums $\bar  k_i + \bar k_j$ are mutually distinct, i.e.,
\begin{equation} \label{sum2}
\bar  k_i + \bar  k_j=\bar  k_{i'} + \bar  k_{j'}    \implies (i, j)=(i', j'). 
\end{equation}
 In the other words,  the map $S_p$ is injective.   
}
\end{lemma}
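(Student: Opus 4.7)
My plan is to give an explicit construction rather than pursue a greedy existence argument, since an explicit example makes both requirements entirely transparent. I propose taking $\bar k_j = 2^j$ for $j = 1, \ldots, p$. This is the standard Sidon-type construction based on uniqueness of binary expansions, and it happens to avoid multiples of $5$ automatically.

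For condition \eqref{5555}, I would observe that $\gcd(2,5)=1$, so the residues $2^j \bmod 5$ cycle through $\{2,4,3,1\}$ and never vanish; in particular no $\bar k_j$ is a multiple of $5$.

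For condition \eqref{sum2}, the key is uniqueness of the binary expansion. For an unordered pair $(i,j)$ with $i<j$, the sum $\bar k_i + \bar k_j = 2^i + 2^j$ has exactly two $1$'s in its binary representation, located at positions $i$ and $j$, so the unordered pair is recovered from the sum. For $i=j$, the sum equals $2^{i+1}$ and has exactly one $1$. The two cases are mutually exclusive (one versus two nonzero binary digits), and within each case the underlying indices are recovered uniquely. Consequently $S_p$ is injective, which is exactly \eqref{sum2}.

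I do not anticipate any real obstacle: both conditions decouple cleanly on the geometric progression $\{2^j\}$. A fallback would be a greedy construction, in which one chooses $\bar k_{m+1}$ so as to avoid the finitely many values producing a sum collision with the previously chosen $\bar k_i$'s, together with the arithmetic progression $5\mathbb{N}$; since the forbidden set has density strictly less than $1$ among the positive integers, such a choice exists at every step. However, the explicit formula $\bar k_j = 2^j$ is both shorter and concrete, so I would adopt it.
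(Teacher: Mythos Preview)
Your proof is correct and takes a genuinely different route from the paper. The paper argues by induction: it starts with $\bar k_1=1$, $\bar k_2=7$, and at each step picks a new integer $\bar k_{p+1}$ exceeding every previous pairwise sum $\bar k_{j_1}+\bar k_{j_2}$, so that any equality in \eqref{sum2} involving the new index forces the other pair to contain it as well. This is a greedy existence argument, and in fact the paper's write-up is looser than yours: it only stipulates that $\bar k_{p+1}$ be odd, which does not by itself rule out multiples of $5$, so condition \eqref{5555} is left implicit there. Your explicit choice $\bar k_j=2^j$ handles both requirements at once: uniqueness of binary expansion gives the Sidon property \eqref{sum2} (with the diagonal case $i=j$ cleanly separated by the number of nonzero binary digits), and $\gcd(2,5)=1$ disposes of \eqref{5555}. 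The advantage of your construction is concreteness and brevity; the paper's greedy scheme, on the other hand, is more flexible if one later needed to impose further arithmetic constraints on the $\bar k_j$ beyond avoiding a single residue class.
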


\begin{proof}
The set $\bar {\mathcal K}_p=\{\bar  k_1,  \bar  k_2, ..., \bar  k_p\}$  can be found by an induction.
For $p=2$ we set $\bar  k_1=1, \bar  k_2=7$. Suppose   $\bar {\mathcal K}_p$ is found for some $p$. Then we  take an odd $k_{p+1}$  such that
$\bar  k_{p+1} >  \bar  k_{j_1} + \bar  k_{j_2}$ for all $j_1, j_2  \in \{1, ..., p\}$.
Then the extended set  $\bar { \mathcal K}_{p+1}=\{\bar  k_1, ..., \bar  k_p, \bar  k_{p+1}\}$ satisfies conditions
(\ref{sum2}). 

To show it, consider two sums from (\ref{sum2}).   If  the pairs  $(i, j)$ and $(i', j')$
do not include the index $p+1$ then $(i, j)=(i', j')$ by the induction assumption.  If
one of $i, j$ equals $p+1$ but the pair $(i', j')$ does not include $p+1$,  equality (\ref{sum2}) is not fulfilled that follows from the construction of $k_{p+1}$.  
Therefore, the both pairs $(i, j)$ and $(i', j')$  contains $p+1$.  Then we can exclude this index, that
gives either $i=i'$ or $i=j'$. \end{proof}

For all $p$ we define the sets ${\mathcal K}_{p, N}=\{ k_1,   k_2, ..., k_N\}$, where $N=p(p+1)/2$, as follows. Let us take  the set $\bar {\mathcal K}_{p}=\{\bar k_1,...,\bar k_p \}$  satisfying the conclusion of lemma \ref{8.3}.  For $ i \le p$ we take $ k_i=\bar  k_i$. The integers $k_i \in {\mathcal K}_{p, N}$ with $i >p$ we define as different sums from $\bar {\mathcal K}_{p}$, i.e., $ k_i=\bar  k_{i_1} + \bar  k_{i_2}$ for some $ i_1, i_2 \in \{1,..., p\}$.

 Let us set
$Y_l= X_{l}$,  $l=1,...,p$.
Respectively, all the rest variables  $X_j$ with $j > p$ will be $Z_l$, where $l=j-p$.
 Then, due to relation \eqref{KGBfollowsyou}, to verify $p$-Decomposition condition \ref{ass40}, 
it is sufficient to  check that the linear system 
\begin{equation} \label{eqG}
\sum_{i=p+1}^{N} K_{ijl} \chi_i  =b_{jl}, \quad j, l \in I_p=\{1,...,p\}
\end{equation}
 has a solution for any given $b_{jl}$.

To verify it, let us calculate
the coefficients $K_{ijl}$ defined by  (\ref{G+}).
 Integrating by parts one has
\begin{equation}
   K_{ijl}=-\langle \{\psi_j,  \theta_i^* \}, \theta_l \rangle + O(\nu^{-1}).
\label{GM11}
\end{equation}
Thus  by definition (\ref{MatrM}) 
\begin{equation}
   K_{ijl}=-M_{ij}( \theta_l(\cdot, \cdot))  + O(\nu^{-1}).
\label{GM12}
\end{equation}
Using that relation and (\ref{M++})
one obtains
\begin{equation}
   K_{ijl}=\frac{1}{4}\big(
       \delta_{k_{i}, k_l+ k_j}  I_{ijl}
    + O(b^{-c_1})\big), \quad c_1 >0,
\label{GM++2}
\end{equation}
and
$$
  I_{ijl} =\int_0^h  \tilde \zeta_{ij}(y)  \Theta_{l}(y) dy,   \quad p+1 \le i \le N, \  j, l=1,..., p.
$$
Relation \eqref{GM++2} means that for each fixed pair $(j,l)$ the sum in the left-hand side of system \eqref{eqG} consists of a single term with the index $i$ defined  by 
$k_i=\bar k_j +  \bar k_l$. Due to Lemma  \ref{8.3}   system  \eqref{eqG} can be decomposed in independent linear equations, each of them involves only a
single unknown $\chi_i$.
Thus system \eqref{eqG} is resolvable under condition that all coefficients $ I_{ijl}$ 
are not equal $0$. To compute those coefficients  we take into account relation
 (\ref{eta2}) for $\tilde \xi_{ij}$.  For large  $b$ we find that
\begin{equation}
  I_{ij l}=\bar a_i \bar b_j \beta_l \Big(J(\bar k_j, \bar k_l) +  O(b^{-r_2}) \Big) \delta_{k_i, \bar k_l+ \bar k_j}, \quad r_2 >0,
 \label{Iij}
\end{equation}
where $\bar a_i, \bar b_j,  \beta_l \ne 0$ and 
$$
J(\bar k_j, \bar k_l) =2(2 (\bar k_l+ \bar k_j))^{-3} (k_j - 5k_l).
$$
We note  \eqref{5555} implies that $J(\bar k_j, \bar k_l)  \ne 0$ for all integers $j,l=1,..., p$. 
Thus for each integer $p$ 
we can solve  system (\ref{eqG}) and the p-Decomposition condition is  fulfilled.

\section{Proof of Theorems}
\label{sec: 11}

\begin{proof} of Theorem \ref{maint}.
 Let $\epsilon>0$. We suppose  that a vector field $Q$ defined on the ball ${\mathcal B}^n$ satisfy  (\ref{cond1}) and (\ref{inward}).
Our goal is to find parameters $\mathcal P$
of IBVP  (\ref{OB1}) -(\ref{Maran1}) 
such that the corresponding family of  semiflows, generated by that IBVP,   
$\epsilon$ -realizes $Q$. 
\vspace{0.2cm}

{\em Step 1}. 
According to Proposition \ref{QuadrP},  for  each $\epsilon_0 > 0$ we can $\epsilon_0$- realize the field $Q$  by a semiflow defined by a quadratic vector field  
$V(X)$ from a family  $\Phi_{2, R_0}$.  The field $V$ is defined 
 on a ball ${\mathcal B}^N \subset {\mathbb R}^N$.

{\em Step 2}.  Consider the family ${\mathcal F}_{OB}$ of global semiflows defined by 
by  IBVP (\ref{OB1}) -(\ref{Maran1}) with parameters
${\mathcal P}=\{h, \nu, \gamma, \beta, \beta_1, u_0, \eta(\cdot,\cdot), g_1( \eta(\cdot,\cdot) \}$.  For any $\epsilon_1>0$ that family $\epsilon_1$-realizes (in the sense of definition \ref{RVFmax})
a family of  semiflows $\Phi_{2, R_0}$ considered at the previous step.  Therefore, if $\epsilon_0, \epsilon_1$ are small enough,
the family ${\mathcal F}_{OB}$   realizes $Q$ with accuracy $\epsilon$.

\end{proof}

\begin{proof}  of of Theorem \ref{maint2}.

Consider a  global semiflow on finite dimensional  smooth compact manifold defined by a $C^1$-smooth vector field and having a  hyperbolic compact invariant set $\Gamma$.  
For an integer $n>0$ we can find a smooth vector field $Q$ on a unit ball ${\mathcal B}^n$, which generates a semiflow having a topologically equivalent hyperbolic compact invariant set $\Gamma'$
(and the corresponding restricted dynamics are orbitally topologically equivalent) .  
Due to the Theorem  on Persistence of Hyperbolic sets (see \cite{Ru, Katok})  
we find a sufficiently small $\epsilon(\Gamma', Q) >0$ such that for all 
$C^1$ perturbations  $\tilde Q$ of $Q$ satisfying $|\tilde Q|_{C^1({\mathcal B}^n)} <\epsilon$ the perturbed systems 
\begin{equation} \label{QQq}
dq/dt=Q(q)+  \tilde Q(q)
\end{equation}
have
hyperbolic compact invariant sets $ \tilde \Gamma$   topologically equivalent  to $\Gamma$ (and the corresponding restricted dynamics are orbitally topologically equivalent). Then we $\epsilon$- realize this field by Theorem \ref{maint}.  

Now, to finish proof, it is sufficient to prove that trajectories defined by system \eqref{QQq}
on ${\mathcal B}^n$ 
do not leave the locally invariant manifold ${\mathcal M}_n$.   By definition of locally invariant manifolds (see Sect. \ref{sec:3}), it suffices to prove that those trajectories do not leave
the corresponding domain $\mathcal W$, where that manifold is locally invariant.  For sufficiently small $\epsilon>0$  the trajectories $q$ are bounded. Indeed, the corresponding perturbed field $Q(q) + \tilde Q(q)$, defined on ${\mathcal M}_{n}$, directed  inward the ball $B^n$ at the boundary ${\partial B}^n$ and thus the corresponding semitrajectories do not leave that ball.  
The corresponding trajectories $z(t)=(X(t), \tilde {\bf v}, w(t))$ of semiflow $S^t$ generated by our IBVP on the manifold 
${\mathcal M}^{(1)}_N$ also are bounded. We choose  radius $R_0$ and the width $C^{(1)}$ such that  the domain ${\mathcal W}_{\gamma, R_0, C^{(1)}, \alpha}$
contains these trajectories $z(t)$. We make an analgous choice for all locally invariant manifolds ${\mathcal M}^{(2)}_p,  {\mathcal M}^{(3)}, ... $ involved in our realization,
adjusting  $R_d$,$\bar R$ and the corresponding width parameters $C^{(2)}$ and $C^{(3)}$. Then $z(t)=(X(t), \tilde {\bf v}, w(t))$ do not leave ${\mathcal M_n}$.  It finishes the proof.

\end{proof}


\section{Conclusion}
\label{sec: 12}

The idea that a complicated behaviour of  dissipative dynamical systems,  associated with
 fundamental models of physics, chemistry and biology can be generated by a strange (chaotic) attractor was pioneered
in the seminal work of D. Ruelle and F. Takens \cite{RT}. In this paper, it is shown that
classical system of hydrodynamics, which appears in many applications,  can exhibit all kinds of structurally stable chaotic behaviour. 
 The mathematical method  admits a  transparent  physical interpretation:   complicated large time dynamics can be produced by  an exponentially decreasing at the no-flip boundary temperature profile and small space inhomogeneous perturbations  of the gravity force and that profile. In this paper,  the complete analytical description of  complex turbulent patterns is given.

\section{Acknowledgements}

I dedicate this paper to  the memory of my friend Vladimir Shelkovich.

This work  was financially supported by Government of Russian Federation, Grant 074-U01.


\section{Appendix}

{\bf Proof of Lemma \ref{7.1}}.  This assertion is a consequence of Theorem 6.1.7  \cite{He}.
In the variables $\hat z=(\hat {\bf  v}, \hat w)^{tr}$, $X$ system (\ref{X2}), (\ref{hom2}), \eqref{hw2} can be rewritten as 
\begin{equation}
X_t= \gamma \hat F(X, \hat z),   
\label{eveq5a}
\end{equation}
\begin{equation}
\hat z_t=L \hat z  +  \hat G(X, \hat z).
\label{eveq6a}
\end{equation}
Using the standard truncation trick we modify  eq. (\ref{eveq5a}) as follows:  
\begin{equation}
X_t= \gamma \hat F(x, \tilde w)\chi_{R_0}(X),    
\label{eveq5at}
\end{equation}
where 
 $\chi_{R_)}$ is a smooth function such that $\chi_{R_0}(X) =1 $ for $|X| < R_0$ and  $\chi_{R_0}(X) =0 $ for $|X| >  2R_0$.
As a result of  that modification, $X$-trajectories of (\ref{eveq5at}) are defined for all $t \in (-\infty, +\infty)$ (as in Theorem 6.1.7 \cite{He}).
Then an invariant   manifold for the semiflow defined by system  (\ref {eveq5at}), \ref{eveq6a}) is a locally invariant one for the semiflow generated by
  (\ref {eveq5a}), \ref{eveq6a}). 

Let us consider the semigroup $\exp(Lt)$.  We have  estimates \eqref{semigroup1}, \eqref{semigroup2},
where $M, \bar M, \rho>0$ do not depend on $\gamma$. 
Moreover,
\begin{equation}
M_0=\gamma \sup_{(X, \hat z) \in {{\mathcal D}_{\gamma, 2R_0, C_1, C_2, \alpha}}} || \hat F \chi_{R_0} || <  c_2\gamma,
\label{Ppr2}
\end{equation}
\begin{equation}
\lambda=\gamma \sup_{(X, \hat z) \in {{\mathcal D}_{\gamma, 2R_0, C_1, C_2, \alpha}}}  ||D_X \hat F \chi_{R_0}|| + ||D_{\hat z}  \hat F \chi_{R_0}|| <  c_3\gamma,
\label{Ppr3}
\end{equation}
\begin{equation}
M_2=\gamma \sup_{(X, \tilde w) \in {{\mathcal D}_{\gamma, 2R_0, C_1, C_2, \alpha}}}  ||D_{\hat z}  \hat G|| <  c_4\gamma,
\label{Ppr4}
\end{equation}
We set   $\mu_0=\kappa/4$. Then for small $\gamma$
\begin{equation}
M_3=\gamma \sup_{(X, \hat z) \in {{\mathcal D}_{\gamma, 2R_0, C_1, C_2, \alpha}} } ||D_{X}  \hat G|| <  c_5\gamma.
\label{Ppr5}
\end{equation}
We set $\delta_1=2\theta_1$, where
\begin{equation}
\theta_p =\lambda M_0 \int_0^{\infty}  u^{-\alpha} \exp(-(\kappa - p\mu') u) du, \quad 1 \le p \le 1+\delta,
\label{Int1}
\end{equation}
and
$\mu'= \mu_0 + \delta_1 M_2$. For sufficiently small $\gamma$ one has $\mu' <\rho/2$, therefore, the integral in the right hand side
of (\ref{Int1}) converges and, according to (\ref{Ppr4}), one obtains $\theta < c_6 \gamma$ (since $M$ is independent of $\gamma$).
We notice then that for sufficiently small $\gamma$ the following estimates
$$
(1+\delta) \mu' < \rho/2,
$$
$$
\theta_1 < \delta_1(1 + \delta_1)^{-1} < 1, \quad  \theta_1(1+\delta_1)M_2{\mu'}^{-1} < 1,
$$
and
$$
\theta_p(1 +  \frac{(1+\delta_1)M_2}{r \mu'}) < 1
$$
hold.
Those estimates show that all conditions of Theorem 6.1.7  \cite{He} are satisfied, and Lemma \ref{7.1} is proved.

\end{document}